\DeclareMathAlphabet{\mathbbold}{U}{bbold}{m}{n}
\newcommand{\MATLAB}{\textsc{Matlab}\xspace}
\newcommand{\cmb}[1]{{\color{blue}{#1}}}
\newcommand{\mtiny}[1]{{\scalebox{.75}{#1}}}
\newcommand{\smtiny}[1]{{\scalebox{.63}{#1}}}
\newcommand{\stiny}[1]{{\scalebox{.5}{#1}}}
\newcommand{\argmin}{{\mathrm{argmin}}}
\newcommand*{\minOp}{\operatornamewithlimits{min}\limits}
\newcommand*{\sumOp}{\operatornamewithlimits{\sum}\limits}
\newcommand*{\limOp}{\operatornamewithlimits{lim}\limits}
\newcommand*{\limsupOp}{\operatornamewithlimits{limsup}\limits}
\newcommand*{\liminfOp}{\operatornamewithlimits{liminf}\limits}
\newcommand*{\capOp}{\operatornamewithlimits{\text{\scalebox{1.25}{$\cap$}}}\limits}
\newcommand{\tr}{{\smtiny{$\mathsf{T}$ }}\!}
\newcommand{\zero}{\mathbf{0}}
\newcommand{\zeromx}{{\mathbbold{0}}} %<<<<
\newcommand{\eye}{\mathbb{I}}
\newcommand{\zeromx}{{\mathbf{0}}} %<<<<
\newcommand{\eye}{\mathbf{I}}
\newcommand{\rank}{\mathrm{rank}}
\newcommand{\vc}[1]{{ \mathrm{#1} }}
\newcommand{\mx}[1]{{ \mathrm{#1} }}
\newcommand{\diag}{\mathrm{diag}}
\newcommand{\inner}[2]{{ \langle {#1,#2} \rangle}}
\newcommand{\imag}{\mathrm{imag}}%imaginary part
\newcommand{\real}{\mathrm{real}}%real part
\newcommand{\ol}[1]{\overline{#1}}
\newcommand{\ellone}{\ell^{1}}
\newcommand{\ellinfty}{\ell^{\infty}}
\newcommand{\ellp}{\ell^{p}}
\newcommand{\Dscr}{{\mathscr{D}}}
\newcommand{\Fscr}{{\mathscr{F}}}
\newcommand{\Pscr}{{\mathscr{P}}}
\newcommand{\Rscr}{{\mathscr{R}}}
\newcommand{\Tscr}{{\mathscr{T}}}
\newcommand{\Acal}{{\mathcal{A}}}
\newcommand{\Ccal}{{\mathcal{C}}}
\newcommand{\Ecal}{{\mathcal{E}}}
\newcommand{\Hcal}{{\mathcal{H}}}
\newcommand{\Ical}{{\mathcal{I}}}
\newcommand{\Jcal}{{\mathcal{J}}}
\newcommand{\Qcal}{{\mathcal{Q}}}
\newcommand{\Scal}{{\mathcal{S}}}
\newcommand{\Vcal}{{\mathcal{V}}}
\newcommand{\Xcal}{{\mathcal{X}}}
\newcommand{\Nbb}{{\mathbb{N}}}
\newcommand{\Rbb}{{\mathbb{R}}}
\newcommand{\Tbb}{{\mathbb{T}}}
\newcommand{\Zbb}{{\mathbb{Z}}}
\newcommand{\bbk}{\mathbbm{k}}
\newcommand{\RR}{\mathbb{R}}
\newtheorem{theorem}{Theorem}
\newtheorem*{theorem*}{Theorem}
\newtheorem{definition}{Definition}
\newtheorem*{definition*}{Definition}
\newtheorem{assumption}{Assumption}
\newtheorem{corollary}[theorem]{Corollary}
\newtheorem{remark}{Remark}
\newtheorem*{example*}{Example}
\newtheorem*{claim*}{Claim}
\newtheorem*{problem*}{Problem}
\newlist{todolist}{itemize}{2}
\setlist[todolist]{label=$\square$}
\newcommand{\Hankel}{\mathrm{Hankel}}
\newcommand{\gS}{\vc{g}^{\stiny{$(\Scal)$}}}
\newcommand{\gtS}{{g}^{\stiny{$(\Scal)$}}}
\newcommand{\GS}{G^{\stiny{$(\Scal)$}} }
\newcommand{\geps}{\vc{g}^{\smtiny{($\varepsilon$)}}}
\newcommand{\gteps}{g^{\smtiny{($\varepsilon$)}}}
\newcommand{\Geps}{G^{\smtiny{($\varepsilon$)}}}
\newcommand{\FS}{F^{\stiny{$(\Scal)$}} }
\newcommand{\HS}{H^{\stiny{$(\Scal)$}} }
\newcommand{\NS}{N^{\stiny{$(\Scal)$}} }
\newcommand{\vca}{\vc{a}}
\newcommand{\vcb}{\vc{b}}
\newcommand{\vcc}{\vc{c}}
\newcommand{\vcf}{\vc{f}}
\newcommand{\vcg}{\vc{g}}
\newcommand{\vch}{\vc{h}}
\newcommand{\vcu}{\vc{u}}
\newcommand{\vcv}{\vc{v}}
\newcommand{\vcx}{\vc{x}}
\newcommand{\vcy}{\vc{y}}
\newcommand{\vcw}{\vc{w}}
\newcommand{\mxA}{\mx{A}}
\newcommand{\mxB}{\mx{B}}
\newcommand{\mxBr}{\mx{B}^{\supscrpsm{\mathrm{r}}}}
\newcommand{\mxBi}{\mx{B}^{\supscrpsm{\mathrm{i}}}}
\newcommand{\mxC}{\mx{C}}
\newcommand{\mxD}{\mx{D}}
\newcommand{\mxE}{\mx{E}}
\newcommand{\mxK}{\mx{K}}
\newcommand{\mxL}{\mx{L}}
\newcommand{\mxO}{\mx{O}}
\newcommand{\mxQ}{\mx{Q}}
\newcommand{\mxT}{\mx{T}}
\newcommand{\mxV}{\mx{V}}
\newcommand{\mxVr}{\mx{V}^{\supscrpsm{\mathrm{r}}}}
\newcommand{\mxVi}{\mx{V}^{\supscrpsm{\mathrm{i}}}}
\newcommand{\nD}{n_{\stiny{$\!\Dscr$}}}
\newcommand{\nG}{n_{\vc{g}}}
\newcommand{\kernel}{\bbk}
\newcommand{\Hk}{\Hcal_{\kernel}}
\newcommand{\Vk}{\Vcal_{\kernel}}
\newcommand{\TC}{\text{\mtiny{$\mathrm{TC}$}}}
\newcommand{\DC}{\text{\mtiny{$\mathrm{DC}$}}}
\renewcommand{\SS}{\text{\mtiny{$\mathrm{SS}$}}}
\newcommand{\kernelTC}{\kernel_{\TC}}
\newcommand{\kernelDC}{\kernel_{\DC}}
\newcommand{\kernelSS}{\kernel_{\SS}}
\newcommand{\Lu}[1]{\mx{L}^{\!\vc{u}}_{#1}}
\newcommand{\nx}{{n_{\mathrm{x}}}}
\newcommand{\Mx}{{m_{\mathrm{x}}}}
\newcommand{\mt}{{\smtiny{$\mathrm{(}m\mathrm{)}$}}}
\newcommand{\mbt}{{\smtiny{$\mathrm{(}\ol{m}\mathrm{)}$}}}
\newcommand{\supscrpsm}[1]{{\smtiny{$\mathrm{(}#1\mathrm{)}$}}}
\newcommand{\Pscra}{\Pscr_{\!{{\alpha}}}}
\newcommand{\Pscran}{\Pscr_{\!{{\alpha}},n}}
\newcommand{\Pscrna}{\Pscr_{\!{{\alpha}}}^{\supscrpsm{n}}}
\newcommand{\Fscrna}{\Fscr_{\!{{\alpha}}}^{\supscrpsm{n}}}
\newcommand{\Fscran}{\Fscr_{\!{{\alpha}},n}}
\newcommand{\PscrI}{\Pscr_{\!(0,1)}}
\newcommand{\PscrIn}{\Pscr_{\!(0,1),n}}
\newcommand{\PscrnI}{\Pscr_{\!(0,1)}^{\supscrpsm{n}}}
\newcommand{\expe}{\mathrm{e}}
\newcommand{\Jimage}{\mathrm{j}}
\newcommand{\ar}{a^{\supscrpsm{\mathrm{r}}}}
\newcommand{\ai}{a^{\supscrpsm{\mathrm{i}}}}
\newcommand{\vcar}{\vca^{\supscrpsm{\mathrm{r}}}}
\newcommand{\vcai}{\vca^{\supscrpsm{\mathrm{i}}}}
\newcommand{\vcfark}{\vcf_{{{\alpha}},k}^{\supscrpsm{\mathrm{r}}}}
\newcommand{\vcfaik}{\vcf_{{{\alpha}},k}^{\supscrpsm{\mathrm{i}}}}
\newcommand{\vcfaj}{\vcf_{{{\alpha}},j}}
\newcommand{\rhozero}{\rho_{\stiny{$\mathrm{d}$}}}
\title{\LARGE Regularized Identification with Internal Positivity Side-Information}
\author[$\dagger$ ]{Mohammad Khosravi}%
\author[$\dagger$ ]{Roy S. Smith}
\affil[$\dagger$]{Automatic Control Laboratory, ETH Z\"urich 
\authorcr
\texttt{\{khosravm,rsmith\}@control.ee.ethz.ch}
}
\begin{document}
\bstctlcite{IEEEexample:BSTcontrol}
\maketitle
%\pagestyle{empty}
%________________________________________
\begin{abstract}
In this paper, we present an impulse response identification scheme that incorporates the internal positivity side-information of the system. The realization theory of positive systems establishes specific criteria for the existence of a positive realization for a given transfer function. These transfer function criteria are translated to a set of suitable conditions on the shape and structure of the impulse responses of positive systems. Utilizing these conditions, the impulse response estimation problem is formulated as a constrained optimization in a reproducing kernel Hilbert space equipped with a stable kernel, and suitable constraints are imposed to encode the internal positivity side-information. The optimization problem is infinite-dimensional with an infinite number of constraints. An equivalent finite-dimensional convex optimization in the form of a convex quadratic program is derived. The resulting equivalent reformulation makes the proposed approach suitable for numerical simulation and practical implementation. A Monte Carlo numerical experiment 
evaluates the impact of incorporating the internal positivity side-information in the proposed identification scheme. The effectiveness of the proposed method is demonstrated using data from a heating system experiment.	
\end{abstract}

\section{Introduction}\label{sec:int}
In various dynamical systems, the characteristic variables are constrained to be non-negative or bounded, either by the nature of their definition or according to the physics of the underlying system. For instance, charges in RC-circuits, temperatures and thermal energies in buildings, mass flows in compartmental systems, the population of certain species of animals or bacteria, the concentration of pathogens, level of traffic and congestion in networks and roads, prices of stocks and goods, the pressure of fluids, and many other quantities of interest are always non-negative \cite{haddad2010nonnegative, brown1980compartmental,shorten2006positive,krause2015positive}.
In the broad sense, a system described only by such non-negative variables is called a \emph{positive system} \cite{farina2011positive}.
Depending on our perspective, i.e., whether the positivity feature is considered as an input-output property or a state-space characteristic, we have two central notions of positivity in the system theory literature \cite{farina2011positive}: internal positivity and external positivity, where the main focus of the literature is on the former one. In externally positive systems, the non-negativity of the input signal implies the same feature for the output signal. Meanwhile, in internally positive systems, the state trajectory and output signal are non-negative when the initial state and input signal are non-negative.

Positive systems have received extensive attention in the past decades owing to being omnipresent in various fields of science and their wide range of applications \cite{rantzer2018tutorial,benvenuti2004tutorial}. Luenberger pioneered the system theoretic approach to positive systems with his seminal work \cite{luenberger1979introduction} in the 1980s. Since then various subjects of system theory aspects have been tackled, e.g., realization theory \cite{benvenuti2004tutorial}, controllability and reachability \cite{valcher2009reachability}, observability and observer design \cite{back2008design}, robust stability  \cite{hinrichsen2007robust, colombino2015convex}, positive stabilization \cite{roszak2009necessary, shafai1997explicit}, fault detection and estimation \cite{oghbaee2018complete}, decentralized and distributed control \cite{dhingra2018structured, ebihara2012decentralized}, and, large scale positive systems and scalable control \cite{ebihara2013stability,  rantzer2021scalable}.

Concerning the identification problem, there are two aspects when the underlying system is known to be internally positive. First, respecting the positivity property can be an essential issue in some applications, such as implementing model predictive control. Accordingly, any accurate mathematical modeling approach is expected to include this feature and construct an internally positive system. The second aspect is informed system identification \cite{ahmadi2020learning,khosravi2019positive,khosravi2021ROA,khosravi2021grad} and concerns utilizing the internal positivity side-information and integrating features of this knowledge in the model to improve the estimation accuracy. Indeed, disregarding the positivity information can lead to models, which are not physically interpretable and explainable, or behaviors in contradiction with our expectations \cite{umenberger2016scalable,rantzer2018tutorial}. While positive systems have been extensively researched from various viewpoints \cite{rantzer2021scalable}, their identification problem is not well studied, especially with regard to these aspects. For instance, a set of conditions is introduced in \cite{benvenuti2002model} for identifying compartmental models, which is a particular case of internal positivity feature. In \cite{de2002identification}, assuming the output sequence of data is a Poisson process, a maximum likelihood approach is presented for third-order positive systems with distinct real poles. In \cite{umenberger2016scalable}, a particular situation is considered, where state variable measurements are provided, in addition to input-output data, and the stability and scalability issues are discussed. Since internally positive systems are also externally positive, side-information on internal positivity implies external positivity. Accordingly, from the perspective of informed system identification, one can consider external positivity as partial information to be integrated into the model. To this end, one may employ the external positive system identification methods \cite{grussler2017identification,zheng2021bayesian}. For example, since the externally positive systems are precisely  those with the  non-negative impulse response, a kernel-based nonparametric maximum a posteriori approach is introduced in \cite{zheng2018positive,zheng2021bayesian} for estimating a non-negative finite impulse response (FIR). In this approach, the covariance of the prior distribution is specified by the stable kernels, while the mean is designed arbitrarily as an exponentially decaying FIR. Note that in the FIR identification approaches with an external positivity constraint \cite{grussler2017identification,zheng2018positive,zheng2021bayesian}, the complete information of internal positivity is not exploited. The kernel-based methods \cite{pillonetto2010new}, which resolve the issues of bias-variance trade-off, robustness, and model order selection \cite{ljung2020shift,khosravi2021robust,pillonetto2014kernel,khosravi2020regularized}, also provide a suitable framework for the integration of various sorts of side-information into the model   \cite{fujimoto2017extension, chen2012estimation,khosravi2020low, darwish2018quest, chen2018kernel, marconato2016filter, khosravi2021SSG, risuleo2017nonparametric, risuleo2019bayesian, everitt2018empirical, khosravi2021FDI}. These include the stability of the system, the smoothness of the impulse response, time constants and resonant frequencies \cite{chen2018kernel}. Together with the realization theory of positive systems \cite{farina2011positive}, the kernel-based framework can provide a suitable foundation for impulse response identification of positive systems.

This paper extends our previous work \cite{khosravi2019positive} and presents an identification method that integrates the internal positivity side-information in the estimated impulse response. From the realization theory of positive systems \cite{farina2011positive}, we know that the impulse response of an internally positive system has a specific form, i.e., it has a dominant non-negative part, where the corresponding transfer function has structured poles, and a residual part. This specific form can be translated to a set of structural constraints on the impulse response. Accordingly, the estimation problem is expressed in the form of a constrained optimization in a stable reproducing kernel Hilbert space, where suitable constraints are imposed to encode the internal positivity side-information. Though this problem is initially formulated in an infinite-dimensional space and with an infinite number of constraints, we derive an equivalent finite-dimensional convex optimization in the form of a convex quadratic program. We evaluate the impact of incorporating the internal positivity side-information and assess the performance of the proposed identification scheme through a Monte Carlo numerical experiment. The efficacy of the proposed positive system identification technique is confirmed using data from a thermal dynamics experiment.

\section{Notations}\label{sec:notations}
In this paper, the set of natural numbers, the set of integers, the set of non-negative integers, the set of real numbers, the set of non-negative real numbers, $n$-dimensional Euclidean space, and the set of $n$ by $m$ matrices are respectively shown by $\Nbb$, $\Zbb$, $\Zbb_+$, $\Rbb$, $\Rbb_+$, $\Rbb^n$, and $\Rbb^{n\times m}$.
The positive orthant of $\Rbb^n$ is denoted by $\Rbb_+^n$.
The identity matrix and zero matrix are denoted by  $\eye$ and  $\zeromx_{n}$, respectively. 
Also, the $n$-dimensional zero vector and the all-ones vector are denoted by $0_n$ and $1_n$, respectively. When the dimension is clear from the context, we drop the subscript. 
For $p\in[1,\infty)$, the $p$-norm of vector $\vch = (h_s)_{s=0}^{\infty}\in\Rbb^{\Zbb_+}$  is defined as $\|\vch\|_p = (\sum_{s\ge 0}|h_s|^p)^{\frac{1}{p}}$,
and the $\infty$-norm of $\vch$ is defined as $\|\vch\|_{\infty} = \sup_{s\ge 0}|h_s|$. The space of vectors $\vch$ with finite $p$-norm is denoted by $\ellp$.
With respect to each $\vcu =(u_s)_{s=0}^{\infty}\in\ellinfty$ and $t\in\Zbb$, the  linear map $\Lu{t}:\ellone\to\Rbb$ is defined as  $\Lu{t}(\vcg) = \sum_{s=0}^{\infty}g_s u_{t-s}$, for any $\vcg=(g_s)_{s=0}^{\infty} \in\ellone$.
Given a subset $\Ccal\subset\Xcal$, the function $\delta_{\Ccal}:\Xcal\to\{0,+\infty\}$  is defined as $\delta_{\Ccal}(x) = 0$, if $x\in\Ccal$ and  $\delta_{\Ccal}(x) = \infty$, otherwise.
The set of polynomials in $x$ with  maximum degree $n$ and real coefficients is denoted by $\Rbb_n[x]$.
For transfer function $G$, $r(G)$ denotes its spectral radius.

\section{System Identification with Internal Positivity Side-Information}
\label{sec:pf}
Let $\gS := (\gtS_t)_{t=0}^\infty$ be the impulse response of stable and causal system $\Scal$ and $\GS(z):=\sum_{t=0}^\infty \gtS_tz^{-t}$ be the corresponding transfer function.
We call impulse response $\gS$, or equivalently the system $\Scal$, \emph{internally positive} if there exists a realization such that the state trajectory and the output remains non-negative given that the initial state and the input are non-negative (see Definition \ref{def:in_pos}). 
Suppose a bounded signal $\vcu\in\ellinfty$ is applied to the input of system $\Scal$. Let $y_t$ denote the measured output at time instant $t\in\Tscr$, where $\Tscr:=\{t_i\ \!|\! \ i=0,\ldots,\nD\!-\!1\}$, for a given $\nD\in\Nbb$.
In other words, we have 
\begin{equation}\label{eqn:output_sys_S}
	y_t := \Lu{t}(\gS)+w_t, \qquad t\in\Tscr,
\end{equation}
where $w_t$  denotes the uncertainty in the output measured at time instant $t$, for $t\in\Tscr$.
Accordingly, we have a set of input-output measurement data denoted be $\Dscr$.
Based on the introduced setting, we introduce the following impulse response identification problem.
\begin{problem*}%[ID]
Using data $\Dscr$, estimate the impulse response of $\gS$, given the side-information that $\gS$ is internally positive.	
\end{problem*}

In addressing this problem, the main concern is the appropriate integration of the available internal positivity side-information into the impulse response identification problem. To this end, we need to exploit suitable conditions inducing the desired positivity feature. In the next section, these conditions are discussed, and the estimation problem is formulated accordingly.

%\section{Foundations From Positive Realization Theory}
\section{Mathematical Formulation for Internally Positive System Identification}
\label{sec:positive_realization}
In the realization theory of positive systems, 
sufficient conditions are introduced under which the transfer function of a system admits a so-called positive realization.
We employ these conditions together with the notion of stable reproducing kernel Hilbert spaces (RKHS) for bridging to the impulse response identification of positive systems. 

The following definition introduces the external positivity  notion for impulse response $\gS$, or equivalently for system $\Scal$.

\begin{definition}[\cite{farina2011positive}]\label{def:ex_pos}
	The impulse response $\gS$ is said to be {\em externally positive} if for each  $\tau\in\Zbb$ and for any input signal $\vcu=(u_t)_{t\in\Zbb}$, % of system $\Scal$, 
	we know that the (noiseless) output of the system %, $y_t=\Lu{t}(\gS)$, 
	is non-negative  for all $t\le \tau$, when we have $u_t\ge 0$, for each $t\le \tau$.
	The set of externally positive impulse responses
	is denoted by $\overline{\Pscr}$.
\end{definition}

Since the condition introduced in Definition \ref{def:ex_pos} is imposed for all $\tau\in\Zbb$, one can see that externally positivity is equivalent to the non-negativity of input signals implying the non-negativity of the output signal of the system, 
i.e., if $u_t\ge 0$, for all $t\in\Zbb$, then  $y_t=\Lu{t}(\gS)\ge 0$, for each $t\in\Zbb$.
The next theorem introduces a necessary and sufficient condition for external positivity.
\begin{theorem}[\cite{farina2011positive}]\label{thm:ex_pos}
	The impulse response $\gS$ is externally positive if and only if
	$g_t\ge 0$, for any $t\in\Zbb_+$.	
\end{theorem}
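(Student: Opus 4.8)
The plan is to prove the two implications separately; both are elementary once the definition of external positivity is combined with the convolution structure of $\Lu{t}$.

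For the ``if'' direction, suppose $\gtS_t \geq 0$ for every $t \in \Zbb_+$. Fix $\tau \in \Zbb$ and any input $\vcu = (u_t)_{t\in\Zbb}$ with $u_t \geq 0$ for all $t \leq \tau$. For each $t \leq \tau$, the noiseless output is $\Lu{t}(\gS) = \sum_{s=0}^{\infty} \gtS_s u_{t-s}$, and every index $t-s$ occurring in this sum satisfies $t-s \leq t \leq \tau$, hence $u_{t-s} \geq 0$; since also $\gtS_s \geq 0$, the series has non-negative terms and is therefore non-negative, its absolute convergence being guaranteed by $\gS \in \ellone$ (stability of $\Scal$) and $\vcu \in \ellinfty$. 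Thus $\gS \in \overline{\Pscr}$.

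For the ``only if'' direction, I would isolate each coefficient with a single test input, namely the Kronecker impulse $\vcu^{\star}$ given by $u^{\star}_0 = 1$ and $u^{\star}_s = 0$ for $s \neq 0$. This input satisfies $u^{\star}_s \geq 0$ for all $s \in \Zbb$, hence is admissible in Definition \ref{def:ex_pos} for every choice of $\tau$. Its noiseless output at time $t$ is $\Lu{t}(\gS) = \sum_{s=0}^{\infty} \gtS_s u^{\star}_{t-s}$, in which the only nonzero term is the one with $t-s = 0$, so the output equals $\gtS_t$ when $t \geq 0$ and $0$ when $t < 0$. Applying external positivity with $\tau = t$ for an arbitrary $t \in \Zbb_+$, this output is non-negative, so $\gtS_t \geq 0$ for every $t \in \Zbb_+$.

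I do not anticipate a real obstacle here; the only points needing a line of care are (i) checking that $\Lu{t}(\gS)$ is well defined for the inputs used, which is immediate from absolute summability of $\gS$ and boundedness of the inputs, and (ii) noting that the universal quantifier over $\tau \in \Zbb$ in Definition \ref{def:ex_pos} causes no difficulty, since the impulse input is non-negative on all of $\Zbb$ and is thus simultaneously admissible for every $\tau$.
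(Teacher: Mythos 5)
Your proof is correct: the ``if'' direction follows from the term-by-term non-negativity of the convolution sum, and the ``only if'' direction from testing with the Kronecker impulse, which is the standard textbook argument. The paper itself states this result as a citation to \cite{farina2011positive} and gives no proof, so there is nothing to compare against; your argument matches the classical one, and your two points of care (well-definedness of $\Lu{t}(\gS)$ for bounded inputs, and admissibility of the impulse for every $\tau$) are exactly the right details to check.
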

From Theorem \ref{thm:ex_pos}, one can see that the external positive stable impulse responses are exactly the non-negative ones, i.e., we have
$\overline{\Pscr} = \ellone \cap \Rbb_+^{\Zbb_+}$. Along with the external positivity notion, we have another notion of positivity 
which is the main concern of the current paper and introduced in the following definition.
 
\begin{definition}[\cite{farina2011positive}]\label{def:in_pos}
The impulse response $\gS$ is said to be {\em internally positive}, or simply {\em positive}, 
if there exists a realization for system $\Scal$ as 
\begin{equation}\label{eqn:S_Abcd}
	\left\{
	\begin{array}{ccl}
		\vcx_{t+1} 
		\!\!&\!\!\!=\!\!\!&\!\! 
		\mxA\vcx_t + \vcb u_t,\\
		y_{t}   
		\!\!&\!\!\!=\!\!\!&\!\! 
		\vcc\vcx_t + d u_t,\\
	\end{array}
	\right.
	\qquad \forall\ t\in\Zbb,
\end{equation}
where $\mxA\in\Rbb^{\nx\times \nx}$, $\vcb\in\Rbb^{\nx}$, $\vcc\in\Rbb^{1\times \nx}$, $d\in\Rbb$ and $\nx\in\Nbb$, 
such that $\vcx_0\in\Rbb_+^{\nx}$ and $u_t\in\Rbb_+$, for all $t \ge 0$, implies that
$\vcx_t\in\Rbb_+^{\nx}$ and $y_t\in\Rbb_+$, for each $t\ge 0$. 
The realization \eqref{eqn:S_Abcd} with the this property is called a \emph{positive realization} for $\gS$, or equivalently, for $\GS$.
Moreover, the set of internally positive impulse responses is denoted by $\Pscr$.
\end{definition}
The internal positivity enforces a specific attribute on $\gS$ according to the \emph{Kronecker's theorem}  given below.
\begin{theorem}[\cite{fuhrmann2011polynomial}]\label{thm:kronecker}
With respect to impulse response $\vcg=(g_t)_{t=0}^{\infty}$, define Hankel operator $\Hankel(\vcg):\ellinfty
\to \ellinfty$ with entrywise representation in the standard basis of $\ellinfty$ as following
\begin{equation}
	\Hankel(\vcg) :=
	\begin{bmatrix}
		g_0 & g_1 & g_2 & \ldots\\
		g_1 & g_2 & \reflectbox{$\ddots$} & \\
		g_2 & \reflectbox{$\ddots$} &  & \\
		\vdots &  &  &\\		
	\end{bmatrix}
	= \begin{bmatrix}g_{i+j-2}\end{bmatrix}_{i,j=1}^{\infty}.
\end{equation}
Then, $G(z):=\sum_{t=0}^{\infty}g_tz^{-t}$ is a rational function if and only if the rank of $\Hankel(\vcg)$ is finite, i.e., we have 
\begin{equation*}
\rank\big(\Hankel(\vcg)\big)=
\dim\Big\{\Hankel(\vcg)\vcv\ \! \big| \ \!  \vcv\in\ellinfty\Big\}<\infty.
\end{equation*}
We call impulse response $\vcg$ \emph{finite Hankel rank} when  the property  above is satisfied.
\end{theorem}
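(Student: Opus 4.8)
The plan is to prove both implications by translating the linear-algebraic condition on $\Hankel(\vcg)$ into a constant-coefficient linear recurrence satisfied by the tail of $\vcg$, which in turn is equivalent to rationality of $G$.

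\emph{From rationality to finite rank.} Suppose $G = p/q$ with $q$ monic of degree $n$. Because $G$ is given by the series $\sum_{t\ge 0} g_t z^{-t}$, it is analytic at $z=\infty$, so $\deg p \le n$; matching the coefficients of the negative powers of $z$ in the identity $q(z)G(z)=p(z)$ then yields a recurrence $g_t = -\sum_{k=1}^n q_{n-k}\, g_{t-k}$ valid for all sufficiently large $t$. Writing $R_i=(g_{i-1},g_i,g_{i+1},\dots)$ for the $i$-th row of $\Hankel(\vcg)$, this recurrence says that each $R_i$ with $i$ large is a fixed linear combination of the previous $n$ rows, so by induction on $i$ \emph{every} row lies in $\linspan\{R_1,\dots,R_{n+1}\}$; hence $\rank(\Hankel(\vcg))\le n+1<\infty$.

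\emph{From finite rank to rationality.} Assume $\rank(\Hankel(\vcg))=n<\infty$. I would first note that $\Hankel(\vcg)$ is symmetric, so its row space and column space coincide; in particular at most $n$ rows are linearly independent, and one can pick the least $m\le n$ such that $R_1,\dots,R_m$ are independent while $R_{m+1}=\sum_{k=1}^m c_k R_k$ for suitable $c_k\in\Rbb$ (such $m$ exists because $R_1,\dots,R_j$ cannot be independent for all $j$). The crux is then to propagate this single dependence: reading it entrywise gives $g_{m+j-1}=\sum_{k=1}^m c_k\, g_{k+j-2}$ for all $j\ge 1$, and the substitution $j\mapsto j+1$ shows that the same relation holds for the shifted rows, i.e. $R_{m+1+i}=\sum_{k=1}^m c_k R_{k+i}$ for every $i\ge 0$ — this is precisely where the Hankel shift-invariance enters. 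Reindexing, $\vcg$ satisfies a constant-coefficient recurrence $g_t=\sum_{\ell=1}^m a_\ell\, g_{t-\ell}$ for all $t\ge m$. Finally I would set $q(z):=z^m-\sum_{\ell=1}^m a_\ell z^{m-\ell}$ and verify by a short coefficient comparison that $q(z)G(z)$ has vanishing coefficients on all negative powers of $z$, hence equals a polynomial $p$ of degree at most $m$; thus $G=p/q$ is rational.

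\emph{Main obstacle.} The delicate step is the converse direction: converting the single row dependence $R_{m+1}=\sum c_k R_k$ into a recurrence valid for \emph{all} large $t$, where the Hankel shift structure is essential and the index bookkeeping must be handled carefully. A minor point to settle first is that the ``rank'' in the statement is defined through the range $\{\Hankel(\vcg)\vcv\mid \vcv\in\ellinfty\}$, which I would reconcile with the ordinary row rank using the symmetry of $\Hankel(\vcg)$ together with the fact that a finite-dimensional space of sequences is closed under coordinatewise limits, so permitting all $\vcv\in\ellinfty$ (rather than only finitely supported ones) does not enlarge the span of the columns.
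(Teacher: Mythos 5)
The paper does not prove this statement at all: it is quoted as Kronecker's classical theorem with a citation to \cite{fuhrmann2011polynomial}, so there is no in-paper proof to compare against. Your argument is a correct, self-contained proof along the standard lines: rationality $\Leftrightarrow$ an eventual constant-coefficient recurrence for $(g_t)$ $\Leftrightarrow$ finite row rank of $\Hankel(\vcg)$, with the shift-propagation step (reading the single dependence $R_{m+1}=\sum_k c_k R_k$ entrywise as a recurrence valid for all $t\ge m$, which then yields all shifted dependences) handled correctly, and with the right observation that the range-based definition of rank in the statement coincides with the column/row span because a finite-dimensional subspace of $\Rbb^{\Zbb_+}$ is closed under coordinatewise limits and $\Hankel(\vcg)$ is symmetric. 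The only points worth making explicit if this were written out in full are the degenerate case $m=0$ (all rows zero, $G\equiv 0$) and the standing assumption $\vcg\in\ellone$ (used implicitly so that $\Hankel(\vcg)\vcv$ is well defined for $\vcv\in\ellinfty$ and so that the series for $G$ converges near $z=\infty$, justifying the coefficient comparison in $q(z)G(z)=p(z)$); neither affects the substance.
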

According to \eqref{eqn:S_Abcd}, we know that $\GS(z)=\vcc(z\eye-\mxA)^{-1}\vcb+d$ is a rational function.
Therefore, due to Theorem \ref{thm:kronecker}, the internal positivity of $\gS$ implies that
\begin{equation}\label{eqn:rank_Hankel_gS}
\rank\big(\Hankel(\gS)\big)<\infty.	
\end{equation}

The realization \eqref{eqn:S_Abcd} needs to have special structure introduced in the
next theorem.

\begin{theorem}[\cite{farina2011positive}]\label{thm:in_pos}
The impulse response $\gS$ is internally positive if and only if there exist a realization as in \eqref{eqn:S_Abcd} such that the entries of $\mxA,\vcb,\vcc$ and $d$ are non-negative. 
\end{theorem}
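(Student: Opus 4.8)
The plan is to prove both directions directly from Definition \ref{def:in_pos}, the key observation being that non-negativity of the state trajectory and output for \emph{all} non-negative initial states and inputs is a coordinate-free condition that translates into sign conditions on the matrices $\mxA,\vcb,\vcc,d$. For the ``if'' direction, suppose we have a realization \eqref{eqn:S_Abcd} with $\mxA\ge\zeromx$ (entrywise), $\vcb\ge\zero$, $\vcc\ge\zero$ and $d\ge 0$. Given $\vcx_0\in\Rbb_+^{\nx}$ and $u_t\in\Rbb_+$ for all $t\ge 0$, I would argue by induction on $t$ that $\vcx_t\in\Rbb_+^{\nx}$: the base case is the hypothesis on $\vcx_0$, and the inductive step follows because $\mxA\vcx_t\in\Rbb_+^{\nx}$ (product of an entrywise non-negative matrix with a non-negative vector) and $\vcb u_t\in\Rbb_+^{\nx}$, so their sum $\vcx_{t+1}$ is non-negative. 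Then $y_t=\vcc\vcx_t+du_t\ge 0$ for the same reason. Hence the realization is positive and $\gS\in\Pscr$.

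For the ``only if'' direction, assume $\gS$ is internally positive, so by Definition \ref{def:in_pos} there is \emph{some} realization $(\mxA,\vcb,\vcc,d)$ with the stated non-negativity-preservation property; the task is to extract sign conditions on the matrix entries. The standard trick is to probe the property with extreme non-negative initial conditions and inputs. Taking $u_t=0$ for all $t\ge 0$ and $\vcx_0=\vc{e}_j$ (the $j$-th standard basis vector, which lies in $\Rbb_+^{\nx}$), positivity forces $\vcx_1=\mxA\vc{e}_j\in\Rbb_+^{\nx}$, i.e.\ the $j$-th column of $\mxA$ is non-negative; ranging over $j$ gives $\mxA\ge\zeromx$. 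With the same choice, $y_0=\vcc\vc{e}_j\ge 0$ gives that the $j$-th entry of $\vcc$ is non-negative, so $\vcc\ge\zero$. Next, taking $\vcx_0=\zero$ and $u_0=1$, $u_t=0$ for $t\ge1$, positivity gives $\vcx_1=\vcb\ge\zero$ and $y_0=d\ge 0$. This yields a realization with all entries non-negative, completing the proof.

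The main subtlety — and where I would be most careful — is that Definition \ref{def:in_pos} only guarantees the \emph{existence} of one positive realization, not that every realization is non-negative, so the ``only if'' argument must work \emph{with that given realization}: one cannot first change coordinates. Fortunately the probing argument above operates entirely within the fixed realization, so this causes no real difficulty; it is only a matter of phrasing. A second point worth a sentence is that the system \eqref{eqn:S_Abcd} is indexed over all $t\in\Zbb$, whereas the positivity property in Definition \ref{def:in_pos} concerns $t\ge 0$ with $\vcx_0$ as the initialization — so the induction and the probing are confined to the forward trajectory $t\ge 0$, which is exactly what the definition supplies. Beyond these bookkeeping issues, the proof is elementary and requires no tools beyond the definitions already in place; indeed, one may simply cite \cite{farina2011positive} for the full details, as the statement is classical in the positive-systems literature.
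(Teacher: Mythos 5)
Your argument is correct. Note that the paper itself offers no proof of this statement: Theorem \ref{thm:in_pos} is quoted from \cite{farina2011positive} as a classical result of positive realization theory, so there is no in-paper argument to compare against. What you give is the standard textbook proof, and both directions are sound: the induction for the ``if'' direction is immediate, and in the ``only if'' direction you correctly observe that the probing with $\vcx_0=\vc{e}_j$, $u\equiv 0$ and with $\vcx_0=\zero$, $u_0=1$ is carried out inside the one realization whose existence Definition \ref{def:in_pos} guarantees, so that very realization of $\gS$ is exhibited as having non-negative $\mxA,\vcb,\vcc,d$. Your two cautionary remarks (that one must not change coordinates before probing, and that the dynamics \eqref{eqn:S_Abcd} are only constrained for $t\ge 0$) are exactly the right bookkeeping points, and nothing further is needed.
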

The next corollary is easily concluded from Theorem \ref{thm:in_pos} for the internally positive impulse responses.  
\begin{theorem}[\cite{farina2011positive}]\label{thm:in_pos_ex_pos}
If impulse response $\gS$ is internally positive then we have 
\begin{equation}\label{eqn:gS_t_ge_0}
	\gtS_t\ge 0, \qquad \forall t\in\Zbb_+.
\end{equation}
\end{theorem}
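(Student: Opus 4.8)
The plan is to reduce the statement to the non-negative realization guaranteed by Theorem~\ref{thm:in_pos} and then simply read off the impulse response coefficients. First I would invoke Theorem~\ref{thm:in_pos}: since $\gS$ is internally positive, there exist $\nx\in\Nbb$ and matrices $\mxA\in\Rbb^{\nx\times\nx}$, $\vcb\in\Rbb^{\nx}$, $\vcc\in\Rbb^{1\times\nx}$ and $d\in\Rbb$, \emph{all with non-negative entries}, realizing $\Scal$ as in \eqref{eqn:S_Abcd}.

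Next I would compute the impulse response of this realization. Applying the unit pulse (i.e.\ $u_0=1$ and $u_t=0$ for $t\neq 0$) with zero initial state $\vcx_0=0_{\nx}$ to \eqref{eqn:S_Abcd} gives, by a one-line induction on the state sequence $\vcx_t$, that $\gtS_0=d$ and $\gtS_t=\vcc\mxA^{t-1}\vcb$ for every $t\ge 1$. Since the entrywise sum and product of matrices/vectors with non-negative entries again have non-negative entries, each power $\mxA^{t-1}$ has non-negative entries, hence the scalar $\vcc\mxA^{t-1}\vcb$ is a finite sum of products of non-negative numbers and is therefore non-negative; together with $d\ge 0$ this yields \eqref{eqn:gS_t_ge_0}.

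An equivalent and slightly more conceptual route would bypass the coordinate computation altogether: taking $\vcx_0=0_{\nx}\in\Rbb_+^{\nx}$ and any input $\vcu$ with $u_t\ge 0$ for all $t$, Definition~\ref{def:in_pos} directly gives $y_t=\Lu{t}(\gS)\ge 0$ for all $t$, i.e.\ $\gS$ is externally positive in the sense of Definition~\ref{def:ex_pos}; Theorem~\ref{thm:ex_pos} then immediately delivers $\gtS_t\ge 0$ for all $t\in\Zbb_+$. I would present the first route since it is self-contained and exhibits the structure exploited later in the paper, and mention the second as a remark.

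There is essentially no hard step here; the only points that deserve a line of care are the base case $t=0$ (handled by $\gtS_0=d\ge 0$ separately from the $t\ge 1$ formula) and the observation that entrywise non-negativity is preserved under matrix multiplication, so that $\mxA^{t-1}$ is entrywise non-negative by induction. The substantive content has already been front-loaded into Theorem~\ref{thm:in_pos}, namely the existence of a non-negative realization.
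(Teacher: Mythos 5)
Your argument is correct and matches the paper's intent: the paper states this result without proof as being ``easily concluded from Theorem~\ref{thm:in_pos}'', and your first route --- reading off $\gtS_0=d$ and $\gtS_t=\vcc\mxA^{t-1}\vcb$ from the entrywise non-negative realization and noting that non-negativity is preserved under matrix products --- is exactly that one-line conclusion. The alternative route via external positivity is also sound and consistent with the paper's remark that internal positivity implies external positivity.
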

Theorem \ref{thm:in_pos_ex_pos} implies that any internally positive impulse response is externally positive as well. Accordingly, due to \eqref{eqn:rank_Hankel_gS}, we have $\Pscr\subset\underline{\Pscr}$, where the set of impulse responses $\underline{\Pscr}$ is defined as
\begin{equation}\label{eqn:P_subset_barP_finite_order}
\underline{\Pscr}:=\overline{\Pscr}\cap
\Big\{\vcg\!\in\!\ellone\!\ \big|\!\ \rank(\Hankel(\vcg))<\infty\Big\}.	
\end{equation}
Meanwhile, from the next example, which is a modified version of an example given in \cite{benvenuti2004tutorial}, one can see that the inclusion in $\Pscr\subset\underline{\Pscr}$ is strict, i.e., $\Pscr\neq\underline{\Pscr}$.

\begin{example*}%\label{example:1}
\normalfont
Let ${{\rho}}\in (0,1)$ and
$\omega$ be an irrational real number.
Define the impulse response $\vcg=(g_t)_{t=0}^{\infty}$ as
\begin{equation}\label{eqn:g_example}
	g_t = {{\rho}}^t(1+\cos(2\pi\omega t)),\qquad\forall t\in\Zbb_+.
\end{equation}
One can see that $\vcg$ is a non-negative impulse response with following transfer function 
\begin{equation}
	G(z) = \frac{1}{1-{{\rho}} z^{-1}}+
	\frac{1-{{\rho}}\cos w \ z^{-1}}
	{1-2\rho \cos w \ z^{-1}+\rho^2  z^{-2}}.
\end{equation}
Therefore, we have $\vcg\in\underline{\Pscr}$. However, there is no positive realization for the impulse response $\vcg$ \cite{farina2011positive}. Accordingly, due to Definition \ref{def:in_pos},  we know that $\vcg$ is not internally positive, i.e., $\vcg\notin\Pscr$ and $\Pscr\neq\underline{\Pscr}$.
\end{example*}

Based on the above discussion, conditions \eqref{eqn:rank_Hankel_gS} and \eqref{eqn:gS_t_ge_0} are necessary but not sufficient for $\gS$ being an internally positive impulse response.
Using the following theorem, we derive sufficient conditions for the internal positivity of $\vcg$ which are used later to formulate the identification problem of internally positive systems.
\begin{theorem}[\cite{farina2011positive}]
\label{thm:pos_real}
Let $\vcg$ be a non-negative impulse response and $G$ be the corresponding transfer function
If $G$ is a strictly proper rational function with a unique dominant pole $\rho\in(0,1)$, then there exists a positive realization for $G$ .
\end{theorem}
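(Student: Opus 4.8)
The plan is to reduce the claim to the existence of an $\mxA$-invariant \emph{polyhedral} cone attached to a minimal realization of $G$, and then to build such a cone from the asymptotics of the state sequence. Fix a minimal realization $(\mxA,\vcb,\vcc,d)$ of $G$. Since $G$ is strictly proper, $d=g_0=0$, and by minimality the eigenvalues of $\mxA$ are exactly the poles of $G$, so $\rho$ is the eigenvalue of $\mxA$ of largest modulus and every other eigenvalue $\lambda$ satisfies $|\lambda|<\rho$. The reduction is: if one can exhibit finitely many $\vcv_1,\dots,\vcv_N\in\Rbb^{\nx}$ whose conic hull $K=\convcone\{\vcv_1,\dots,\vcv_N\}$ satisfies (i)~$\vcb\in K$, (ii)~$\mxA K\subseteq K$, and (iii)~$\vcc\vcv_j\ge0$ for all $j$, then, writing $\mxA\vcv_j=\sum_i\alpha_{ij}\vcv_i$ with $\alpha_{ij}\ge0$ (by (ii)), $\vcb=\sum_j\beta_j\vcv_j$ with $\beta_j\ge0$ (by (i)), and letting $V$ have columns $\vcv_j$, the system $(\hat\mxA,\hat\vcb,\hat\vcc,0)$ with $\hat\mxA=(\alpha_{ij})$, $\hat\vcb=(\beta_j)_j$, $\hat\vcc=(\vcc\vcv_j)_j$ is entrywise non-negative and, from $\mxA^{t}V=V\hat\mxA^{t}$, realizes $G$ (its impulse response is $\hat d=g_0$ and $\hat\vcc\hat\mxA^{t}\hat\vcb=\vcc\mxA^{t}\vcb=g_{t+1}$). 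Being non-negative, it is a positive realization by Theorem~\ref{thm:in_pos}. So everything reduces to producing $K$.

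Consider first the generic case in which $\rho$ is a \emph{simple} pole of $G$, hence a simple eigenvalue of $\mxA$. Let $\vcv$ span its eigenspace and write $\mxA=\rho P+\mxA_0$, where $P$ is the rank-one spectral projector onto $\linspan\{\vcv\}$ and $\mxA_0=\mxA(\eye-P)$; then $P\mxA_0=\mxA_0P=0$, $\mxA^{t}=\rho^{t}P+\mxA_0^{t}$, and all eigenvalues of $\mxA_0$ have modulus $\le\tilde\rho<\rho$, so fix $\sigma\in(\tilde\rho,\rho)$. Splitting the dominant pole off the partial-fraction expansion gives $g_t=\beta\rho^t+O(\sigma^t)$ with $\beta$ the residue of $G$ at $\rho$; since $g_t\ge0$ and $\beta\ne0$ (as $\rho$ is a pole), $\beta>0$. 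After fixing the sign of $\vcv$ so that $\vcc\vcv>0$, the vector $\vcw:=P\vcb$ is a positive multiple of $\vcv$ with $\vcc\vcw=\beta\rho>0$, and $\mxA^{t}\vcb=\rho^{t}\vcw+\mxA_0^{t}\vcb$ with $\|\mxA_0^{t}\vcb\|=O(\sigma^{t})$, so $\rho^{-t}\mxA^{t}\vcb\to\vcw$. Choose a \emph{polyhedral} norm $\|\cdot\|_*$ on $\img(\eye-P)$ with $\|\mxA_0\|_*\le\sigma$ (replace any norm witnessing the spectral-radius bound by a sufficiently close polytope norm), let $\pm z_1,\dots,\pm z_k$ be the vertices of its unit ball, and set
\[
S:=\{\,s\vcw+r \ \mid\ s\ge0,\ r\in\img(\eye-P),\ \|r\|_*\le\gamma s\,\},
\]
a polyhedral cone with generators $\vcw\pm\gamma z_i$. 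For every $\gamma>0$, $S$ is $\mxA$-invariant, since $\mxA(s\vcw+r)=\rho s\vcw+\mxA_0r$ with $\|\mxA_0r\|_*\le\sigma\gamma s=\tfrac{\sigma}{\rho}\gamma(\rho s)$ and $\sigma/\rho<1$; and for $\gamma$ small enough $\vcc\ge0$ on $S$, because $\vcc(s\vcw+r)\ge s(\vcc\vcw-C\gamma)$ for a constant $C$. Since $\vcw\in\interior S$ and $\rho^{-t}\mxA^{t}\vcb\to\vcw$, there is a finite $T$ with $\mxA^{t}\vcb\in S$ for all $t\ge T$; put
\[
K:=\convcone\big(\{\vcw\pm\gamma z_i\}_{i=1}^{k}\cup\{\vcb,\mxA\vcb,\dots,\mxA^{T-1}\vcb\}\big).
\]
Then $K$ is polyhedral, $\vcb\in K$, $\vcc\ge0$ on every generator (on those of $S$ by the above, and $\vcc\mxA^{j}\vcb=g_{j+1}\ge0$), and $\mxA K\subseteq K$ because $\mxA$ maps the generators of $S$ into $S\subseteq K$ and $\mxA(\mxA^{j}\vcb)=\mxA^{j+1}\vcb$ is a listed generator when $j+1\le T-1$ and lies in $S\subseteq K$ when $j+1=T$. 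Thus $K$ satisfies (i)--(iii), which proves the theorem in this case.

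When $\rho$ has multiplicity $m\ge2$, the same scheme applies after replacing $\vcw$ by the length-$m$ Jordan chain $v_1,\dots,v_m$ of $\rho$ (with $\mxA v_1=\rho v_1$ and $\mxA v_j=\rho v_j+v_{j-1}$), normalizing the iterates by $\binom{t}{m-1}\rho^{t}$ instead of $\rho^{t}$, and enlarging $S$ to $\convcone\{v_1,\dots,v_m\}$ fattened slightly into $\img(\eye-P)$ as before; the extra work is to show that the Jordan-chain components of $\mxA^{t}\vcb$ eventually become and stay non-negative and dominate the contracting part, so that $\mxA^{t}\vcb$ lands in this cone (on whose boundary the dominant ray $\Rbb_+v_1$ now lies). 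This sign bookkeeping, together with the loss of interiority of the dominant ray, is the step I expect to be the main obstacle. It is also precisely where uniqueness of the dominant pole enters: in the Example, $\rho$ shares its modulus with a complex-conjugate pair of poles, the normalized iterates $\rho^{-t}\mxA^{t}\vcb$ do not converge to a single ray, no invariant polyhedral cone of the above type can contain the state trajectory, and indeed $G$ admits no positive realization.
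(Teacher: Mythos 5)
The paper offers no proof of Theorem~\ref{thm:pos_real}: it is quoted from \cite{farina2011positive}, so there is nothing in-text to compare against. What you have written is the classical argument from the positive-realization literature: reduce the claim to exhibiting, for a minimal realization, an $\mxA$-invariant polyhedral cone $K$ containing $\vcb$ on which $\vcc$ is non-negative, and manufacture $K$ by fattening the dominant eigenray into a polyhedral ice-cream cone that the state trajectory eventually enters, then adjoining the finitely many initial states $\vcb,\mxA\vcb,\dots,\mxA^{T-1}\vcb$. Your reduction step is sound (the induced $(\hat\mxA,\hat\vcb,\hat\vcc,0)$ is entrywise non-negative and reproduces the Markov parameters via $\mxA^{t}V=V\hat\mxA^{t}$, so Theorem~\ref{thm:in_pos} applies), and in the simple-dominant-pole case every link of the chain checks out: $\beta>0$ from non-negativity of $g_t$ plus non-vanishing of the residue, existence of a polyhedral norm with $\|\mxA_0\|_*\le\sigma<\rho$, invariance of $S$ and non-negativity of $\vcc$ on $S$ for small $\gamma$, and $\vcw\in\interior S$ together with $\rho^{-t}\mxA^{t}\vcb\to\vcw$ yielding the finite $T$. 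For a simple dominant pole your proof is complete and correct.

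The one genuine gap is the case where the unique dominant pole has multiplicity $n\ge 2$, which you only sketch and candidly flag: there the dominant ray lies on the boundary of any candidate invariant cone, and one must track the signs of the Jordan-chain coefficients of $\mxA^{t}\vcb$, which is exactly where the published proofs do the real work. Two mitigating observations. First, in the only place this paper invokes Theorem~\ref{thm:pos_real} (the proof of Corollary~\ref{cor:dom_pole_simple}), the hypothesis $\lim_{t\to\infty}\rho^{-t}g_t=a>0$ forces the dominant pole to be simple, and the non-simple case is deliberately deferred to Section~\ref{sec:extensions} via \cite[Theorem 11]{benvenuti2004tutorial}; so the case you have fully proven is the one the paper actually uses. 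Second, your closing remark --- that uniqueness of the dominant pole is precisely what makes the normalized trajectory converge to a single ray, and that the paper's Example (three poles of equal modulus) fails exactly here --- is the right explanation of why the hypothesis cannot be dropped.
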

With respect to each $\rho\in(0,1)$, define $\Pscr_{\!\rho}\subset\ellone$ as the set of non-negative impulse responses satisfying \eqref{eqn:rank_Hankel_gS} 
such that we have
\begin{equation}
	\label{eqn:Exist_lim_alpha_tg_t}
	\exists\ \! {a}\!\in\!(0,\infty), \quad 
	\lim_{t\to \infty} \rho^{-t}g_t = {a},
\end{equation}
i.e., $\lim_{t\to \infty} \rho^{-t}g_t$
is well-defined and equal to a positive real scalar ${a}$. 
Furthermore, we define $\Pscr_{(0,1)}$ as $\Pscr_{\!(0,1)} = \cup_{\rho\in(0,1)} \Pscr_{\!\rho}$.
Based on Theorem \ref{thm:pos_real}, 
we have the following corollary for $\Pscr_{\!\rho}$ and $\Pscr_{\!(0,1)}$.
\begin{corollary}
	\label{cor:dom_pole_simple}
	For any $\rho\in(0,1)$, each impulse response $\gS\in\Pscr_{\!\rho}$ is internally positive, i.e., $\Pscr_{\!\rho}\subset\Pscr$.
	Moreover, we have $\Pscr_{\!(0,1)}\subset\Pscr$.
\end{corollary}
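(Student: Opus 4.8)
The plan is to verify that every impulse response $\gS\in\Pscr_{\!\rho}$ satisfies the hypotheses of Theorem \ref{thm:pos_real}, so that $G^{\stiny{$(\Scal)$}}$ admits a positive realization and hence, by Definition \ref{def:in_pos}, $\gS$ is internally positive. The second claim, $\Pscr_{\!(0,1)}\subset\Pscr$, then follows immediately by taking the union over $\rho\in(0,1)$, since $\Pscr_{\!(0,1)} = \cup_{\rho\in(0,1)}\Pscr_{\!\rho}$ and each $\Pscr_{\!\rho}\subset\Pscr$.

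So fix $\rho\in(0,1)$ and $\gS\in\Pscr_{\!\rho}$. By the definition of $\Pscr_{\!\rho}$, the response $\gS$ is non-negative and satisfies $\rank(\Hankel(\gS))<\infty$, so by Theorem \ref{thm:kronecker} the transfer function $\GS(z)=\sum_{t=0}^\infty \gtS_t z^{-t}$ is a rational function. To invoke Theorem \ref{thm:pos_real} I must check three further properties of $\GS$: strict properness, that $\rho$ is a pole, and that $\rho$ is the \emph{unique dominant} pole (i.e.\ all other poles have strictly smaller modulus). Strict properness is essentially automatic from $\gteps$ being summable and the power-series form $\GS(z)=\sum_{t\ge 0}\gtS_t z^{-t}$ having no constant-term obstruction once one notes the series converges for $|z|>\rho$ — I would argue that a rational function represented by such a series tending to $0$ as $|z|\to\infty$ is strictly proper. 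The crux is the spectral/asymptotic step: condition \eqref{eqn:Exist_lim_alpha_tg_t} says $\rho^{-t}\gtS_t\to a\in(0,\infty)$. I would write the rational $\GS$ in partial-fraction form, $\gtS_t = \sum_k p_k(t)\lambda_k^t$ with $p_k$ polynomials and $\lambda_k$ the distinct poles. Then $\rho^{-t}\gtS_t = \sum_k p_k(t)(\lambda_k/\rho)^t$; for this to converge to a finite nonzero limit, no $|\lambda_k|$ can exceed $\rho$ (else the sum blows up), the term with $\lambda_k=\rho$ must have $p_k$ constant and equal to $a\neq 0$ (so $\rho$ is genuinely a simple pole with residue $a$), and no other $\lambda_k$ can have $|\lambda_k|=\rho$ (a distinct pole on the circle of radius $\rho$ would contribute an oscillating, non-convergent term $a' (\lambda_k/\rho)^t$ that cannot be cancelled). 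This forces $\rho$ to be the unique dominant pole, and its strictness (modulus strictly larger than the rest) follows from the remaining $|\lambda_k|<\rho$. Hence Theorem \ref{thm:pos_real} applies.

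The main obstacle I anticipate is the careful handling of the "unique dominant pole" clause: one must rule out other poles on the circle $|z|=\rho$, which requires a clean argument that distinct complex exponentials on a common modulus circle are asymptotically linearly independent (so that their combination cannot converge to a constant unless all such coefficients vanish). This is standard — e.g.\ Cesàro averaging or evaluating along arithmetic progressions of $t$ kills such oscillatory terms — but it is the one place where \eqref{eqn:Exist_lim_alpha_tg_t} is used in an essential, non-trivial way rather than merely formally. Everything else (rationality from the Hankel rank condition, strict properness, non-negativity being preserved) is routine bookkeeping. Once $\GS$ is shown to be a strictly proper rational function with unique dominant pole $\rho\in(0,1)$ and non-negative impulse response, Theorem \ref{thm:pos_real} yields a positive realization, Definition \ref{def:in_pos} gives $\gS\in\Pscr$, and taking unions closes the corollary.
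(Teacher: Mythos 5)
Your overall strategy --- verify the hypotheses of Theorem \ref{thm:pos_real} and invoke Definition \ref{def:in_pos}, then take unions over $\rho$ --- is the right one, and your treatment of the dominant-pole condition (partial fractions, ruling out additional poles on the circle $|z|=\rho$ via asymptotic linear independence of complex exponentials) is sound and in fact more explicit than what the paper writes down. However, there is a genuine gap in the strict-properness step, which you dismiss as ``essentially automatic.'' It is not: as $|z|\to\infty$ the series $\GS(z)=\sum_{t\ge 0}\gtS_t z^{-t}$ tends to $\gtS_0$, not to $0$, and nothing in the definition of $\Pscr_{\!\rho}$ forces $\gtS_0=0$ (Definition \ref{def:in_pos} explicitly allows a feedthrough term $d$, and membership in $\Pscr_{\!\rho}$ only requires non-negativity, finite Hankel rank, and the limit condition \eqref{eqn:Exist_lim_alpha_tg_t}). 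So $\GS$ is in general only proper, and Theorem \ref{thm:pos_real} cannot be applied to it directly; your justification of strict properness is the one step that actually fails.

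The repair is exactly the extra step the paper takes: split off the feedthrough, writing $\GS(z)=d+G(z)$ with $d=\gtS_0\ge 0$ and $G$ strictly proper (equivalently, replace $\gS$ by the impulse response with $g_0=0$ and $g_t=\gtS_t$ for $t\ge 1$). This modification only changes the $t=0$ term, so $G$ is still rational, non-negative, and satisfies $\lim_{t\to\infty}\rho^{-t}g_t=a>0$; your dominant-pole argument then applies verbatim to $G$, and Theorem \ref{thm:pos_real} yields a positive realization $(\mxA_+,\vcb_+,\vcc_+)$ of $G$. Finally, $(\mxA_+,\vcb_+,\vcc_+,d)$ realizes $\GS$ with all entries non-negative, hence is a positive realization by Theorem \ref{thm:in_pos}. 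With that insertion your proof is complete; the union step for $\Pscr_{\!(0,1)}$ is fine as you state it.
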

\begin{proof}
See Appendix \ref{sec:pf_dom_pole_simple}.
\end{proof}
Note that $\Pscr_{\!(0,1)}$ contains exactly the impulse responses satisfying conditions \eqref{eqn:rank_Hankel_gS}, \eqref{eqn:gS_t_ge_0},  and, \eqref{eqn:Exist_lim_alpha_tg_t}.
Hence, Corollary \ref{cor:dom_pole_simple} says that any impulse response in $\ellone$ which satisfies these conditions is internally positive.  
Accordingly, one can employ \eqref{eqn:rank_Hankel_gS}, \eqref{eqn:gS_t_ge_0}  and \eqref{eqn:Exist_lim_alpha_tg_t} in the identification problem to enforce internal positivity on the  impulse response to be estimated. 
The next theorem further highlights the importance of positive systems $\Pscr_{\!(0,1)}$.
\begin{theorem}\label{thm:denseness_IP_EP}
	The set of impulse responses $\Pscr_{\!(0,1)}$ is dense in $\Pscr$ with respect to $p$-norm topology, for any $p\in[1,\infty]$.
\end{theorem}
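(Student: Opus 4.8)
The plan is to approximate each $\gS\in\Pscr$, simultaneously in every $\ellp$-norm, by perturbing it with a tiny purely exponential tail whose pole is real, lies in $(0,1)$, and strictly dominates every pole of $\GS$; such a perturbed impulse response lands in $\Pscr_{\!\rho}$ by construction, which gives the density.

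First I would fix $\gS\in\Pscr$ and record two facts. Since $\Scal$ is stable, $\gS\in\ellone$, so $\GS(z)=\sum_{t\ge 0}\gtS_tz^{-t}$ converges absolutely for $|z|\ge 1$; and since $\gS\in\Pscr$, its transfer function is rational (Theorem \ref{thm:kronecker} together with \eqref{eqn:rank_Hankel_gS}). Hence $\GS$ has no pole in $\{|z|\ge 1\}$, and $\mu:=r(\GS)<1$ (set $\mu=0$ if $\GS$ is a polynomial). Now pick any $\rho\in(\mu,1)$ and, for $n\in\Nbb$, define $\vch^{(n)}=(h^{(n)}_t)_{t\in\Zbb_+}$ by $h^{(n)}_t:=\gtS_t+\tfrac1n\rho^{t}$.

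Next I would verify $\vch^{(n)}\in\Pscr_{\!\rho}$ against its three defining conditions. Non-negativity is clear since $\gtS_t\ge 0$ by Theorem \ref{thm:in_pos_ex_pos} and $\rho^t>0$. The transfer function of $\vch^{(n)}$ equals $\GS(z)+\tfrac1n\,\tfrac{z}{z-\rho}$, a sum of rational functions, hence rational, so $\vch^{(n)}$ has finite Hankel rank, i.e.\ \eqref{eqn:rank_Hankel_gS} holds; likewise $\vch^{(n)}\in\ellone$ because $\rho<1$. Finally, expanding $\gtS_t=\sum_i p_i(t)\lambda_i^{t}$ (for $t\ge 1$) by partial fractions with $|\lambda_i|\le\mu<\rho$ gives $\rho^{-t}\gtS_t=\sum_i p_i(t)(\lambda_i/\rho)^{t}\to 0$, whence $\lim_{t\to\infty}\rho^{-t}h^{(n)}_t=\tfrac1n\in(0,\infty)$, which is \eqref{eqn:Exist_lim_alpha_tg_t} with $a=1/n$. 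Thus $\vch^{(n)}\in\Pscr_{\!\rho}\subset\Pscr_{\!(0,1)}$ (consistently with Corollary \ref{cor:dom_pole_simple}, also $\vch^{(n)}\in\Pscr$). It remains to note $\|\gS-\vch^{(n)}\|_p=\tfrac1n(1-\rho^p)^{-1/p}$ for $p\in[1,\infty)$ and $\|\gS-\vch^{(n)}\|_\infty=\tfrac1n$, so $\vch^{(n)}\to\gS$ in every $p$-norm; since $\gS\in\Pscr$ was arbitrary, $\Pscr_{\!(0,1)}$ is dense in $\Pscr$.

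The only genuine point, as opposed to bookkeeping, is the decay $\rho^{-t}\gtS_t\to 0$: this is exactly where stability of $\gS$ is used, guaranteeing $\mu=r(\GS)<1$ so that a valid $\rho\in(\mu,1)$ exists. The exponential-tail term $\tfrac1n\rho^t$ then furnishes the required strictly positive, strictly dominant limit without disturbing non-negativity, rationality, or $\ellone$-membership, and its $\ellp$-mass is $O(1/n)$ uniformly in $p$.
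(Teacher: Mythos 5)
Your proposal is correct and follows essentially the same route as the paper: perturb $\gS$ by a small exponential tail $a\rho^t$ with $\rho$ strictly between $r(\GS)$ and $1$, check non-negativity, rationality (finite Hankel rank), and the limit condition $\lim_{t\to\infty}\rho^{-t}(\gtS_t+a\rho^t)=a$, then bound the $p$-norm of the perturbation. The only cosmetic differences are that you parametrize the perturbation by $1/n$ rather than by a scalar $a<(1-\rho)\varepsilon$, and that you spell out via partial fractions why $\rho>r(\GS)$ forces $\rho^{-t}\gtS_t\to 0$, a step the paper asserts directly.
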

\begin{proof}
See Appendix \ref{sec:pf_denseness_IP_EP}.
\end{proof}
With respect to each $\vcg=(g_t)_{t=0}^\infty\in \Pscr_{(0,1)}$,
one can define impulse response $\vch=(h_t)_{t=0}^\infty$
such that  
$h_t=g_t-{a}\rho^t$, for $t\in\Zbb_+$, where $\rho$ and ${a}$ are the positive scalars introduced in \eqref{eqn:Exist_lim_alpha_tg_t}.
Note that  
$\lim_{t\to \infty} \rho^{-t}h_t = 0$ and $\rho\in(0,1)$, which implies that $\vch = (h_t)_{t=0}^{\infty}$ is a stable impulse response dominated 
by 
\begin{equation}
\vcf_{\rho}=(f_t)_{t=0}^\infty:=(\rho^t)_{t=0}^\infty.
\end{equation}
Since, for each $t$, we have $g_t={a}\rho^t+h_t$, to identify the internally positive impulse response $\vcg=(g_t)_{t=0}^\infty$, we need to estimate $\rho,{a}$ and the stable impulse response $\vch=(h_t)_{t=0}^\infty$, dominated by $\vcf_{\rho}=(\rho^t)_{t=0}^\infty$, and meanwhile ensure that $\vcg=(g_t)_{t=0}^\infty$ satisfies properties \eqref{eqn:rank_Hankel_gS} and \eqref{eqn:gS_t_ge_0}.
Accordingly, we need a suitable hypothesis space for $\vch=(h_t)_{t=0}^\infty$. 
To this end, we employ stable reproducing kernel Hilbert spaces introduced below.
\begin{definition}[\cite{berlinet2011reproducing,chen2018stability}]
	\label{def:kernel_and_section}
	The non-zero symmetric function $\kernel:\Zbb_+\times\Zbb_+\to \Rbb$ is said to be a \emph{Mercer kernel} if, 
	for any $m\in\Nbb$, $t_1,\ldots,t_m\in\Tbb$ and $a_1,\ldots,a_m\in\Rbb$,
	we have 
	$\sum_{i=1}^{m}\!\sum_{j=1}^{m}\! a_i\kernel(t_i,t_j)a_j\ge 0$.
	Moreover, the \emph{section} of kernel $\kernel$ at $t\in\Zbb_+$ is denoted by $\kernel_{t}$ and
	defined as the function $\kernel(t,\cdot):\Zbb_+\to\Rbb$.
	Furthermore, the positive kernel $\kernel$ is said to be \emph{stable} if, for any $\vcu=(u_t)_{t\in\Zbb_+}\in\ellinfty$, we have
	$\sum_{t\in\Zbb_+}\!\!|	\sum_{s\in\Zbb_+}\!\! u_s\kernel(t,s)|<\infty$.	
\end{definition}
\begin{theorem}[\cite{berlinet2011reproducing,chen2018stability}]\label{thm:kernel_to_RKHS_def}
	Given a Mercer kernel $\kernel:\Zbb_+\times\Zbb_+\to \Rbb$, there exists a \emph{unique} Hilbert space $\Hk\subseteq \Rbb^{\Zbb_+}$ endowed with inner product $\inner{\cdot}{\cdot}_{\Hk}$ and norm $\|\cdot\|_{\Hk}$, called a \emph{RKHS with kernel} $\kernel$, such that, for each $t\in\Zbb_+$, we have
	\begin{itemize}
		\item[i)] $ \kernel_t\in\Hk$, and
		\item[ii)] $\inner{\vcg}{ \kernel_{t}}_{\Hk}=g_t$, for all $\vcg=(g_t)_{t\in\Zbb_+}\in\Hk$.
	\end{itemize} 
	The second feature is called {\em reproducing property}.
	Moreover, $\Hk\subset\ellone$ if and only if $\kernel$ is a stable kernel. In this case, $\Hk$ is said to be a \emph{stable RKHS}.
\end{theorem}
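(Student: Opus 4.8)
The statement decomposes into the classical Moore--Aronszajn existence/uniqueness claim (items i)--ii) together with uniqueness of $\Hk$) and the stability characterization ($\Hk\subset\ellone$ iff $\kernel$ is stable). I would prove these two parts separately. For the Moore--Aronszajn part I follow the standard construction. Set $H_0:=\linspan\{\kernel_t\mid t\in\Zbb_+\}$ and define a symmetric bilinear form on it by $\inner{\kernel_s}{\kernel_t}_0:=\kernel(s,t)$, extended bilinearly. The Mercer hypothesis gives $\inner{f}{f}_0\ge0$ for every $f\in H_0$. Writing $f=\sum_i a_i\kernel_{s_i}$, the identity $\inner{f}{\kernel_t}_0=\sum_i a_i\kernel(s_i,t)=f(t)$ holds, and with Cauchy--Schwarz it yields $|f(t)|^2\le\kernel(t,t)\inner{f}{f}_0$; hence $\inner{f}{f}_0=0$ forces $f\equiv0$ as a function, which simultaneously shows that the form is well defined (independent of the representation of $f$) and positive definite. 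I would then take the abstract completion of $(H_0,\inner{\cdot}{\cdot}_0)$ and realize it inside $\Rbb^{\Zbb_+}$: for a Cauchy sequence $(f_n)\subset H_0$ the bound $|f_n(t)-f_m(t)|\le\sqrt{\kernel(t,t)}\,\|f_n-f_m\|_0$ shows $f_n(t)$ converges for each $t$, defining a limit function and an injection of the completion into $\Rbb^{\Zbb_+}$. Continuity of the inner product extends $\inner{f}{\kernel_t}=f(t)$ from $H_0$ to the whole space, establishing i) and ii). Uniqueness is then immediate: any Hilbert function space $\Hcal$ obeying i)--ii) contains every $\kernel_t$ and, by ii), carries the inner product $\inner{\kernel_s}{\kernel_t}=\kernel(s,t)$ on $H_0$; since $\inner{f}{\kernel_t}=0$ for all $t$ forces $f\equiv0$, the span $H_0$ is dense in $\Hcal$, so $\Hcal$ coincides isometrically with the completion $\Hk$.

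\textbf{Stability characterization.} Here I would use a sign trick together with the closed-graph theorem, which avoids any delicate convergence of $\sum_s u_s\kernel_s$ in $\Hk$. Suppose first that $\kernel$ is stable: then the operator $(\Wcal u)(t):=\sum_s u_s\kernel(t,s)$ maps $\ellinfty$ into $\ellone$ (absolute summability of each row, $\sum_s|\kernel(t,s)|<\infty$, follows by testing stability on $u=(\mathrm{sign}\,\kernel(t,s))_s$), and the closed-graph theorem makes $\Wcal$ bounded with some constant $C$. Given $f\in\Hk$, put $\sigma_t:=\mathrm{sign}(f(t))$ and $h_T:=\sum_{t=0}^T\sigma_t\kernel_t\in H_0$; the reproducing property gives $\sum_{t=0}^T|f(t)|=\inner{f}{h_T}_{\Hk}\le\|f\|_{\Hk}\|h_T\|_{\Hk}$, while $\|h_T\|_{\Hk}^2=\sum_{t\le T}\sigma_t(\Wcal\sigma^{(T)})(t)\le\|\Wcal\sigma^{(T)}\|_1\le C$, with $\sigma^{(T)}$ the truncated sign sequence. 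Hence $\sum_{t=0}^T|f(t)|\le\sqrt{C}\,\|f\|_{\Hk}$ uniformly in $T$, so $f\in\ellone$. Conversely, if $\Hk\subset\ellone$ then the closed-graph theorem gives $\|f\|_1\le c\|f\|_{\Hk}$ for all $f\in\Hk$. Fixing $u\in\ellinfty$ and setting $\sigma_t:=\mathrm{sign}(\sum_s u_s\kernel(t,s))$, $h_T:=\sum_{t\le T}\sigma_t\kernel_t$, the self-bounding estimate $\|h_T\|_{\Hk}^2=\sum_{t\le T}\sigma_t h_T(t)\le\|h_T\|_1\le c\|h_T\|_{\Hk}$ yields $\|h_T\|_{\Hk}\le c$; then $\sum_{t\le T}|\sum_s u_s\kernel(t,s)|=\sum_s u_s h_T(s)\le\|u\|_\infty\|h_T\|_1\le c^2\|u\|_\infty$ uniformly in $T$, which is exactly the stability condition.

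\textbf{Main obstacle.} I expect the difficulty to be functional-analytic bookkeeping rather than a single hard idea. The delicate points are realizing the abstract completion as a genuine function space in the Moore--Aronszajn step (handled once point evaluations are shown continuous), and, in the stability part, the two applications of the closed-graph theorem: checking that the graphs are closed uses that both $\Hk$-convergence and $\ellone$-convergence imply pointwise convergence, which holds because $|f(t)|\le\sqrt{\kernel(t,t)}\,\|f\|_{\Hk}$ and $|f(t)|\le\|f\|_1$. The self-bounding identity $\|h_T\|_{\Hk}^2=\inner{h_T}{h_T}=\sum_t\sigma_t h_T(t)\le\|h_T\|_1$ is what lets the two estimates close uniformly in $T$, and it is the crux of the stability equivalence.
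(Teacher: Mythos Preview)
The paper does not prove this theorem; it is stated as a background result with citations to \cite{berlinet2011reproducing,chen2018stability} and no proof is given in the body or appendix. Your proposal is correct and essentially reproduces the standard arguments from those references: the Moore--Aronszajn construction for existence and uniqueness of $\Hk$, and the sign-sequence/closed-graph argument for the stability equivalence (the latter is precisely the approach in the Chen et al.\ reference). There is nothing to compare against in the paper itself.
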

Given a stable kernel $\kernel$, we take $\Hk$ as the hypothesis space for the stable impulse response $\vch$.
Considering the set of input-output data $\Dscr$, we define the empirical loss function
$\Ecal_{\rho}:\Rbb\times\Hk\to\Rbb_+$ as 
\begin{equation}
\Ecal_{\rho}({a},\vch) := \sum_{i=0}^{\nD-1}
\Big(
y_{t_i}-{a}\Lu{t_i}(\vcf_{\rho})
-\Lu{t_i}(\vch)
\Big)^2,
\end{equation}
where we assume the hyperparameter $\rho\in(0,1)$ is given.
The estimation of $\rho$ is will be discussed later.
We formulate the identification problem with internal positivity side-information as following regularized optimization 
\begin{equation}\label{eqn:opt_h_rank}
\begin{array}{cl}
	\minOp_{{a}\in\Rbb,\vch\in\Hk} & 
	\Ecal_{\rho}({a},\vch) + \lambda\!\ \|\vch\|_{\Hk}^2,\\
	\text{s.t.}
	&h_t+{a}\rho^t\ge 0, \quad \forall t \ge 0,\\
	&\rank(\Hankel(\vch))< \infty,\\
	&{a}\ge {a}_{\min},
\end{array} 
\end{equation}
where ${a}_{\min}>0$ is a given lower-bound for ${a}$ to ensure that ${a}>0$, and $\lambda>0$ is the regularization weight. 
Note that, similarly to the standard problem formulation in the literature on the kernel-based impulse response identification \cite{pillonetto2014kernel},  the objective function in \eqref{eqn:opt_h_rank} is an empirical loss function regularized with the RKHS norm of $\vch$. 
This ensures the stability of $\vch$ and also allows incorporating other features such as exponential decay and smoothness \cite{chen2018kernel}.
The next theorem says that the solution of \eqref{eqn:opt_h_rank}  leads to an internally positive estimation of impulse response $\vcg$.
Before proceeding to the theorem, we need to introduce an assumption.
\begin{assumption}
	\label{ass:kernel_dominated}
	There exist $C\in\Rbb_+$ and $\rhozero\in(0,\rho)$ such that we have $|\kernel(t,t)|\le C \rhozero^{2t}$, for any $t\in\Zbb_+$.
\end{assumption}
\begin{theorem}\label{thm:opt_h_rank}
Let Assumption \ref{ass:kernel_dominated} hold, ${a}$ and $\vch = (h_t)_{t=1}^\infty$ be a solution pair for \eqref{eqn:opt_h_rank}, and, the impulse response $\vcg=(g_t)_{t=0}^\infty$ be defined as $g_t = h_t +{a} \rho^t$, for any $t\in\Zbb_+$.
Then, $\vcg$ is internally positive.  
\end{theorem}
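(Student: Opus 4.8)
The plan is to reduce everything to membership in the set $\Pscr_{\!\rho}$: if we can show $\vcg \in \Pscr_{\!\rho}$, then Corollary \ref{cor:dom_pole_simple} immediately gives $\vcg \in \Pscr$, i.e., $\vcg$ is internally positive. By the definition of $\Pscr_{\!\rho}$, this amounts to verifying four facts about $\vcg = \vch + {a}\vcf_{\rho}$: (i) $\vcg \in \ellone$; (ii) $g_t \ge 0$ for every $t \in \Zbb_+$; (iii) $\rank(\Hankel(\vcg)) < \infty$; and (iv) $\lim_{t\to\infty}\rho^{-t}g_t$ exists and equals a strictly positive real scalar.

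Three of these follow directly from the constraints of \eqref{eqn:opt_h_rank} and the structure of the problem. For (i), since $\kernel$ is a stable kernel, Theorem \ref{thm:kernel_to_RKHS_def} gives $\Hk \subset \ellone$, so $\vch \in \ellone$; and $\vcf_{\rho} = (\rho^t)_{t=0}^\infty \in \ellone$ because $\rho \in (0,1)$; hence the sum $\vcg$ lies in $\ellone$. Fact (ii) is exactly the first constraint $h_t + {a}\rho^t \ge 0$. For (iii), observe that $\Hankel(\vcf_{\rho})$ has $(i,j)$ entry $\rho^{i+j-2} = \rho^{i-1}\rho^{j-1}$, so it is a rank-one operator; by linearity $\Hankel(\vcg) = \Hankel(\vch) + {a}\,\Hankel(\vcf_{\rho})$, and subadditivity of rank together with the second constraint $\rank(\Hankel(\vch)) < \infty$ yields $\rank(\Hankel(\vcg)) \le \rank(\Hankel(\vch)) + 1 < \infty$.

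The only step that genuinely invokes Assumption \ref{ass:kernel_dominated}, and the crux of the argument, is (iv). Writing $\rho^{-t}g_t = {a} + \rho^{-t}h_t$, it suffices to prove $\rho^{-t}h_t \to 0$. By the reproducing property, $h_t = \inner{\vch}{\kernel_t}_{\Hk}$ and $\norm{\kernel_t}_{\Hk}^2 = \inner{\kernel_t}{\kernel_t}_{\Hk} = \kernel(t,t)$, so Cauchy--Schwarz together with Assumption \ref{ass:kernel_dominated} gives $|h_t| \le \norm{\vch}_{\Hk}\sqrt{\kernel(t,t)} \le \sqrt{C}\,\norm{\vch}_{\Hk}\,\rhozero^{\,t}$. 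Therefore $\rho^{-t}|h_t| \le \sqrt{C}\,\norm{\vch}_{\Hk}\,(\rhozero/\rho)^t \to 0$ because $\rhozero < \rho$. Finally, the constraint ${a} \ge {a}_{\min} > 0$ forces the limit ${a}$ to be strictly positive, so all four conditions hold and $\vcg \in \Pscr_{\!\rho} \subset \Pscr$.

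I do not anticipate a serious obstacle once the reduction to $\Pscr_{\!\rho}$ is set up; the point to get right is that the strict gap $\rhozero < \rho$ in Assumption \ref{ass:kernel_dominated} is precisely what upgrades the bound on $\rho^{-t}h_t$ from merely bounded to vanishing — without it one would only recover non-negativity and hence external positivity (Theorem \ref{thm:in_pos_ex_pos}), not a valid unique-dominant-pole decomposition in the sense of Theorem \ref{thm:pos_real}. A minor technical point worth stating carefully is the rank subadditivity used in (iii) in the infinite-dimensional $\ellinfty$-operator setting of Theorem \ref{thm:kronecker}, though this is standard.
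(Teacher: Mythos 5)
Your proof is correct and follows essentially the same route as the paper's: verify that $\vcg$ satisfies the defining conditions of $\Pscr_{\!\rho}$ and invoke Corollary \ref{cor:dom_pole_simple}, with the decay $\rho^{-t}h_t\to 0$ obtained from the reproducing property, Cauchy--Schwarz, and Assumption \ref{ass:kernel_dominated}. The only (harmless) variation is in establishing $\rank(\Hankel(\vcg))<\infty$: you use rank subadditivity together with the rank-one structure of $\Hankel(\vcf_{\rho})$ directly, whereas the paper passes through the transfer functions and Kronecker's theorem; both are valid.
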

\begin{proof}
	Let $G$ be the transfer function which corresponds to $\vcg$.
	Due to \eqref{eqn:opt_h_rank}, the rank of Hankel operator $\Hankel(\vch)$ is finite. Subsequently, according to Theorem~\ref{thm:kronecker}, the  transfer function corresponding to $\vch$, $H$, has finite order. 
	On the other hand, we know that 
	\begin{equation}\label{eqn:Gz}
		G(z)=\frac{{a} z^{-1}}{1-\rho z^{-1}}+ H(z).
	\end{equation}
	Therefor, the order of $G$ is finite. Accordingly, due to Theorem~\ref{thm:kronecker}, we know  that 
	$\rank(\Hankel(\vcg))<\infty$, i.e., $\vcg$ satisfies \eqref{eqn:rank_Hankel_gS}.
	Furthermore, according to the first constraint  in \eqref{eqn:opt_h_rank}, one can see that $\vcg$ is a non-negative impulse response and \eqref{eqn:gS_t_ge_0} holds for $\vcg$.
	Moreover, from the reproducing property of kernel,  we know that $h_t = \inner{\kernel_t}{\vch}$ and $\|\kernel_t\|_{\Hk}^2 = \inner{\kernel_t}{\kernel_t} = \kernel(t,t)$, for any $t\in\Zbb_+$. Consequently, due to Cauchy-Schwartz inequality and Assumption \ref{ass:kernel_dominated}, we have 
	\begin{equation}
	\begin{split}\!\!\!\!\!
	|h_t| = |\inner{\kernel_t}{\vch}|
	&\le \|\kernel_t\|_{\Hk}\|\vch\|_{\Hk}
	\\&\qquad\ = \kernel(t,t)^{\frac12}\|\vch\|_{\Hk}
	\le C^{\frac{1}{2}}\|\vch\|_{\Hk}\rhozero^t.
	\end{split}	
	\end{equation}
	for any $t\in\Zbb_+$.
	Following this, one can see that
	\begin{equation}
	\begin{split}
	0&
	\le 
	\liminfOp_{t\to \infty}\rho^{-t}h_t 
	\le 
	\limsupOp_{t\to \infty}\rho^{-t}h_t 
	\le
	\limsupOp_{t\to \infty}\rho^{-t}|h_t|
	\\&
	\le 
	\limsupOp_{t\to\infty}
	C^{\frac{1}{2}}\|\vch\|_{\Hk}\rhozero^t\rho^{-t} = 
	0, 
	\end{split}
	\end{equation}
	where the last equality is due to $\rhozero\in(0,\rho)$.  
	Hence, $\lim_{t\to\infty}\rho^{-t}h_t$ is well-defined and we have $\lim_{t\to\infty}\rho^{-t}h_t=0$.
	Subsequently, due to the definition of $\vcg$, it follows that  
	$\lim_{t\to\infty}\rho^{-t}g_t={a}$ and $\vcg$ satisfies \eqref{eqn:Exist_lim_alpha_tg_t}.
	Therefore, $\vcg$ belongs to $\Pscr_{\!\rho}$, and consequently, due to Corollary \ref{cor:dom_pole_simple}, $\vcg$ is internally positive.
\end{proof}
\begin{remark}\normalfont
	For $a$, $\vch$ and $\vcg$ introduced in Theorem~\ref{thm:opt_h_rank}, we have 
	\begin{equation}
		\Ecal_{\rho}(a,\vch)
		=
		\sum_{i=0}^{\nD-1}
		\big(
		y_{t_i}-\Lu{t_i}(\vcg)
		\big)^2,
	\end{equation}
	i.e., in the cost function of \eqref{eqn:opt_h_rank}, the first term  is the sum of squared errors for the impulse response fitting when the dominant pole $\rho$ is known. 
\end{remark}
\section{Towards a Tractable Solution}\label{sec:towards_tractable}
\label{sec:Towards_a_Tractable_Solution}
In this section, we investigate optimization problem \eqref{eqn:opt_h_rank} which was introduced for impulse response identification with internal positivity.
This optimization problem is in an infinite-dimensional space with an infinite number of constraints.
In the following, we analyze this problem and provide a tractable approach for deriving its solution.

Let $\Vk$ be the Hilbert space $\Rbb\times\Hk$ which is endowed with inner product  $\inner{\cdot}{\cdot}_{\Vk}:\Vk\times\Vk\to\Rbb$ defined as following
\begin{equation}
\inner{({a}_1,\vch_1)}{({a}_2,\vch_2)}_{\Vk} = {a}_1{a}_2 + \inner{\vch_1}{\vch_2}_{\Hk}, 	
\end{equation}
for any ${a}_1,{a}_2\in\Rbb$ and $\vch_1,\vch_2\in\Hk$.
Also, let $\Fscr\subseteq \Hk$ be the set of finite Hankel rank impulse responses in $\Hk$, i.e., 
\begin{equation}
\Fscr =  \big\{\vch\in \Hk \!\ \big| \!\ \rank( \Hankel(\vch)) < \infty \big\}.
\end{equation}
We define function $\Jcal_{\!\Fscr}:\Vk\to \RR\cup\{+\infty\}$ as  
\begin{equation}\label{eqn:J_F}
	\Jcal_{\!\Fscr}({a},\vch) =  \Ecal_{\rho}({a},\vch)
	+
	\sum_{s=0}^{\infty}\delta_{{\Rscr}_s}({a},\vch)
	+
	\delta_{\Fscr}(\vch)
	+
	\lambda \|\vch\|_{\Hcal_{\bbk}}^2,
\end{equation}
where ${\Rscr}_s\subseteq \Vk$ is the following set 
\begin{equation}
{\Rscr}_s:= 
\Big\{
\big({a},(h_s)_{s\in\Zbb_+}\!\big)\in\Vk \!\ \Big| \!\  h_s+{a}\rho^s\ge 0, {a}\ge {a}_{\min}
\Big\},	
\end{equation}
for $s\in\Zbb_+$.
From the definition of $\Jcal_{\!\Fscr}$,  it follows easily that the
optimization problem \eqref{eqn:opt_h_rank} is equivalent to 
\begin{equation}
	\label{eqn:opt_h_rank_noConstraint}
	\inf_{({a},\vch)\in\Vk}\ \! \Jcal_{\!\Fscr}({a},\vch).
\end{equation}
For $({a},\vch)=({a}_{\min},\zero)$, where $\zero$ denotes the zero vector in $\Hk$, one can easily see that 
\begin{equation}
	\Jcal_{\!\Fscr}({a}_{\min},\zero)
	= 
	\sum_{i=1}^n
	\big(y_{t_i}-
	{a}_{\min}\Lu{t_i}(\vcf_{\rho})\big)^2 <\infty.
\end{equation}
Since, for any $({a},\vch)\in\Vk$, we have $\Jcal_{\!\Fscr}({a},\vch)\ge 0$,  it follows that 
\eqref{eqn:opt_h_rank_noConstraint} is bounded. 
However, this arguments does not guarantee the existence of solution for  
\eqref{eqn:opt_h_rank_noConstraint}.
In the following, we show that under mild conditions the optimization problem \eqref{eqn:opt_h_rank_noConstraint} admits a solution when the kernel $\kernel$ meets certain criteria.
%%%
Let function
$\Jcal:\Vk\to \RR\cup\{+\infty\}$ be defined as following  
\begin{equation}\label{eqn:J}
	\Jcal({a},\vch)  =  \Ecal_{\rho}({a},\vch)
	+
	\sum_{s=0}^{\infty}\delta_{{\Rscr}_s}({a},\vch)
	+
	\lambda \|\vch\|_{\Hk}^2,
\end{equation}
and consider the  optimization problem
\begin{equation}\label{eqn:opt_J_h}
	\inf_{({a},\vch)\in\Vk}\ \! \Jcal({a},\vch).
\end{equation}
One can easily see that $\Jcal_{\!\Fscr}= \Jcal + \delta_{\Fscr}$, which implies that 
\begin{equation}
\Jcal({a},\vch)\le \Jcal_{\!\Fscr}({a},\vch),
\qquad \forall\!\ ({a},\vch)\in\Vk.	
\end{equation}
Consequently, if \eqref{eqn:opt_J_h} has a solution $({a}^*,\vch^*)$ such that the operator $\Hankel(\vch^*)$ is finite-rank, then $({a}^*,\vch^*)$ is a solution for \eqref{eqn:opt_h_rank_noConstraint} as well. In other words, the identification problem with internal positivity side-information introduced in \eqref{eqn:opt_h_rank} admits a solution.
Hence, we need to study the solution behavior of \eqref{eqn:opt_J_h}.
To this end, we require several technical assumptions.
\begin{assumption}
	\label{ass:dom_pole_exciting}
	There exists $i\in\{0,1,\ldots,\nD-1\}$ such that $\Lu{t_i}(\vcf_{\rho})\ne 0$.
\end{assumption}
Note that if Assumption \ref{ass:dom_pole_exciting} does not hold, then, for all  $i\in\{0,1,\ldots,\nD-1\}$, we have $\Lu{t_i}(\vcf_{\rho})= 0$, which means that the dominant pole is not excited by the input signal $\vcu$. Accordingly, this assumption essentially says that the input signal excites the dominant pole. 
\begin{assumption}
	\label{ass:finite_input}
	There exists $\underline{t}\le 0$ such that $u_t=0$, for $t< \underline{t}$.
	%, and $u_t$ is known for $\underline{t}\le t\le t_{\nD-1}$.
\end{assumption}
Assumption \ref{ass:finite_input} is a technical assumption and introduced mainly for the sake of our  mathematical arguments, i.e., to guarantee the continuity of operator $\Lu{t_i}:\Hk\to\Rbb$, for $i\in\{0,1,\ldots,\nD-1\}$. Indeed, this assumption holds in realistic situations such as when the system is initially at rest. 
Based on these assumptions, we can show the existence and uniqueness for the solution of \eqref{eqn:opt_J_h}.
\begin{theorem}\label{thm:J_sol}
	Under Assumptions \ref{ass:dom_pole_exciting} and \ref{ass:finite_input}, optimization problem \eqref{eqn:opt_J_h} admits a unique solution, i.e., there exists $({a}^*,\vch^*)\in\Vk$ such that 
	\begin{equation}\label{eqn:J_h_opt}
		\Jcal({a}^*,\vch^*)< \Jcal({a},\vch),
	\end{equation}
	for any $({a},\vch)\in\Vk\backslash\big\{({a}^*,\vch^*)\big\}$.
\end{theorem}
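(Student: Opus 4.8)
The plan is to apply the direct method of the calculus of variations in the Hilbert space $\Vk$. First I would rewrite $\Jcal$ as $\Jcal(a,\vch) = \Ecal_{\rho}(a,\vch) + \delta_{\Ccal}(a,\vch) + \lambda\|\vch\|_{\Hk}^2$, where $\Ccal := \bigcap_{s\ge 0}\Rscr_s$ and $\sum_{s\ge 0}\delta_{\Rscr_s} = \delta_{\Ccal}$. The ingredients I need are then: (i) $\Ccal$ is a nonempty closed convex set; (ii) $\Jcal$ is convex, proper, and weakly lower semicontinuous; and (iii) $\Jcal$ is coercive on $\Vk$. From these, existence follows by taking a minimizing sequence, extracting a weakly convergent subsequence (the sequence is bounded by coercivity, and $\Vk$ is a Hilbert space), and passing to the limit using weak lower semicontinuity; uniqueness will come from strict convexity.

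For (i): $(a_{\min},\zero)\in\Ccal$, so $\Ccal\ne\emptyset$; each $\Rscr_s$ is convex, and it is closed because the coordinate functional $\vch\mapsto h_s = \inner{\vch}{\kernel_s}_{\Hk}$ is bounded on $\Hk$ by the reproducing property, hence $\Ccal$ is closed and convex and $\delta_{\Ccal}$ is proper, convex, and weakly lower semicontinuous. For (ii) the remaining work is the data-fit term: I would first show each $\Lu{t_i}:\Hk\to\Rbb$, $i=0,\dots,\nD-1$, is a bounded linear functional — this is where Assumption~\ref{ass:finite_input} enters, together with the continuous embedding $\Hk\subset\ellone$ from Theorem~\ref{thm:kernel_to_RKHS_def} — so each map $(a,\vch)\mapsto y_{t_i} - a\Lu{t_i}(\vcf_{\rho}) - \Lu{t_i}(\vch)$ is continuous and affine on $\Vk$, and $\Ecal_{\rho}$, being a finite sum of squares of such maps, is convex and continuous, hence weakly lower semicontinuous; the regularizer is convex and weakly lower semicontinuous by weak lower semicontinuity of the norm. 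Properness is clear since $\Jcal(a_{\min},\zero) = \sum_i\big(y_{t_i} - a_{\min}\Lu{t_i}(\vcf_{\rho})\big)^2 < \infty$.

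The core of the proof is coercivity (iii). Along any sequence with $\|(a_n,\vch_n)\|_{\Vk}\to\infty$ I would distinguish two cases. If $\|\vch_n\|_{\Hk}$ is unbounded, then $\Jcal(a_n,\vch_n)\ge\lambda\|\vch_n\|_{\Hk}^2\to\infty$ along a subsequence. If $\|\vch_n\|_{\Hk}$ stays bounded, then $|a_n|\to\infty$; picking $i_0$ with $\Lu{t_{i_0}}(\vcf_{\rho})\ne 0$ (Assumption~\ref{ass:dom_pole_exciting}) and using boundedness of $\Lu{t_{i_0}}$ on $\Hk$, the residual $y_{t_{i_0}} - a_n\Lu{t_{i_0}}(\vcf_{\rho}) - \Lu{t_{i_0}}(\vch_n)$ has modulus tending to infinity, so $\Jcal(a_n,\vch_n)\ge\Ecal_{\rho}(a_n,\vch_n)\to\infty$. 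Hence $\Jcal$ is coercive, the direct method applies, and the weak limit of a minimizing sequence lies in $\Ccal$ because the finite infimum forces $\delta_{\Ccal}$ to be finite there.

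For uniqueness I would argue by contradiction: if $(a_1,\vch_1)$ and $(a_2,\vch_2)$ are distinct minimizers, convexity of $\Jcal$ and of $\Ccal$ makes their midpoint a minimizer too; strict convexity of $\vch\mapsto\lambda\|\vch\|_{\Hk}^2$ then forces $\vch_1=\vch_2=:\vch$, after which $a\mapsto\Ecal_{\rho}(a,\vch) = \sum_i\big(y_{t_i}-\Lu{t_i}(\vch)-a\Lu{t_i}(\vcf_{\rho})\big)^2$ is strictly convex because $\sum_i\big(\Lu{t_i}(\vcf_{\rho})\big)^2>0$ by Assumption~\ref{ass:dom_pole_exciting}, forcing $a_1=a_2$, a contradiction. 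I expect the coercivity step to be the main obstacle: one must verify that the coupling between $a$ and $\vch$ inside $\Ecal_{\rho}$ cannot defeat boundedness, which is precisely what Assumptions~\ref{ass:dom_pole_exciting} and~\ref{ass:finite_input} prevent; a lesser technical point is treating the infinite constraint family $\{\Rscr_s\}_{s\ge0}$ as the single closed convex set $\Ccal$.
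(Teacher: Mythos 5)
Your proposal is correct and follows essentially the same route as the paper: the same regrouping of the infinite constraint family into a single nonempty closed convex set, the same use of the reproducing property together with Assumption \ref{ass:finite_input} to realize $\Lu{t_i}$ and the coordinate functionals as bounded linear functionals on $\Vk$, and the same strict-convexity argument driven by Assumption \ref{ass:dom_pole_exciting} for uniqueness. The one substantive difference is that you make coercivity explicit and run the direct method by hand, whereas the paper asserts existence from properness, convexity and lower semicontinuity alone with a citation; since those three properties do not by themselves guarantee a minimizer in an infinite-dimensional Hilbert space, your boundedness-of-minimizing-sequences argument (the term $\lambda\|\vch\|_{\Hk}^2$ controls $\vch$, and Assumption \ref{ass:dom_pole_exciting} then controls $a$ through the data-fit residual at $t_{i_0}$) supplies exactly the ingredient the paper leaves implicit.
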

\begin{proof}
Let set $\Rscr\subset \Vk$ be defined as ${\Rscr} = \bigcap_{s=0}^{\infty}{\Rscr}_s$.	
Accordingly, one can see that $\sum_{s=0}^{\infty}\delta_{{\Rscr}_s} = \delta_{{\Rscr}}$, and hence, we have
\begin{equation}\label{eqn:J_P}
	\Jcal({a},\vch)  =  \Ecal_{\rho}({a},\vch)
	+
	\lambda \|\vch\|_{\Hcal_{\bbk}}^2
	+
	\delta_{{\Rscr}}({a},\vch).
\end{equation}
With respect to each $s\in\Zbb_+$, define set $\Qcal_s\subset\Vk$ as 
\begin{equation}
	\Qcal_s:= 
	\Big\{
	\big({a},(h_s)_{s\in\Zbb_+}\!\big)\in\Vk \!\ \Big| \!\  
	h_s+{a}\rho^s\ge 0
	\Big\}.	
\end{equation}
For any $s\in\Zbb_+$, due to the reproducing property of the kernel, we have
\begin{equation}
\begin{split}
h_s + {a} \rho^s 
&= \inner{\vch}{\kernel_s}_{\Hk}+{a}\rho^s
%\\&
=\inner{({a},\vch)}{(\rho^s,\kernel_s)}_{\Vk}. 	
\end{split}%\ \  \forall \!\ s\in\Zbb_+.
\end{equation}
Therefore, we know that $\Qcal_s\subset\Vk$ is a half-space, and hence, it is a non-empty, closed and convex subset of $\Vk$, for all $s\in\Zbb_+$.
Note that $[{a}_{\min},\infty)\times\Hk$ is a non-empty, closed and convex subsets of $\Vk$.
One can see that 
\begin{equation}
	{\Rscr} =  \big(\capOp_{s=0}^{\infty}\Qcal_s\big) \!\ \cap \!\ \big([{a}_{\min},\infty)\times\Hk\big),
\end{equation} 
and also, we know that $({a}_{\min},\zero)$ belongs to $[{a}_{\min},\infty)\times\Hk$ and $\Qcal_s$, for each $s\in\Zbb_+$.
Therefore, ${\Rscr}$ is  a non-empty, closed and convex subset of $\Vk$. Consequently, it follows that
$\delta_{{\Rscr}}:\Vcal\to\Rbb\cup\{+\infty\}$  is a proper, convex and lower semi-continuous function, where we have ${\delta_{{\Rscr}}({a}_{\min},\zero)=0}$.
With respect to each $i\in\{0,1,\ldots, \nD-1\}$, define $\varphi_i$ as 
\begin{equation}\label{eqn:phi_i}
\varphi_i := \sum_{s=0}^{t_i-\underline{t}}u_{t_i-s}\kernel_s
=u_{t_i}\kernel_0 +u_{t_i-1}\kernel_1 +\ldots+u_{\underline{t}}\kernel_{t_i-\underline{t}}.	
\end{equation}
%$\varphi_i = \sum_{s=0}^{t_i-\underline{t}}u_{t_i-s}\kernel_s$.
Since 
$\Hk$ is a linear space which contains the sections of the kernel, we know that $\varphi_i\in\Hk$, for each $i\in\{0,1,\ldots,\nD-1\}$.
Moreover, from Assumption \ref{ass:finite_input}, reproducing property of kernel and the linearity property of inner product, it follows that
\begin{equation}\label{eqn:Lu_ti_inner_phi_i}
\begin{split}
\Lu{t_i}(\vch)
&=
\sum_{s=0}^{t_i-\underline{t}}\inner{\vch}{\kernel_s}_{\Hk} u_{t_i-s}
\\&=
\langle\vch,\sum_{s=0}^{t_i-\underline{t}}\kernel_s u_{t_i-s}\rangle_{\Hk}
= \inner{\vch}{\varphi_i}_{\Hk}.
\end{split}
\end{equation}
Accordingly, we have 
\begin{equation}
{a}\Lu{t_i}+\Lu{t_i}(\vch) = 
{a}\Lu{t_i}+\inner{\vch}{\psi_i}_{\Hk} = 
\inner{({a},\vch)}{\psi_i}_{\Vk},
\end{equation}
where $\psi_i\in\Vk$ is the vector defined as $\psi_i = (\Lu{t_i}(\vcf_{\rho}),\varphi_i)$, for $i=0,1,\ldots,\nD-1$.
Therefore, one can see that
\begin{equation*}
	\Jcal({a},\vch)
	\!=\!
	\sum_{i=0}^{\nD-1}\!\big(y_{t_i}\!-\!\inner{({a},\vch)}{\psi_i}_{\Vk}\big)^2 + \lambda\|\vch\|_{\Hk}^2 \!+\! \delta_{{\Rscr}}({a},\vch),
\end{equation*}
Accordingly, since $\delta_{{\Rscr}}$ is proper, convex and lower semi-continuous, and also, due to the fact that
\begin{equation}	
	\Jcal({a}_{\min},\zero)
	= 
	\sum_{i=1}^n
	\big(y_{t_i}-
	{a}_{\min}\Lu{t_i}(\vcf_{\rho})\big)^2 <\infty,	
\end{equation}
we know that $\Jcal$ is a proper, convex and lower semi-continuous function.
This implies that \eqref{eqn:opt_J_h} has a solution \cite{peypouquet2015convex}.
Define bilinear operator $\mxQ:\Vk\times\Vk\to\Rbb$  as 
\begin{equation}
	\begin{split}
		\mxQ\big(&({a}_1,\vch_1),({a}_2,\vch_2)\big)
		\\=&\!\sum_{i=0}^{\nD-1}
		\inner{({a}_1,\vch_1)}{\psi_i}_{\Vk}
		\inner{\psi_i}{({a}_2,\vch_2)}_{\Vk}
		\!+\!
		\lambda\inner{\vch_1}{\vch_2}_{\Hk}.
	\end{split}
\end{equation}
For any $({a},\vch)\in\Vk$, one can easily that 
\begin{equation}
\mxQ\big(({a},\vch),({a},\vch)\big)
= \sum_{i=0}^{\nD-1}
\inner{({a},\vch)}{\psi_i}_{\Vk}^2
\!+\!
\lambda\|\vch\|_{\Hk}^2\ge 0.
\end{equation}
Moreover, since $\lambda$ is a positive real scalar, if $\mxQ\big(({a},\vch),({a},\vch)\big)=0$, then,  we need to have $\vch=0$ and $\inner{({a},\vch)}{\psi_i}_{\Vk}=0$, for all $i=0,1,\ldots,\nD-1$, which implies that ${a} \Lu{t_i}(\vcf_{\rho})=0$.
Subsequently, due to Assumption~\ref{ass:dom_pole_exciting}, we have ${a}=0$, i.e.,  $({a},\vch) = (0,\zero)$. Based on this argument, we know that $\mxQ$ is a positive definite bilinear operator.
Therefore, the function $f:\Vk\to\Rbb$, defined as 
$f(\vcv)=\mxQ(\vcv,\vcv)$, for all $\vcv\in\Vk$, is strictly convex \cite{peypouquet2015convex}.
Note that we have
\begin{equation}
\Jcal({a},\vch) = 
f({a},\vch)-2
\mxL({a},\vch)
+
\sum_{i=0}^{\nD-1}y_{t_i}^2 
+
\delta_{{\Rscr}}({a},\vch),
\end{equation}
where $\mxL:\Vk\to\Rbb$  is the bounded linear operator defined as
\begin{equation}
	\mxL({a},\vch)
	=\sum_{i=0}^{\nD-1}
	y_{t_i}\inner{({a},\vch)}{\psi_i}_{\Vk}.
\end{equation}
Therefore, since $f$ is strictly convex, $\mxL$ is linear, and, $\delta_{{\Rscr}}$ is convex, we know that $\Jcal:\Vk\to\Rbb$ is a strictly convex function, and consequently, the the solution of optimization problem \eqref{eqn:opt_J_h} is unique \cite{peypouquet2015convex}.
This concludes the proof.
\end{proof}
Due to Theorem \ref{thm:J_sol}, 
we know that the convex program \eqref{eqn:opt_J_h} has a unique solution $({a}^*,\vch^*)$.
Meanwhile, one should note that \eqref{eqn:opt_J_h} is an infinite dimensional optimization problem with infinite number of constraints. Thus, obtaining the solution  $({a}^*,\vch^*)$ is not straightforward.
On the other hand, since $\vch^*$ belongs to the set of stable responses $\Hk$ dominated by $\vcf_{\rho}$,  one may intuitively expect that 
$\vch^*\in{\Rscr}_m$, for large enough $m\in\Zbb_+$. 
In other words, the solution to the optimization problem \eqref{eqn:opt_J_h} is determined by a finite number of constraints and the remaining constraints are unnecessary.  
In order to formalize this idea, let function
$\Jcal_m:\Vk\to \RR\cup\{+\infty\}$ be defined as   
\begin{equation}\label{eqn:Jm}
	\Jcal_m({a},\vch) =  \Ecal_{\rho}({a},\vch)
	+
	\sum_{s=0}^{m}\delta_{{\Rscr}_s}({a},\vch)
	+
	\lambda \|\vch\|_{\Hk}^2,
\end{equation}
and, consider the following program
\begin{equation}\label{eqn:min_Jm_h}
		\inf_{({a},\vch)\in\Vk}\ \! \Jcal_m({a},\vch).
\end{equation}
Note that \eqref{eqn:min_Jm_h} is equivalent to
\begin{equation}\label{eqn:min_Jm_constraint}
	\!\!\!\!\!\!\!\!
	\begin{array}{cl}
		\minOp_{{a}\in\Rbb,\vch\in \Hk} \!\!\!&\!\!\! 
		\sum_{i=0}^{\nD-1}
		\!\big(
		y_{t_i}-{a}\Lu{t_i}(\vcf_{\rho})
		-\Lu{t_i}(\vch)
		\big)^2 \!\!+\! \lambda \|\vch\|_{\Hk}^2,
		\!\!\!\!\!\!\!\!\!\!\\
		\text{s.t.}
		\!\!\!&\!\!\!
		h_t+{a}\rho^t\ge 0, \quad  \forall\!\ t\in\{0,1,\ldots, m\},
		\\
		\!\!\!&\!\!\!
		{a}\ge{a}_{\min}.
	\end{array} 
\end{equation}
The next theorem guarantees the existence and uniqueness of solution for optimization problem \eqref{eqn:min_Jm_h}, or equivalently, for program \eqref{eqn:min_Jm_constraint}.
\begin{theorem}\label{thm:Jm_sol}
	Under the assumptions of Theorem \ref{thm:J_sol}, 
	for each $m\in\Zbb_+$, problem \eqref{eqn:min_Jm_h} admits unique solution  $({a}^{\mt},\vch^{\mt})$. 
\end{theorem}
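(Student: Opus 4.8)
The plan is to repeat the proof of Theorem~\ref{thm:J_sol} almost verbatim, replacing the infinite intersection $\Rscr=\bigcap_{s=0}^{\infty}\Rscr_s$ by the finite intersection $\Rscr^{\mt}:=\bigcap_{s=0}^{m}\Rscr_s$. Since cutting the index set down to $\{0,1,\ldots,m\}$ only simplifies the topological bookkeeping, I expect no genuine obstacle; the one point that must be stated rather than copied is that a \emph{finite} intersection of closed convex sets is automatically closed and convex, which is precisely what makes the passage from $\Jcal$ to $\Jcal_m$ strictly easier.

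First I would note that $\sum_{s=0}^{m}\delta_{\Rscr_s}=\delta_{\Rscr^{\mt}}$, so that
\begin{equation*}
\Jcal_m({a},\vch)=\Ecal_{\rho}({a},\vch)+\lambda\|\vch\|_{\Hk}^2+\delta_{\Rscr^{\mt}}({a},\vch).
\end{equation*}
Using the reproducing property exactly as in Theorem~\ref{thm:J_sol}, each $\Qcal_s$ for $s\in\{0,\ldots,m\}$ is a closed half-space of $\Vk$, and $[{a}_{\min},\infty)\times\Hk$ is closed and convex; hence $\Rscr^{\mt}=\big(\capOp_{s=0}^{m}\Qcal_s\big)\cap\big([{a}_{\min},\infty)\times\Hk\big)$ is a finite intersection of non-empty closed convex sets, all containing $({a}_{\min},\zero)$, and is therefore a non-empty closed convex subset of $\Vk$. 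Consequently $\delta_{\Rscr^{\mt}}$ is proper, convex and lower semi-continuous, with $\delta_{\Rscr^{\mt}}({a}_{\min},\zero)=0$.

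Next I would reuse the representation of the fitting term: under Assumption~\ref{ass:finite_input}, with $\varphi_i$ as in \eqref{eqn:phi_i} and $\psi_i:=(\Lu{t_i}(\vcf_{\rho}),\varphi_i)\in\Vk$, one has ${a}\Lu{t_i}(\vcf_{\rho})+\Lu{t_i}(\vch)=\inner{({a},\vch)}{\psi_i}_{\Vk}$, so that
\begin{equation*}
\Jcal_m({a},\vch)=\sum_{i=0}^{\nD-1}\big(y_{t_i}-\inner{({a},\vch)}{\psi_i}_{\Vk}\big)^2+\lambda\|\vch\|_{\Hk}^2+\delta_{\Rscr^{\mt}}({a},\vch).
\end{equation*}
Since $\Jcal_m({a}_{\min},\zero)=\sum_{i=0}^{\nD-1}\big(y_{t_i}-{a}_{\min}\Lu{t_i}(\vcf_{\rho})\big)^2<\infty$, the function $\Jcal_m$ is proper; being the sum of a continuous convex quadratic and a proper convex lower semi-continuous indicator, it is proper, convex and lower semi-continuous, and, as in Theorem~\ref{thm:J_sol}, coercive (the term $\lambda\|\vch\|_{\Hk}^2$ controls $\vch$ and, via Assumption~\ref{ass:dom_pole_exciting}, the data term controls ${a}$). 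Hence \eqref{eqn:min_Jm_h} admits a minimizer \cite{peypouquet2015convex}.

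Finally, for uniqueness I would verify strict convexity in the same way as before: the quadratic part of $\Jcal_m$ is $f({a},\vch)=\mxQ\big(({a},\vch),({a},\vch)\big)=\sum_{i=0}^{\nD-1}\inner{({a},\vch)}{\psi_i}_{\Vk}^2+\lambda\|\vch\|_{\Hk}^2$, and since $\lambda>0$ and some $\Lu{t_i}(\vcf_{\rho})\ne0$ by Assumption~\ref{ass:dom_pole_exciting}, the bilinear form $\mxQ$ is positive definite, so $f$ is strictly convex. Adding the affine term $-2\mxL({a},\vch)+\sum_{i=0}^{\nD-1}y_{t_i}^2$ and the convex indicator $\delta_{\Rscr^{\mt}}$ preserves strict convexity, so $\Jcal_m$ is strictly convex and its minimizer $({a}^{\mt},\vch^{\mt})$ is unique \cite{peypouquet2015convex}. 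None of these steps is an obstacle, precisely because every property established for $\Jcal$ in Theorem~\ref{thm:J_sol} depended on $\Rscr$ only through its being non-empty, closed and convex, and on $\mxQ$ only through Assumption~\ref{ass:dom_pole_exciting} and $\lambda>0$ — none of which is affected by truncating the constraint set at index $m$.
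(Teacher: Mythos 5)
Your proposal is correct and is essentially the paper's own proof: the paper simply defines $\Pscr^{\mt}=\bigcap_{s=0}^{m}\Rscr_s$ and states that the argument of Theorem~\ref{thm:J_sol} goes through verbatim with $\Rscr$ replaced by this finite intersection. Your write-up just makes explicit the (entirely routine) observations the paper leaves implicit, namely that a finite intersection of closed convex sets containing $({a}_{\min},\zero)$ is non-empty, closed and convex, and that the strict-convexity argument is unaffected by truncating the constraint index set.
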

\begin{proof}
Define set $\Pscr^{\mt}\subset\Vk$ as $\Pscr^{\mt} = \bigcap_{s=0}^{m}{\Rscr}_s$.
By replacing $\Pscr$ with $\Pscr^{\mt}$ in the proof of Theorem \ref{thm:J_sol} and then repeating the same steps, the claim follows. 
\end{proof}
Once the existence and uniqueness for the solution of \eqref{eqn:min_Jm_h} are established by Theorem \ref{thm:Jm_sol}, a reasonable concern is
the asymptotic behavior of $({a}^{\mt},\vch^{\mt})$, especially with respect to $({a}^*,\vch^*)$.
The next theorem reveals this link saying that the solution $({a}^{\mt},\vch^{\mt})$ coincides with $({a}^{*},\vch^*)$ when $m$ is large enough.

\begin{theorem}\label{thm:Jm_star_sol}
Let Assumptions %\ref{ass:kernel_dominated}, 
\ref{ass:dom_pole_exciting} and \ref{ass:finite_input} hold. Then, the following statements hold:\\	
\emph{i)} 
%There exists non-negative integer $m$ such that we have $\vch^{\mt} = \vch^{*}$.\\
Under Assumption \ref{ass:kernel_dominated}, there is a non-negative integer $m$ such that ${a}^{\mt} = {a}^{*}$ and $\vch^{\mt} = \vch^{*}$.\\
\emph{ii)} For a non-negative integer $m$, one has $({a}^{\mt},\vch^{\mt}) = ({a}^{*},\vch^*)$ if and only if  $\Jcal({a}^{\mt},\vch^{\mt})<\infty$.\\
\emph{iii)} 
If $({a}^{\mt},\vch^{\mt}) = ({a}^{*},\vch^*)$, for a non-negative integer $m$, then ${a}^{\mbt} = {a}^{*}$ and $\vch^{\mbt} = \vch^{*}$, for every $\ol{m}\ge m$.
%For non-negative integer $m$, $({a}^{\mt},\vch^{\mt}) = ({a}^{*},\vch^*)$ implies that  ${a}^{\mbt} = {a}^{*}$ and $\vch^{\mbt} = \vch^{*}$, for all $\ol{m}\ge m$.
\end{theorem}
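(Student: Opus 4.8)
The plan is to establish (ii) first, because (i) and (iii) both reduce to it, and then to deduce (iii) and finally tackle (i). Two elementary facts will be used throughout. First, passing from $\Jcal_m$ to $\Jcal_{\ol m}$ to $\Jcal$ only adds non-negative indicator terms, so $\Jcal_m\le\Jcal_{\ol m}\le\Jcal$ pointwise on $\Vk$ whenever $m\le\ol m$, and $\Jcal=\Jcal_m+\sum_{s=m+1}^{\infty}\delta_{{\Rscr}_s}$. Second, $({a}_{\min},\zero)$ lies in every ${\Rscr}_s$, so $\Jcal({a}_{\min},\zero)=\Jcal_m({a}_{\min},\zero)=\sum_{i=0}^{\nD-1}(y_{t_i}-{a}_{\min}\Lu{t_i}(\vcf_{\rho}))^2=:M<\infty$ for all $m$; hence, by optimality, $\Jcal({a}^{*},\vch^{*})\le M$ and $\Jcal_m({a}^{\mt},\vch^{\mt})\le M$, and since a finite objective value forces each indicator term to vanish, $({a}^{*},\vch^{*})\in\bigcap_{s\ge0}{\Rscr}_s$ and $({a}^{\mt},\vch^{\mt})\in\bigcap_{s=0}^{m}{\Rscr}_s$.

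Given this, (ii) is immediate in the forward direction, since $({a}^{\mt},\vch^{\mt})=({a}^{*},\vch^{*})$ yields $\Jcal({a}^{\mt},\vch^{\mt})\le M<\infty$. For the converse I would argue: if $\Jcal({a}^{\mt},\vch^{\mt})<\infty$, then from $\Jcal({a}^{\mt},\vch^{\mt})=\Jcal_m({a}^{\mt},\vch^{\mt})+\sum_{s>m}\delta_{{\Rscr}_s}({a}^{\mt},\vch^{\mt})$ with $\Jcal_m({a}^{\mt},\vch^{\mt})\le M<\infty$, the tail sum is zero and $\Jcal({a}^{\mt},\vch^{\mt})=\Jcal_m({a}^{\mt},\vch^{\mt})$; then for all $({a},\vch)$, $\Jcal({a},\vch)\ge\Jcal_m({a},\vch)\ge\Jcal_m({a}^{\mt},\vch^{\mt})=\Jcal({a}^{\mt},\vch^{\mt})$, so $({a}^{\mt},\vch^{\mt})$ minimizes $\Jcal$ and Theorem~\ref{thm:J_sol} forces $({a}^{\mt},\vch^{\mt})=({a}^{*},\vch^{*})$. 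Part (iii) is the symmetric argument: if $({a}^{\mt},\vch^{\mt})=({a}^{*},\vch^{*})$ and $\ol m\ge m$, then $({a}^{*},\vch^{*})\in\bigcap_{s}{\Rscr}_s$ gives $\Jcal_{\ol m}({a}^{*},\vch^{*})=\Ecal_{\rho}({a}^{*},\vch^{*})+\lambda\|\vch^{*}\|_{\Hk}^{2}=\Jcal_m({a}^{*},\vch^{*})$, and for all $({a},\vch)$, $\Jcal_{\ol m}({a},\vch)\ge\Jcal_m({a},\vch)\ge\Jcal_m({a}^{\mt},\vch^{\mt})=\Jcal_{\ol m}({a}^{*},\vch^{*})$, so $({a}^{*},\vch^{*})$ minimizes $\Jcal_{\ol m}$ and Theorem~\ref{thm:Jm_sol} gives $({a}^{\mbt},\vch^{\mbt})=({a}^{*},\vch^{*})$.

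The substance is (i). By (ii) it is enough to produce one $m$ with $\Jcal({a}^{\mt},\vch^{\mt})<\infty$, that is, with $h^{\mt}_s+{a}^{\mt}\rho^s\ge0$ for all $s\ge0$: the constraints dropped in forming $\Jcal_m$ from $\Jcal$ must be inactive at the solution. The key is a bound on $\vch^{\mt}$ that is uniform in $m$: from $\lambda\|\vch^{\mt}\|_{\Hk}^{2}\le\Jcal_m({a}^{\mt},\vch^{\mt})\le M$ one gets $\|\vch^{\mt}\|_{\Hk}\le\sqrt{M/\lambda}=:B$ for every $m$. Then Assumption~\ref{ass:kernel_dominated} and the reproducing property (as in the proof of Theorem~\ref{thm:opt_h_rank}) give $|h^{\mt}_s|\le\kernel(s,s)^{1/2}\|\vch^{\mt}\|_{\Hk}\le C^{1/2}B\,\rhozero^{s}$, and since ${a}^{\mt}\ge{a}_{\min}$, $h^{\mt}_s+{a}^{\mt}\rho^s\ge\rho^s\big({a}_{\min}-C^{1/2}B(\rhozero/\rho)^{s}\big)$. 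Because $\rhozero/\rho\in(0,1)$, there is a fixed $T\in\Zbb_+$ with $C^{1/2}B(\rhozero/\rho)^{s}<{a}_{\min}$ for all $s\ge T$, so $h^{\mt}_s+{a}^{\mt}\rho^s>0$ for all $s\ge T$, for every $m$. Choosing $m=T$, the constraints for $s=0,\dots,T$ hold since $({a}^{\mt},\vch^{\mt})\in\bigcap_{s=0}^{m}{\Rscr}_s$ and those for $s\ge T$ hold by the uniform tail bound, so $({a}^{\mt},\vch^{\mt})\in\bigcap_{s\ge0}{\Rscr}_s$, hence $\Jcal({a}^{\mt},\vch^{\mt})=\Ecal_{\rho}({a}^{\mt},\vch^{\mt})+\lambda\|\vch^{\mt}\|_{\Hk}^{2}<\infty$, and (ii) finishes the proof.

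I expect the uniformity in the last paragraph to be the one delicate point: the threshold $T$ must not depend on $m$, or else the discarded constraints could become active again as $m$ increases; this is exactly what Assumption~\ref{ass:kernel_dominated} secures, since a merely $m$-dependent bound on $\|\vch^{\mt}\|_{\Hk}$ — or a stable kernel whose sections decay no faster than $\rho^s$ — would not force $C^{1/2}B(\rhozero/\rho)^{s}$ below ${a}_{\min}$ at an $m$-independent rate. The remaining steps are routine manipulation of the indicator decompositions together with the uniqueness statements of Theorems~\ref{thm:J_sol} and~\ref{thm:Jm_sol}.
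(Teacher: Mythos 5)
Your proof is correct and follows essentially the same route as the paper: the uniform bound $\|\vch^{\mt}\|_{\Hk}\le\sqrt{M/\lambda}$ from the feasible point $({a}_{\min},\zero)$, Cauchy--Schwartz with Assumption~\ref{ass:kernel_dominated} to show the dropped constraints are automatically inactive beyond an $m$-independent threshold, and the indicator-decomposition sandwiches combined with the uniqueness statements of Theorems~\ref{thm:J_sol} and~\ref{thm:Jm_sol}. The only cosmetic differences are that you prove (ii) first and feed it into (i) and (iii), and you use $\Jcal({a}_{\min},\zero)$ rather than the optimized constant $C_0$, which lets you skip the paper's explicit formula for $m_0$ and its $C_0=0$ case split (at the cost of not exhibiting the computable bound used later in Corollary~\ref{cor:m_star}).
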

\begin{proof}
Part i)	
For any ${a}\ge{a}_{\min}$ and any $m\in\Zbb_+$, we have
\begin{equation}
\Jcal_m({a},\zero) = 
\Jcal({a},\zero) = 
\sum_{i=1}^n
\big(y_{t_i}-
{a}\Lu{t_i}(\vcf_{\rho})\big)^2 \in [0,\infty).	
\end{equation}
Let ${a}_0$ be
$\argmin_{{a}\ge{a}_{\min}}\Jcal({a},\zero)$,
and, define $C_0$ as $C_0 := \Jcal({a}_0,\zero)$, which is equal to $\Jcal_m({a}_0,\zero)$, for any $m\in\Zbb_+$.
If $C_0=0$, then one can easily see that $({a}_0,\zero)$ is a solution for optimization problems \eqref{eqn:opt_J_h} and \eqref{eqn:min_Jm_h}, where, because of their uniqueness, $({a}_0,\zero)=({a}^{\mt},\vch^{\mt})=({a}^{*},\vch^{*})$.
Now, we consider the case $C_0>0$. 
Define $m_0$ as following
\begin{equation}\label{eqn:m_0}
	m_0 =\min\bigg\{
	m\in\Zbb_+\!\ \bigg|\!\ 
	m\ge \frac{1}{2}
	\frac{\ln(C_0C)-\ln({a}_{\min}^2\lambda)}{\ln(\rho)-\ln(\rhozero)}\bigg\}.
\end{equation}
Based on the definition of $m_0$, one can easily see that
\begin{equation}\label{eqn:pf_m_finite_03}
{a}_{\min}\rho^s
-\lambda^{-\frac{1}{2}}C_0^{\frac{1}{2}}C^{\frac{1}{2}}\rhozero^s \ge 0,
\end{equation}
for any $s\ge m_0$. 
Let $m$ be an arbitrary integer such that $m\ge m_0$, 
and, consider the convex program \eqref{eqn:min_Jm_h} with unique solution $({a}^{\mt},\vch^{\mt})$. 
We know that $h_s^{\mt}+{a}^{\mt}\rho^s\ge 0$, for each $s=0,1,\ldots,m$. 
On the other hand, for $s > m$, from the reproducing property and the Cauchy-Schwartz inequality,  it follows that
\begin{equation}\label{eqn:pf_m_finite_01}
\begin{split}
h_s^{\mt}+{a}^{\mt}\rho^s 
&\ge \inner{\vch^{\mt}}{\kernel_s}_{\Hk} + {a}^{\mt}\rho^s
\\&
\ge - \|\vch^{\mt}\|_{\Hk}\kernel(s,s)^{\frac{1}{2}}+ {a}_{\min}\rho^s.
\end{split}
\end{equation}
Note that due to $\Jcal_m({a}^{\mt},\vch^{\mt})\le \Jcal_m({a}_0,\zero)$, one has
\begin{equation}
	\lambda \|\vch^{\mt}\|_{\Hk}^2\le \Jcal_m({a}^{\mt},\vch^{\mt})\le C_0,
\end{equation}
which implies that $\|\vch^{\mt}\|_{\Hk}\le \lambda^{-\frac{1}{2}}C_0^{\frac{1}{2}}$.
Hence, according to \eqref{eqn:pf_m_finite_03}, \eqref{eqn:pf_m_finite_01} and Assumption \ref{ass:kernel_dominated},
we have
\begin{equation}\label{eqn:pf_m_finite_02}
h_s^{\mt}+{a}^{\mt}\rho^s 
\ge 
{a}_{\min}\rho^s
-\lambda^{-\frac{1}{2}}C_0^{\frac{1}{2}}C^{\frac{1}{2}}\rhozero^s\ge 0,
\end{equation}  
which implies that $h_s^{\mt}+{a}^{\mt}\rho^s\ge 0$, for all $s\in\Zbb_+$.
Therefore, we have
$\sum_{s \ge 1}\delta_{{\Rscr}_s}({a}^{\mt},\vch^{\mt})=0$, and subsequently, it follows that 
$\Jcal({a}^{\mt},\vch^{\mt}) = \Jcal_m({a}^{\mt},\vch^{\mt})$.
On the other hand, according to the definition of $({a}^*,\vch^*)$ and $({a}^{\mt},\vch^{\mt})$, we know that 
$\Jcal_m({a}^{\mt},\vch^{\mt})\le \Jcal_m({a}^{*},\vch^{*})$
and
$\Jcal({a}^{*},\vch^{*})\le \Jcal({a}^{\mt},\vch^{\mt})$.
Since $\Jcal_m({a},\vch)\le \Jcal({a},\vch)$, for all $({a},\vch)\in\Vk$, one can see that
\begin{equation}
		\Jcal_m({a}^{\mt},\vch^{\mt})\le 
		\Jcal_m({a}^{*},\vch^{*})\le
		\Jcal_m({a}^{\mt},\vch^{\mt}).
\end{equation}
Hence, we have $\Jcal_m({a}^{\mt},\vch^{\mt})=\Jcal_m({a}^{*},\vch^{*})$, and, subsequently, due to Theorem \ref{thm:Jm_sol}, one has $({a}^{\mt},\vch^{\mt})=({a}^{*},\vch^{*})$.
This concludes the proof of part i). 
\\
Part ii)
Consider the case that $\Jcal({a}^{\mt},\vch^{\mt})<\infty$. 
This implies that $\sum_{s \ge 1}\delta_{{\Rscr}_s}({a}^{\mt},\vch^{\mt})=0$, and consequently, $\Jcal({a}^{\mt},\vch^{\mt}) = \Jcal_m({a}^{\mt},\vch^{\mt})$. 
One can easily see that, 
\begin{equation}
	\Jcal_m({a},\vch)\le \Jcal_{\ol{m}}({a},\vch)\le \Jcal({a},\vch),
\end{equation}
for any $m\le \ol{m}$ and each $({a},\vch)\in\Vk$. 
Accordingly, due to the definition of $({a}^*,\vch^*)$ and $({a}^{\mt},\vch^{\mt})$, we have
\begin{equation}
\begin{split}
	\Jcal_m({a}^{\mt},\vch^{\mt})&\le 
	\Jcal_m({a}^{*},\vch^{*})\le
	\Jcal({a}^{*},\vch^{*})
	\\&\ \ \qquad\le
	\Jcal({a}^{\mt},\vch^{\mt})=
	\Jcal_m({a}^{\mt},\vch^{\mt}),
\end{split}
\end{equation}
which implies that $({a}^{*},\vch^{*})$ is a solution for \eqref{eqn:min_Jm_h}. 
Since, this solution is unique according to Theorem \ref{thm:Jm_sol}, we need to have $({a}^{\mt},\vch^{\mt}) = ({a}^{*},\vch^*)$. The converse is straightforward, and  concludes the proof of part ii). 
\\Part iii) 
From the previous part, we know that $\Jcal({a}^{\mt},\vch^{\mt})<\infty$. 
Consequently, we have $\sum_{s \ge 1}\delta_{{\Rscr}_s}({a}^{\mt},\vch^{\mt})=0$ which implies that
\begin{equation}
\Jcal_m({a}^{\mt},\vch^{\mt}) = 
\Jcal_{\ol{m}}({a}^{\mt},\vch^{\mt}) = 
\Jcal({a}^{\mt},\vch^{\mt}).	
\end{equation}
Accordingly, due to the definition of $({a}^{\mt},\vch^{\mt})$ and $({a}^{\mbt},\vch^{\mbt})$, we have
\begin{equation}
	\begin{split}
	\Jcal_{\ol{m}}({a}^{\mt},\vch^{\mt})
	&= 
	\Jcal_m({a}^{\mt},\vch^{\mt})
	\\&\le 
	\Jcal_m({a}^{\mbt},\vch^{\mbt})\le
	\Jcal_{\ol{m}}({a}^{\mbt},\vch^{\mbt}).
	\end{split}
\end{equation}
Therefore, $({a}^{\mt},\vch^{\mt})$ is the unique solution of optimization problem 
$\inf_{({a},\vch)\in\Vk}\ \! \Jcal_{\ol{m}}({a},\vch)$, and consequently, we have
\begin{equation}
({a}^{\mbt},\vch^{\mbt}) = ({a}^{\mt},\vch^{\mt}) = ({a}^{*},\vch^{*}),	
\end{equation}
where the second equality holds by assumption.
This concludes the proof of part iii) and the proof of Theorem \ref{thm:Jm_star_sol}.
\end{proof} 
The following observation is a direct result of Theorem \ref{thm:Jm_star_sol}.
\begin{corollary}\label{cor:m_star}
	Under the assumptions of Theorem \ref{thm:Jm_star_sol}, there exists a non-negative integer  $m^*$ such that $({a}^{\mt},\vch^{\mt}) = ({a}^{*},\vch^*)$ if and only if $m\ge m^*$.
	Indeed, for any $m<m^*$, there exists $s\in\Zbb_+$ such that 
	$h_s^{\mt}+{a}^{\mt}\rho^s <0$.
	This implies that
	\begin{equation}
		m^* =
		\min\big\{
		m\in\Zbb_+\!\ \big|\!\ 
		\Jcal({a}^\mt,\vch^\mt)<\infty\big\}.
	\end{equation}
	Moreover, for $m_0$ introduced in \eqref{eqn:m_0}, we have $m^*\le m_0$.
\end{corollary}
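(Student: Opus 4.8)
The plan is to read the statement off the three parts of Theorem~\ref{thm:Jm_star_sol}. First I would introduce the set
\[
 S := \big\{ m\in\Zbb_+ \ \big|\ \Jcal({a}^{\mt},\vch^{\mt})<\infty \big\},
\]
and check that $S\neq\emptyset$: by part~i) of Theorem~\ref{thm:Jm_star_sol} there is an $m\in\Zbb_+$ with $({a}^{\mt},\vch^{\mt})=({a}^*,\vch^*)$, and for this $m$ the implication ``equality $\Rightarrow$ finiteness'' in part~ii) gives $\Jcal({a}^{\mt},\vch^{\mt})<\infty$, so $m\in S$. As a non-empty subset of $\Zbb_+$, the set $S$ possesses a least element, which I would \emph{take as the definition} of $m^*$; this is exactly the claimed identity $m^*=\min\{m\in\Zbb_+\mid \Jcal({a}^\mt,\vch^\mt)<\infty\}$.

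Next I would prove the ``$({a}^{\mt},\vch^{\mt})=({a}^*,\vch^*)$ holds if and only if $m\ge m^*$'' dichotomy. For $m\ge m^*$: by definition of $m^*$ we have $\Jcal({a}^{(m^*)},\vch^{(m^*)})<\infty$, so part~ii) applied at level $m^*$ gives $({a}^{(m^*)},\vch^{(m^*)})=({a}^*,\vch^*)$, and part~iii) then propagates this equality to every $\ol m\ge m^*$, in particular to $m$. Conversely, for $m<m^*=\min S$ we have $m\notin S$, i.e.\ $\Jcal({a}^{\mt},\vch^{\mt})=\infty$, and the contrapositive of part~ii) yields $({a}^{\mt},\vch^{\mt})\neq({a}^*,\vch^*)$. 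For the sharper assertion I would then unpack $\Jcal({a}^{\mt},\vch^{\mt})=\infty$: since $\Ecal_{\rho}({a}^{\mt},\vch^{\mt})$ and $\lambda\|\vch^{\mt}\|_{\Hk}^2$ are finite, this forces $\sum_{s\ge0}\delta_{{\Rscr}_s}({a}^{\mt},\vch^{\mt})=\infty$, hence $({a}^{\mt},\vch^{\mt})\notin{\Rscr}_s$ for some $s\in\Zbb_+$; but $({a}^{\mt},\vch^{\mt})$ solves \eqref{eqn:min_Jm_constraint} and therefore satisfies ${a}^{\mt}\ge{a}_{\min}$, so the only inequality defining ${\Rscr}_s$ that can fail is $h_s^{\mt}+{a}^{\mt}\rho^s\ge0$, which gives $h_s^{\mt}+{a}^{\mt}\rho^s<0$ (necessarily with $s>m$, since the constraints of \eqref{eqn:min_Jm_constraint} already enforce non-negativity for the indices up to $m$).

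Finally, for $m^*\le m_0$ I would invoke the proof of part~i) of Theorem~\ref{thm:Jm_star_sol}, which in fact establishes $({a}^{\mt},\vch^{\mt})=({a}^*,\vch^*)$ for \emph{every} $m\ge m_0$; in particular $\Jcal({a}^{(m_0)},\vch^{(m_0)})=\Jcal({a}^*,\vch^*)<\infty$, so $m_0\in S$ and hence $m^*=\min S\le m_0$. I do not anticipate a real obstacle here, since the whole argument is bookkeeping built on Theorem~\ref{thm:Jm_star_sol}; the only point that needs a small argument is the translation of ``$\Jcal=\infty$'' into a violated positivity constraint, which relies on the finiteness of the empirical-loss and regularization terms together with the built-in bound ${a}^{\mt}\ge{a}_{\min}$ satisfied by the minimizer of \eqref{eqn:min_Jm_constraint}.
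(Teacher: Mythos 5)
Your argument is correct and is exactly the bookkeeping the paper has in mind: the paper gives no explicit proof of this corollary (it is asserted as a direct result of Theorem~\ref{thm:Jm_star_sol}), and your derivation --- taking $m^*$ to be the least element of $\{m\in\Zbb_+ \,:\, \Jcal({a}^\mt,\vch^\mt)<\infty\}$ (non-empty by parts~i) and~ii)), using part~ii) in both directions plus part~iii) for propagation to all larger $m$, and extracting a violated constraint $h_s^\mt+{a}^\mt\rho^s<0$ from $\Jcal({a}^\mt,\vch^\mt)=\infty$ via the finiteness of the loss and regularization terms together with the built-in bound ${a}^\mt\ge{a}_{\min}$ --- fills in all the steps correctly. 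The final bound $m^*\le m_0$, obtained by noting that the proof of part~i) in fact yields feasibility of $({a}^\mt,\vch^\mt)$ for \eqref{eqn:opt_J_h} for every $m\ge m_0$, is also exactly right.
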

\begin{corollary}\label{rem:ul_s}
Let $m\in\Zbb_+$ be such that the impulse response $\vcg^\mt := (h_s^\mt+{a}^\mt\rho^s)_{s=0}^\infty$ is not non-negative, i.e., $m<m^*$.
Then, based on the proof of Theorem \ref{thm:Jm_star_sol}, one can see that there exists $\underline{s}\in\Zbb_+$ such that $g_{\underline{s}}<0$ and $s<m_0$, where $m_0$ is introduced in \eqref{eqn:m_0}.
\end{corollary}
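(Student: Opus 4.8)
The plan is to observe that the lower bound on the entries of $\vcg^\mt$ derived inside the proof of Theorem~\ref{thm:Jm_star_sol} is in fact available for \emph{every} $m\in\Zbb_+$, not only for $m\ge m_0$, so that any negative entry of $\vcg^\mt$ --- which exists by Corollary~\ref{cor:m_star} because $m<m^*$ --- must carry an index strictly below $m_0$.

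First I would fix $m<m^*$ and, invoking Corollary~\ref{cor:m_star}, choose $\underline{s}\in\Zbb_+$ with $h_{\underline{s}}^\mt+{a}^\mt\rho^{\underline{s}}<0$. Since the constraints of \eqref{eqn:min_Jm_constraint} enforce $h_s^\mt+{a}^\mt\rho^s\ge 0$ for all $s\in\{0,1,\ldots,m\}$, it is immediate that $\underline{s}>m$; so the work left is to rule out indices $s\ge m_0$.

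Next I would reproduce the norm bound used in the proof of Theorem~\ref{thm:Jm_star_sol}: because $\Jcal_m({a}_0,\zero)=\Jcal({a}_0,\zero)=C_0$ independently of $m$, optimality of $({a}^\mt,\vch^\mt)$ gives $\lambda\|\vch^\mt\|_{\Hk}^2\le\Jcal_m({a}^\mt,\vch^\mt)\le C_0$, hence $\|\vch^\mt\|_{\Hk}\le\lambda^{-1/2}C_0^{1/2}$, and, importantly, this holds for \emph{any} $m$. Then for each $s>m$ the reproducing property, the Cauchy--Schwarz inequality and Assumption~\ref{ass:kernel_dominated} give, exactly as in \eqref{eqn:pf_m_finite_01}--\eqref{eqn:pf_m_finite_02},
\begin{equation*}
\begin{split}
h_s^\mt+{a}^\mt\rho^s
&\ge -\|\vch^\mt\|_{\Hk}\,\kernel(s,s)^{1/2}+{a}_{\min}\rho^s
\\&\ge {a}_{\min}\rho^s-\lambda^{-1/2}C_0^{1/2}C^{1/2}\rhozero^s ,
\end{split}
\end{equation*}
and, by \eqref{eqn:pf_m_finite_03} and the definition of $m_0$ in \eqref{eqn:m_0}, the right-hand side is non-negative whenever $s\ge m_0$. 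Combining this with the constraint bound for $s\le m$ shows that $h_s^\mt+{a}^\mt\rho^s\ge 0$ for every $s$ with $s\le m$ or $s\ge m_0$; since $\underline{s}$ violates this non-negativity, we conclude $m<\underline{s}<m_0$, in particular $\underline{s}<m_0$, which is the assertion.

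I do not expect a genuine obstacle here: the corollary is essentially a bookkeeping refinement of the computation already carried out for Theorem~\ref{thm:Jm_star_sol}(i). The only point that needs attention is that the estimate $\|\vch^\mt\|_{\Hk}\le\lambda^{-1/2}C_0^{1/2}$ --- and hence the non-negativity of $h_s^\mt+{a}^\mt\rho^s$ for $s\ge m_0$ --- must be read off for \emph{all} $m$, whereas in that proof it was invoked only in the regime $m\ge m_0$.
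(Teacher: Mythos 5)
Your proposal is correct and is essentially the argument the paper intends: the corollary is stated as a direct consequence of the proof of Theorem~\ref{thm:Jm_star_sol}(i), and you reproduce that computation, correctly isolating the one point that matters — the bound $\lambda\|\vch^\mt\|_{\Hk}^2\le\Jcal_m({a}^\mt,\vch^\mt)\le\Jcal_m({a}_0,\zero)=C_0$ holds uniformly in $m$, so \eqref{eqn:pf_m_finite_02} gives $h_s^\mt+{a}^\mt\rho^s\ge 0$ for all $s\ge m_0$ even when $m<m_0$, forcing any negative entry to lie in $\{m+1,\ldots,m_0-1\}$. No gap.
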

Due to Theorem \ref{thm:Jm_star_sol} and Corollary \ref{cor:m_star}, it suffices to consider only a finite number of constraints in optimization problem \eqref{eqn:opt_J_h} as in
\eqref{eqn:min_Jm_h}.
The feasible set of this optimization problem is infinite-dimension which makes the problem intractable at the current format.
In the remainder of this section, we derive a practical heuristic for obtaining the solution of  \eqref{eqn:opt_h_rank}. To this end, we need to introduce the \emph{representer theorem} and an additional definition.

\begin{theorem}[Representer Theorem, \cite{scholkopf2001generalized, dinuzzo2012representer}]\label{thm:rep_thm}
	 Let $e:\Rbb^n\to \Rbb\cup\{+\infty\}$ be a given function, $\Hcal$ be a Hilbert space endowed with inner product $\inner{\cdot}{\cdot}_{\Hcal}$, and $\vcw_1,\ldots,\vcw_n$ be given vectors in $\Hcal$. Consider the following optimization problem 
	\begin{equation}
		\label{eqn:opt_representer_thm}
		\min_{\vcw\in\Hcal} \    
		e(\inner{\vcw}{\vcw_1}_{\Hcal}, \ldots,
		\inner{\vcw}{\vcw_n}_{\Hcal})+ \kappa(\|\vcw\|_{\Hcal}),
	\end{equation}
	where $\kappa:\Rbb_+\to \Rbb$ is an increasing function.
	If \eqref{eqn:opt_representer_thm} has a solution,
	there also exists a solution in the linear span of $\vcw_1,\ldots,\vcw_n$.	
\end{theorem}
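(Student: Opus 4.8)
The plan is to use the classical orthogonal-decomposition argument. Set $V := \linspan\{\vcw_1,\ldots,\vcw_n\}$. Since $V$ is finite-dimensional, it is a closed subspace of $\Hcal$, so $\Hcal = V \oplus V^{\perp}$ and the orthogonal projection onto $V$ is well defined. Assuming \eqref{eqn:opt_representer_thm} has a solution $\vcw^{*}$, I would write $\vcw^{*} = \vcw_{\parallel} + \vcw_{\perp}$ with $\vcw_{\parallel} \in V$ and $\vcw_{\perp} \in V^{\perp}$, and then show that $\vcw_{\parallel}$ is itself a solution; since $\vcw_{\parallel}\in\linspan\{\vcw_1,\ldots,\vcw_n\}$, this is exactly the assertion.

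First I would check that the first term of the objective is unaffected by discarding $\vcw_{\perp}$. For each $i\in\{1,\ldots,n\}$, since $\vcw_i\in V$ while $\vcw_{\perp}\perp V$, we have $\inner{\vcw^{*}}{\vcw_i}_{\Hcal} = \inner{\vcw_{\parallel}}{\vcw_i}_{\Hcal} + \inner{\vcw_{\perp}}{\vcw_i}_{\Hcal} = \inner{\vcw_{\parallel}}{\vcw_i}_{\Hcal}$, so the $n$-tuple of inner products feeding into $e$ is the same at $\vcw^{*}$ and at $\vcw_{\parallel}$. Next I would bound the regularization term: by the Pythagorean identity, $\norm{\vcw^{*}}_{\Hcal}^{2} = \norm{\vcw_{\parallel}}_{\Hcal}^{2} + \norm{\vcw_{\perp}}_{\Hcal}^{2} \ge \norm{\vcw_{\parallel}}_{\Hcal}^{2}$, hence $\norm{\vcw_{\parallel}}_{\Hcal} \le \norm{\vcw^{*}}_{\Hcal}$, and since $\kappa$ is increasing, $\kappa(\norm{\vcw_{\parallel}}_{\Hcal}) \le \kappa(\norm{\vcw^{*}}_{\Hcal})$. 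Adding the two contributions shows that the objective at $\vcw_{\parallel}$ is no larger than at $\vcw^{*}$; as $\vcw^{*}$ is a minimizer, $\vcw_{\parallel}$ must be one as well, completing the proof.

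There is no genuine obstacle here. The only point that warrants care is the legitimacy of the splitting $\vcw^{*} = \vcw_{\parallel} + \vcw_{\perp}$, which rests on $V$ being closed — automatic from finite-dimensionality — so that the orthogonal complement and projection exist. I would also note in passing that only monotonicity (not strict monotonicity) of $\kappa$ is needed, and that the hypothesis ``if \eqref{eqn:opt_representer_thm} has a solution'' is essential: the theorem does not produce a minimizer, it merely relocates an already-existing one into the finite-dimensional span of $\vcw_1,\ldots,\vcw_n$.
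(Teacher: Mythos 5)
Your proof is correct: the orthogonal decomposition $\vcw^{*}=\vcw_{\parallel}+\vcw_{\perp}$ with respect to the (closed, finite-dimensional) span $V$, the invariance of the data-dependent term $e(\cdot)$ under dropping $\vcw_{\perp}$, and the Pythagorean bound combined with monotonicity of $\kappa$ together give exactly the claimed conclusion, and the argument is unaffected by $e$ being extended-real-valued since that term is literally unchanged. The paper gives no proof of this theorem --- it cites it from the literature --- and your argument is precisely the standard one used in those references, so there is nothing further to compare.
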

Before proceeding to the next theorem, we define
the matrices $\mxO\in \Rbb^{\nD\times \nD}$, 
$\mxL\in \Rbb^{\nD\times (m+1)}$ and 
$\mxK\in \Rbb^{(m+1)\times (m+1)}$ as following
\begin{equation}\label{eqn:OLK}
\!\!\!\!\!\!
	\begin{array}{ll}
		\mxO(i,j)  = 
		\Lu{t_{i-1}}(\Lu{t_{j-1}}(\kernel)), & 1\!\le\! i,j\!\le\! \nD, \\
		\mxL(i,j)  = 
		\Lu{t_{i-1}}(\kernel_{j-1}),
		& 1\!\le\! i\!\le\! \nD,1\!\le\! j\!\le\! m \! +\! 1, \\
		\mxK(i,j) = \kernel(i-1,j-1),  
		& 1 \!\le\! i,j \!\le\! m \! +\! 1. \\
	\end{array}
\!\!\!\!\!\!
\end{equation}
Also, the vectors $\vcy\in\Rbb^{\nD}$, $\vcb\in\Rbb^{\nD}$ and $\vcc\in\Rbb^{m+1}$ are defined respectively as
\begin{equation}\label{eqn:vec_y_b}
	\vcy := \begin{bmatrix}y_{t_i}\end{bmatrix}_{i=0}^{\nD-1}\,,
	\quad
	\vcb := \begin{bmatrix}\Lu{t_i}(\vcf_{\rho})\end{bmatrix}_{i=0}^{\nD-1}\,,	
\end{equation}
and
\begin{equation} \label{eqn:vec_c}
	\vcc := \begin{bmatrix}\rho^j\end{bmatrix}_{j=0}^{m}\,. 
\end{equation}

\begin{theorem}\label{thm:rep_posi_id}
Let Assumption \ref{ass:finite_input} hold.
Then, for any non-negative integer $m$, there exists 
$\vcx^\mt = [x_0^\mt,\ldots,x_{\nD+m}^\mt]^\tr\in\Rbb^{\nD+m+1}$ such that  the unique solution of \eqref{eqn:min_Jm_constraint}, $\vch^\mt=(h_t^{\mt})_{t=0}^{\infty}$, admits the following parametric representation 
\begin{equation}\label{eqn:hm_rep}
h^{\mt}_t = 
\sum_{i=0}^{\nD} x_i^\mt \Lu{t_i}(\kernel_t) + 
\sum_{s=0}^m  x_{\nD+s}^\mt \kernel_s(t). 
\end{equation}	
Moreover, $({a}^\mt,\vcx^\mt)$ is the solution of the following convex quadratic program
\begin{equation}\label{eqn:min_rho_x_finite_dim}
\!\!\!\!\!\!\!
\begin{array}{cl}
	\!\!\!
	\minOp_{{a}\in\Rbb,\!\ \vcx\in \Rbb^{\nD+m+1}}
	\!\!\!\!&\!\!
	\big\|
	\vcy
	-\vcb{a}
	-\begin{bmatrix}\mxO \!\!&\!\!  \mxL\end{bmatrix}
	\vcx
	\big\|^2 
	\!+\!
	\lambda
	\vcx^\tr
	\!\!
	\begin{bmatrix}
		\mxO 		\!\!&\!\!	\mxL\\
		\mxL^\tr	\!\!&\!\!	\mxK\\
	\end{bmatrix}
	\vcx,
	\!\!
	\\
	\!\!\!
	\mathrm{s.t.}
	\!\!\!\!&\!\! 
	\begin{bmatrix}
		\mxL^\tr \!\!&\!\! \mxK\\
	\end{bmatrix}
	\vcx+ \vcc{a}\ge 0,\\
	\!\!\!&\!\!
	{a}\ge {a}_{\min}.
\end{array} 
\!\!\!\!\!\!\!\!
\end{equation}	
\end{theorem}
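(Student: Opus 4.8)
The plan is to apply the representer theorem, Theorem~\ref{thm:rep_thm}, to confine the minimizer $\vch^\mt$ to a fixed $(\nD\!+\!m\!+\!1)$-dimensional subspace of $\Hk$, and then to read off \eqref{eqn:min_rho_x_finite_dim} by expressing \eqref{eqn:min_Jm_constraint} in coordinates on that subspace. The transcription of inner products into the matrices $\mxO,\mxL,\mxK$ is then pure bookkeeping via the reproducing property.

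First, by Theorem~\ref{thm:Jm_sol} the problem \eqref{eqn:min_Jm_h} --- equivalently \eqref{eqn:min_Jm_constraint} --- has the (unique) solution $({a}^\mt,\vch^\mt)\in\Vk$. Freezing the first coordinate at ${a}={a}^\mt$, the map $\vch\mapsto\Jcal_m({a}^\mt,\vch)$ is the sum of a convex least-squares term, the convex indicators $\delta_{\Rscr_s}({a}^\mt,\cdot)$ for $s=0,\dots,m$, and the strictly convex term $\lambda\|\vch\|_{\Hk}^2$; hence it has a unique minimizer over $\Hk$, necessarily $\vch^\mt$. Under Assumption~\ref{ass:finite_input} the functionals $\Lu{t_i}:\Hk\to\Rbb$ are bounded and, by \eqref{eqn:phi_i}--\eqref{eqn:Lu_ti_inner_phi_i}, satisfy $\Lu{t_i}(\vch)=\inner{\vch}{\varphi_i}_{\Hk}$; moreover $h_s=\inner{\vch}{\kernel_s}_{\Hk}$. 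Therefore $\Jcal_m({a}^\mt,\cdot)$ has the form $e\big(\inner{\vch}{\varphi_0}_{\Hk},\dots,\inner{\vch}{\varphi_{\nD-1}}_{\Hk},\inner{\vch}{\kernel_0}_{\Hk},\dots,\inner{\vch}{\kernel_m}_{\Hk}\big)+\lambda\|\vch\|_{\Hk}^2$, where $e:\Rbb^{\nD+m+1}\to\Rbb\cup\{+\infty\}$ gathers the least-squares term together with, as an indicator, the finitely many positivity constraints, and $\kappa(r)=\lambda r^2$ is increasing. Since this problem is solvable, Theorem~\ref{thm:rep_thm} provides a minimizer in $\linspan\{\varphi_0,\dots,\varphi_{\nD-1},\kernel_0,\dots,\kernel_m\}$, which by the uniqueness above is $\vch^\mt$. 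Expanding $\vch^\mt=\sum_{i=0}^{\nD-1}x_i^\mt\varphi_i+\sum_{s=0}^{m}x_{\nD+s}^\mt\kernel_s$, evaluating at $t$, and using $\varphi_i(t)=\inner{\varphi_i}{\kernel_t}_{\Hk}=\Lu{t_i}(\kernel_t)$, gives exactly the representation \eqref{eqn:hm_rep} with $\vcx^\mt=[x_0^\mt,\dots,x_{\nD+m}^\mt]^\tr$.

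Second, I would substitute this expansion into \eqref{eqn:min_Jm_constraint}. The reproducing property gives $\inner{\varphi_i}{\varphi_j}_{\Hk}=\Lu{t_i}(\Lu{t_j}(\kernel))$, $\inner{\varphi_i}{\kernel_s}_{\Hk}=\Lu{t_i}(\kernel_s)$ and $\inner{\kernel_s}{\kernel_{s'}}_{\Hk}=\kernel(s,s')$, so the Gram matrix of $\{\varphi_i\}_{i=0}^{\nD-1}\cup\{\kernel_s\}_{s=0}^{m}$ is the block matrix $\begin{bmatrix}\mxO&\mxL\\\mxL^\tr&\mxK\end{bmatrix}$ assembled from \eqref{eqn:OLK}. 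Consequently $\lambda\|\vch^\mt\|_{\Hk}^2=\lambda\,(\vcx^\mt)^\tr\begin{bmatrix}\mxO&\mxL\\\mxL^\tr&\mxK\end{bmatrix}\vcx^\mt$, the vector $(\Lu{t_i}(\vch^\mt))_{i=0}^{\nD-1}$ equals $\begin{bmatrix}\mxO&\mxL\end{bmatrix}\vcx^\mt$, and $(h_s^\mt)_{s=0}^{m}$ equals $\begin{bmatrix}\mxL^\tr&\mxK\end{bmatrix}\vcx^\mt$. With $\vcy$, $\vcb$, $\vcc$ as in \eqref{eqn:vec_y_b}--\eqref{eqn:vec_c}, the objective $\Ecal_\rho({a},\vch)+\lambda\|\vch\|_{\Hk}^2$ restricted to the span becomes the objective of \eqref{eqn:min_rho_x_finite_dim} in the variables $({a},\vcx)$, while the constraints $h_t+{a}\rho^t\ge0$ for $t=0,\dots,m$ and ${a}\ge{a}_{\min}$ become precisely the feasibility constraints of \eqref{eqn:min_rho_x_finite_dim}. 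Since $({a}^\mt,\vch^\mt)$ minimizes $\Jcal_m$ over all of $\Vk$ and $\vch^\mt$ lies in the span, $({a}^\mt,\vcx^\mt)$ minimizes the reduced problem, i.e., it solves the convex quadratic program \eqref{eqn:min_rho_x_finite_dim}.

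The step I expect to require most care is the invocation of Theorem~\ref{thm:rep_thm}: the regularizer $\lambda\|\vch\|_{\Hk}^2$ penalizes only the $\Hk$-component of the pair $({a},\vch)$, not the full $\Vk$-norm, and the positivity conditions are hard constraints rather than a loss of the type appearing in \eqref{eqn:opt_representer_thm}. Both points are handled by first fixing ${a}$ at its optimal value ${a}^\mt$, which reduces matters to a genuine RKHS regularization problem over $\Hk$, and then absorbing the active finitely many constraints into the extended-real-valued outer function $e$, which Theorem~\ref{thm:rep_thm} explicitly allows. Everything else --- matching the Gram blocks with $\mxO,\mxL,\mxK$, tracking index ranges, and verifying that feasibility is preserved under the change of variables --- is routine.
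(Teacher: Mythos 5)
Your proposal is correct and follows essentially the same route as the paper: reduce \eqref{eqn:min_Jm_constraint} to an RKHS problem of the form \eqref{eqn:opt_representer_thm} with an extended-real-valued outer function $e$ absorbing the finitely many constraints, invoke Theorem~\ref{thm:rep_thm} to place $\vch^\mt$ in $\linspan\{\varphi_0,\dots,\varphi_{\nD-1},\kernel_0,\dots,\kernel_m\}$, and transcribe via the Gram blocks $\mxO,\mxL,\mxK$. The only (immaterial) difference is how the scalar ${a}$ is removed before applying the representer theorem: you freeze it at its optimal value ${a}^\mt$, whereas the paper partially minimizes over ${a}$ inside the definition of $e$ in \eqref{eqn:e}; both reductions are valid and lead to the same conclusion.
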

\begin{proof}
For $s=0,\ldots,m$, let $\Acal_s$ be the following set
\begin{equation}
	\Acal_s=\big\{({a},x)\in\Rbb^2\!\ \big| \!\ x+{a}\rho^s\ge 0,\ {a}\ge {a}_{\min}\big\},	
\end{equation}
and, define the function $e:\Rbb^{\nD+m+1}\to \Rbb\cup\{+\infty\}$ such that for any $\vcx\in\Rbb^{\nD+m+1}$ we have
\begin{equation}\label{eqn:e}	
\begin{split} \!\!\!\!\! \!\!\!\!\! 	
\!
e&(x_0,\ldots,x_{\nD+m})=
\\
\!
&
\min_{{a}\in\Rbb}
\Bigg[
\!
\sum_{i=0}^{\nD-1}(y_{t_i}-{a}\Lu{t_i}(\vcf_{\rho})-x_i)
\!+\!
\sum_{s=0}^m\delta_{\Acal_s}\!({a},x_{\nD+s})
\Bigg].	\!\!\!\!\! 
\end{split}
\end{equation}
Also, for $i=0,\ldots,\nD-1$ and $s=0,\ldots,m$,
let $\varphi_i$ and $\varphi_{\nD+s}$ be defined respectively as in \eqref{eqn:phi_i} and $\varphi_{\nD+s}=\kernel_s$. 
Due to the reproducing property and \eqref{eqn:Lu_ti_inner_phi_i}, we know that
$\Lu{t_i}(\vch) = \inner{\vch}{\varphi_i}_{\Hk}$ and $h_s = \inner{\vch}{\kernel_s}_{\Hk}$,
for any $\vch = (h_s)_{s=0}^\infty$.
Accordingly, due to \eqref{eqn:e}, one can easily see that \eqref{eqn:min_Jm_h} is equivalent to the following optimization problem
\begin{equation}
	\min_{\vch\in\Hk}e(\inner{\vch}{\varphi_0}_{\Hk},\ldots,\inner{\vch}{\varphi_{\nD+m}}_{\Hk})
	+
	\lambda\|\vch\|_{\Hk}^2,
\end{equation}
with unique solution $\vch^\mt$.
Therefore, due to Theorem \ref{thm:rep_thm}, we know that  $\vch^\mt$ belongs to the span of $\varphi_0,\varphi_1,\ldots,\varphi_{\nD+m}$, i.e., there exists  $\vcx^\mt = [x_0^\mt,\ldots,x_{\nD+m}^\mt]^\tr\in\Rbb^{\nD+m+1}$ such that 
\begin{equation}\label{eqn:hm_sum_phi_i}
\vch^{\mt} = 
\sum_{i=0}^{\nD+m}x_i^\mt\varphi_i =
\sum_{i=0}^{\nD-1}x_i^\mt\varphi_i +
\sum_{s=0}^{m}x_{\nD+s}^\mt\kernel_s.
\end{equation}
Due to reproducing property, we know that $h_t^{\mt} = \inner{\vch^\mt}{\kernel_t}_{\Hk}$, for any $t\in\Zbb_+$.
Accordingly, from \eqref{eqn:Lu_ti_inner_phi_i}, \eqref{eqn:hm_sum_phi_i}, the linearity property of inner product, and the reproducing property, we have
\begin{equation}\label{eqn:Lu_hm_t}
\begin{split}
	h^{\mt}_t &=
	\big\langle\sum_{i=0}^{\nD-1}x_i^\mt\varphi_i +
	\sum_{s=0}^{m}x_{\nD+s}^\mt\kernel_s,\!\, \kernel_t \big\rangle_{\Hk} 
	\\&=	
	\sum_{i=0}^{\nD-1}x_i^\mt\inner{\varphi_i}{\kernel_t}_{\Hk} +
	\sum_{s=0}^{m}x_{\nD+s}^\mt\inner{\kernel_s}{\kernel_t}_{\Hk} 
	\\&=\sum_{i=0}^{\nD} x_i^\mt \Lu{t_i}(\kernel_t) + 
	\sum_{s=0}^m  x_{\nD+s}^\mt \kernel_s(t). 
\end{split}	
\end{equation}
Moreover, for $j=0,\ldots,\nD-1$, we have
\begin{equation}\label{eqn:Lu_tj_hm}
\begin{split}
	\Lu{t_j}(\vch^\mt)
	&= \Lu{t_j}(\sum_{i=0}^{\nD-1}x_i^\mt\varphi_i +
	\sum_{s=0}^{m}x_{\nD+s}^\mt\kernel_s)
	\\
	&=\sum_{i=0}^{\nD-1}x_i^\mt\Lu{t_j}(\varphi_i) +
	\sum_{s=0}^{m}x_{\nD+s}^\mt\Lu{t_j}(\kernel_s)
	\\
	&=\sum_{i=0}^{\nD-1}x_i^\mt\Lu{t_j}(\Lu{t_i}(\kernel)) +
	\sum_{s=0}^{m}x_{\nD+s}^\mt\Lu{t_j}(\kernel_s).
\end{split}	
\end{equation} 
Considering optimization problem \eqref{eqn:min_Jm_constraint}, which is equivalent to \eqref{eqn:min_Jm_h}, we replace $\vch$ with the parametric form given in \eqref{eqn:hm_sum_phi_i}.
Then, due to \eqref{eqn:OLK}, \eqref{eqn:Lu_hm_t}, \eqref{eqn:Lu_tj_hm}, and the definition of vectors $\vcb$, $\vcc$ and $\vch^\mt$,  the optimization problem \eqref{eqn:min_rho_x_finite_dim} follows.
This concludes the proof.
\end{proof}
\begin{remark}\label{rem:OLK_at_rest}\normalfont
Let the system be initially at rest and the sampling times be $\Tscr=\{0,1,\ldots,\nD-1\}$.
With respect to each $n_1,n_2\in\Zbb_+$, we define matrix $\mxK_{n_1,n_2}\in\Rbb^{n_1\times n_2}$ such that $\mxK_{n_1,n_2}(i,j)={\kernel(i-1,j-1)}$, for $i=1,\ldots,n_1$ and $j=1,\ldots,n_2$.
Then, one can easily see that 
\begin{equation}
	\mxO = \mxT_{\vcu}\mxK_{\nD,\nD}\mxT_{\vcu}^\tr\,,
	\quad
	\mxL = \mxT_{\vcu}\mxK_{\nD,m+1}\,, 
\end{equation}
and
\begin{equation}
	\mxK = \mxK_{m+1,m+1}\,,
\end{equation}
where $\mxT_{\vcu}\in\Rbb^{n\times n}$ is the Toeplitz matrix defined as $\mxT_{\vcu}=[u_{i-j}]_{i,j=1}^{\nD}$.
\end{remark}
Theorem \ref{thm:rep_posi_id} offers a practical way to solve problem  \eqref{eqn:Jm}.
Due to Theorem \ref{thm:Jm_star_sol} and Corollary \ref{cor:m_star}, we know that this solution coincides with the solution of \eqref{eqn:J} provided that $m\ge m^*$.
Nevertheless, compared to \eqref{eqn:J}, the main optimization problem  \eqref{eqn:J_F} has an additional constraint on  the rank of resulting Hankel operator being finite.
In the remainder of this section, we fill this gap by employing the notion of finite Hankel rank kernels.

\begin{definition}\label{def:finite_Hankel_rank_kernel}
	We call the Mercer kernel $\kernel:\Zbb_+\times\Zbb_+$ a \emph{finite Hankel rank} if, for each $s\in\Zbb_+$, the section of kernel at $s$ is a finite Hankel rank impulse response, i.e., 
	\begin{equation}
		\rank(\Hankel(\kernel_s)) < \infty, \quad \forall\!\ s\in\Zbb_+.
	\end{equation}
\end{definition}
The commonly employed stable kernels in the literature \cite{pillonetto2014kernel, dinuzzo2015kernels} are 
\emph{Tuned/Correlated} (TC), 
\emph{Diagonal/Correlated} (TC), and 
\emph{Stable Spline} (TC),
which are respectively denoted by $\kernelTC$, $\kernelDC$ and $\kernelSS$, 
and defined as  follows:
\begin{itemize}
	\item Tuned/Correlated (TC) kernel:
	\begin{equation}\label{eqn:kernelTC}
		\kernelTC(s,t) = 
		\beta^{\max(s,t)}, \quad \forall s,t\in \Zbb_+,
	\end{equation}
	where $\beta\in [0,1)$ is the hyperparameter of $\kernelTC$,
	\item Diagonal/Correlated (DC) kernel:
	\begin{equation}\label{eqn:kernelDC}
		\kernelDC(s,t) = 
		\beta^{\frac{s+t}{2}}\gamma^{|s-t|}, \quad \forall s,t\in \Zbb_+,
	\end{equation}
	where $\beta \in [0,1)$ and $\gamma\in [-1,1]$ are the hyperparameters of $\kernelDC$, and, 
	\item Stable Spline (SS) kernel:
	\begin{equation}
		\kernelSS(s,t) = 
		\frac{\beta^{s+t+\max(s,t)}}{2}
		-
		\frac{\beta^{3\max(s,t)}}{6},
		\quad \forall s,t\in \Zbb_+,
	\end{equation}
	where $\rho\in [0,1)$ is the hyperparameter of $\kernelSS$.
\end{itemize}
As shown by the next theorem, these common kernels are finite Hankel rank.
\begin{theorem}\label{thm:kernel_TC_DC_SS_are_FHR}
	Every finite support kernel, 
	$\kernelDC$, $\kernelTC$ and $\kernelSS$  is finite Hankel rank.
\end{theorem}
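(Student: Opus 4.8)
The plan is to reduce the statement to Kronecker's theorem (Theorem~\ref{thm:kronecker}), which says that a sequence has finite Hankel rank exactly when its $z$-transform is rational, and I would first record two elementary closure facts. (i) Every finitely supported sequence $\vc{e}=(e_t)_{t=0}^{\infty}$ has finite Hankel rank, because $\Hankel(\vc{e})$ has only finitely many nonzero anti-diagonals, so all but finitely many of its rows vanish and $\rank(\Hankel(\vc{e}))<\infty$ (equivalently, its $z$-transform is a polynomial in $z^{-1}$). (ii) The assignment $\vcg\mapsto\Hankel(\vcg)$ is linear and $\rank(A+B)\le\rank(A)+\rank(B)$, so the finite-Hankel-rank sequences in $\ellone$ form a linear space; it contains every geometric sequence $(\mu^t)_{t=0}^{\infty}$ with $|\mu|<1$, since $\mu^{i+j-2}=\mu^{i-1}\mu^{j-1}$ makes the corresponding Hankel operator rank one, with rational $z$-transform $1/(1-\mu z^{-1})$. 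Together, (i) and (ii) give: any sequence that agrees with a finite linear combination of such geometric sequences for all but finitely many indices has finite Hankel rank.

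Next I would fix $s\in\Zbb_+$ and write out the section of each kernel at $s$, splitting the index set into $\{t\ge s\}$ and $\{t<s\}$. For a finite support kernel the section $\kernel_s$ is itself finitely supported, so fact (i) applies. For the TC kernel, $\kernelTC(s,t)=\beta^{t}$ when $t\ge s$ and $\kernelTC(s,t)=\beta^{s}$ when $t<s$, so the section equals the geometric sequence $(\beta^t)_{t=0}^{\infty}$ plus a sequence supported on $\{0,\dots,s-1\}$. For the DC kernel with $\gamma\neq0$, using $|s-t|=t-s$ on $t\ge s$ gives $\kernelDC(s,t)=\beta^{s/2}\gamma^{-s}(\beta^{1/2}\gamma)^{t}$ there, a single geometric with ratio $\mu=\beta^{1/2}\gamma$ (and $|\mu|<1$), while on $t<s$ it takes a different closed form; hence the section is again a geometric sequence plus a finitely supported correction, and if $\gamma=0$ the section is supported on $\{s\}$ (with $0^{0}=1$), so fact (i) applies. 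For the SS kernel, on $t\ge s$ one has $\kernelSS(s,t)=\tfrac{1}{2}\beta^{s}(\beta^{2})^{t}-\tfrac{1}{6}(\beta^{3})^{t}$, a combination of the two geometric sequences with ratios $\beta^{2}$ and $\beta^{3}$, and on $t<s$ it differs only on finitely many indices. In each case the section has finite Hankel rank, so by Definition~\ref{def:finite_Hankel_rank_kernel} the kernel is of finite Hankel rank.

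The only steps needing care are verifying that the $t<s$ remainder is genuinely finitely supported in every case --- immediate once the two branches of $\max(\cdot,\cdot)$ (resp.\ of $|\cdot|$) are written out explicitly --- and the degenerate subcase $\gamma=0$ of the DC kernel; alternatively, to bypass Kronecker one performs a short geometric-series computation showing directly that each ``tail'' mixture of exponentials, extended to all $t\ge0$, has a rational $z$-transform. I do not anticipate a genuine obstacle: the essential point is simply that, apart from a finite head, every section of these kernels is a finite mixture of decaying geometric sequences.
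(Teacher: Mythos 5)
Your proposal is correct and follows essentially the same route as the paper's proof: each kernel section is written as a (mixture of) geometric sequence(s) of rank-one Hankel type plus a finitely supported correction, and rank subadditivity of the Hankel map finishes the argument. In fact you are slightly more complete than the paper, which only writes out the TC case in detail and asserts that DC and SS follow ``by similar arguments,'' whereas you carry out those computations explicitly (including the degenerate $\gamma=0$ subcase of the DC kernel).
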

\begin{proof}
See Appendix \ref{sec:pf_kernel_TC_DC_SS_are_FHR}.
\end{proof}

Based on the notion of finite Hankel rank kernel and our previous discussion, we can show when the solution of our estimation problem \eqref{eqn:opt_h_rank} can be obtained by solving \eqref{eqn:min_rho_x_finite_dim}.

\begin{theorem}\label{thm:rank_Hankel}
Under the assumptions of Theorem \ref{thm:Jm_star_sol}, if kernel $\kernel$ is finite Hankel rank, then the unique solution of \eqref{eqn:min_Jm_h}  
satisfies
\begin{equation}\label{eqn:rank_hm_finite}
\rank\big(\Hankel(\vch^\mt)\big) < \infty.	
\end{equation}
Moreover, $({a}^\mt,\vch^\mt)$ is a solution of \eqref{eqn:opt_h_rank} provided that $m\ge m^*$. 
\end{theorem}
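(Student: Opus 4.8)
The plan is to combine the finite-dimensional parametric representation of $\vch^\mt$ from Theorem~\ref{thm:rep_posi_id} with the linearity of the Hankel operator in its sequence argument, and then to invoke Corollary~\ref{cor:m_star} for the last assertion.

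First I would recall from Theorem~\ref{thm:rep_posi_id} that the unique solution of \eqref{eqn:min_Jm_h} can be written as $\vch^\mt=\sum_{i=0}^{\nD-1}x_i^\mt\varphi_i+\sum_{s=0}^{m}x_{\nD+s}^\mt\kernel_s$, and from \eqref{eqn:phi_i} that each $\varphi_i$ is a \emph{finite} linear combination of sections $\kernel_0,\ldots,\kernel_{t_i-\underline{t}}$, where finiteness of the index $t_i-\underline{t}$ is precisely what Assumption~\ref{ass:finite_input} guarantees. Hence $\vch^\mt$ belongs to the linear span of the finitely many sections $\kernel_0,\kernel_1,\ldots,\kernel_{M}$ with $M:=\max\!\big(m,\ \max_{0\le i\le\nD-1}(t_i-\underline{t})\big)$, i.e. $\vch^\mt=\sum_{s=0}^{M}c_s\kernel_s$ for suitable real scalars $c_0,\ldots,c_M$.

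Second I would use that the map $\vcg\mapsto\Hankel(\vcg)$ is linear, since the $(i,j)$ entry of $\Hankel(\vcg)$ is $g_{i+j-2}$. Therefore $\Hankel(\vch^\mt)=\sum_{s=0}^{M}c_s\,\Hankel(\kernel_s)$, so $\img\big(\Hankel(\vch^\mt)\big)\subseteq\sum_{s=0}^{M}\img\big(\Hankel(\kernel_s)\big)$ and consequently $\rank\big(\Hankel(\vch^\mt)\big)\le\sum_{s=0}^{M}\rank\big(\Hankel(\kernel_s)\big)$. Since $\kernel$ is a finite Hankel rank kernel, each $\rank(\Hankel(\kernel_s))$ is finite by Definition~\ref{def:finite_Hankel_rank_kernel}, so the right-hand side is finite and \eqref{eqn:rank_hm_finite} follows.

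Finally, for $m\ge m^*$, Corollary~\ref{cor:m_star} gives $({a}^\mt,\vch^\mt)=({a}^*,\vch^*)$. By the first part, $\vch^*=\vch^\mt$ has finite Hankel rank, so $({a}^*,\vch^*)$ is a finite-Hankel-rank solution of \eqref{eqn:opt_J_h}; as already noted in the discussion preceding the theorem, such a solution also solves \eqref{eqn:opt_h_rank_noConstraint}, which is equivalent to \eqref{eqn:opt_h_rank}. Hence $({a}^\mt,\vch^\mt)$ solves \eqref{eqn:opt_h_rank}. The only mildly delicate point is the bookkeeping needed to express $\vch^\mt$ as a finite combination of kernel sections — one must track the truncation index appearing in \eqref{eqn:phi_i} and verify that Assumption~\ref{ass:finite_input} keeps it finite; the rest is just the linearity of $\Hankel$ together with subadditivity of rank, plus a direct appeal to Corollary~\ref{cor:m_star}.
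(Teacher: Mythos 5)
Your proposal is correct and follows essentially the same route as the paper's own proof: express $\vch^\mt$ as a finite linear combination of kernel sections (using Assumption~\ref{ass:finite_input} to bound the truncation index, exactly as the paper does with $\ol{t}=\max\{m,t_{\nD-1}-\underline{t}\}$), apply linearity of $\Hankel$ and subadditivity of rank, and then use Corollary~\ref{cor:m_star} together with the observation that a finite-Hankel-rank minimizer of $\Jcal$ also minimizes $\Jcal_{\!\Fscr}$. The only cosmetic difference is that the paper re-derives the inequality chain $\Jcal({a}^\mt,\vch^\mt)\le\Jcal({a},\vch)\le\Jcal_{\!\Fscr}({a},\vch)$ inside the proof, whereas you cite the equivalent remark preceding the theorem.
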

\begin{proof}
Due to \eqref{eqn:phi_i} and \eqref{eqn:hm_sum_phi_i},
we know that
\begin{equation}\label{eqn:hm_sum_kernel_setion}
\begin{split}
\vch^{\mt} 
=
\sum_{i=0}^{\nD-1}x_i^\mt\sum_{s=0}^{t_i-\underline{t}}u_{t_i-s}\kernel_s
 +
\sum_{s=0}^{m}x_{\nD+s}^\mt\kernel_s.
\end{split}	
\end{equation}
Rearranging the terms in \eqref{eqn:hm_sum_kernel_setion}, 	
one can see that there exists real scalars $\ol{x}_0,\ldots,\ol{x}_{\ol{t}}$,
where $\ol{t} = \max\{m,t_{\nD-1}-\underline{t}\}$,
such that we have
\begin{equation}\label{eqn:h_m_sum_k}
\vch^{\mt} = \sum_{s=0}^{\ol{t}}\ol{x}_s\kernel_s
=
\ol{x}_0\kernel_0+\ldots+\ol{x}_{\ol{t}}\kernel_{\ol{t}}.	
\end{equation}
Therefore, we have
\begin{equation}
	\Hankel(\vch^{\mt})= \sum_{s=0}^{\ol{t}}\ol{x}_s
	\Hankel(\kernel_s),
\end{equation}
and subsequently, it follows that
\begin{equation}\label{eqn}
	\rank(\Hankel(\vch^{\mt}))\le \sum_{s=0}^{\ol{t}}
	\rank(\Hankel(\kernel_s))<\infty,
\end{equation}
i.e., \eqref{eqn:rank_hm_finite} holds.
For $m\ge m^*$, we know that $({a}^\mt,\vch^\mt)$ is the solution of \eqref{eqn:opt_J_h}. Accordingly, due to \eqref{eqn:rank_hm_finite}, we have
\begin{equation}\label{eqn:J_F_hm_J_hm}
\Jcal_{\!\Fscr}({a}^\mt,\vch^\mt) = \Jcal({a}^\mt,\vch^\mt)<\infty.
\end{equation}
On the other hand, for any $({a},\vch)\in\Vk$, one can see that 
\begin{equation}
\Jcal({a}^\mt,\vch^\mt) \le \Jcal({a},\vch) \le \Jcal_{\!\Fscr}({a},\vch).
\end{equation}
Hence, due to \eqref{eqn:J_F_hm_J_hm}, we have
$\Jcal_{\!\Fscr}({a}^\mt,\vch^\mt)\le \Jcal_{\!\Fscr}({a},\vch)$, for any  $({a},\vch)\in\Vk$,
i.e., $({a}^\mt,\vch^\mt)$ is a solution of \eqref{eqn:opt_h_rank}.
This concludes the proof.
\end{proof}
Based on the above discussion,
in order to solve the optimization problem 
\eqref{eqn:opt_h_rank}, it suffices to find the solution of the quadratic program \eqref{eqn:min_rho_x_finite_dim}
where $m \ge m^*$ and $\kernel$ is a given finite Hankel rank kernel such as $\kernelTC$, $\kernelDC$, or $\kernelSS$.
Corollary \ref{cor:m_star} provides a bound for $m^*$. In some special cases, we can provide a more practical bound.
\begin{theorem}\label{thm:mstar_TC}
Let the assumptions of Theorem \ref{thm:Jm_star_sol} hold. If $\kernel$ is either the TC kernel \eqref{eqn:kernelTC} or the DC kernel \eqref{eqn:kernelDC}, then we have $m^*\le t_{\nD-1}-\underline{t}+1$.
\end{theorem}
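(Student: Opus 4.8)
The plan is to reduce the statement to a pointwise positivity check on the tail of the solution, and then to exploit the very explicit form of the TC/DC kernel sections. By Corollary~\ref{cor:m_star} we have $m^*=\min\{m\in\Zbb_+\mid \Jcal({a}^\mt,\vch^\mt)<\infty\}$, and $\Jcal({a}^\mt,\vch^\mt)<\infty$ is equivalent to $g^\mt_t:=h^\mt_t+{a}^\mt\rho^t\ge 0$ for every $t\in\Zbb_+$ (all the indicator terms $\delta_{\Rscr_s}$ in \eqref{eqn:J} must then vanish). Since the constraints of \eqref{eqn:min_Jm_constraint} already guarantee $g^\mt_t\ge 0$ for $t\in\{0,\dots,m\}$, it suffices to produce a single integer $m\le t_{\nD-1}-\underline{t}+1$ for which $g^\mt_t\ge 0$ also holds for all $t>m$; then $m^*\le m\le t_{\nD-1}-\underline{t}+1$. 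I would try $m:=t_{\nD-1}-\underline{t}$ first and fall back to $m:=t_{\nD-1}-\underline{t}+1$ only where this is forced.

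Next I would read off the tail of $\vch^\mt$. For $m\ge t_{\nD-1}-\underline{t}$ one has $\bar t=\max\{m,t_{\nD-1}-\underline{t}\}=m$ in \eqref{eqn:h_m_sum_k}, so $\vch^\mt=\sum_{s=0}^{m}\bar x_s\kernel_s$ for some real $\bar x_0,\dots,\bar x_m$. Evaluating at $t\ge m\ge s$: for the TC kernel \eqref{eqn:kernelTC}, $\kernel_s(t)=\beta^{\max(s,t)}=\beta^t$; for the DC kernel \eqref{eqn:kernelDC} with $\gamma\neq 0$, $\kernel_s(t)=\beta^{(s+t)/2}\gamma^{t-s}=(\sqrt{\beta}\gamma)^{t}(\sqrt{\beta}/\gamma)^{s}$ (and if $\gamma=0$ then $\kernel_s(t)=0$ for $t>s$, so $g^\mt_t={a}^\mt\rho^t\ge 0$ trivially). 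In either nontrivial case this gives $h^\mt_t=h^\mt_m\,q^{\,t-m}$ for all $t\ge m$, with $q=\beta$ for TC and $q=\sqrt{\beta}\gamma$ for DC. Moreover Assumption~\ref{ass:kernel_dominated} applied to $\kernel(t,t)=\beta^t$ forces $\beta\le\rhozero^{2}<\rho^{2}$, hence $|q|\le\sqrt{\beta}<\rho$; this is the only use of the assumption, and it is exactly what makes the tail of $\vch^\mt$ decay strictly faster than $\rho^t$.

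It then remains to treat $g^\mt_t=h^\mt_m\,q^{\,t-m}+{a}^\mt\rho^t$ for $t>m$, with ${a}^\mt\ge {a}_{\min}>0$ and $|q|<\rho$. If $h^\mt_m\,q^{\,t-m}\ge 0$ the whole expression is positive. Otherwise, for $q\ge 0$ one necessarily has $h^\mt_m<0$, and the imposed constraint $g^\mt_m\ge 0$ yields $|h^\mt_m|\le {a}^\mt\rho^m$, so
\begin{equation*}
g^\mt_t\;\ge\;{a}^\mt\rho^t-|h^\mt_m|\,|q|^{\,t-m}\;\ge\;{a}^\mt\rho^m\big(\rho^{\,t-m}-|q|^{\,t-m}\big)\;\ge\;0 .
\end{equation*}
This settles the TC kernel and the DC kernel with $\gamma\ge 0$ (where $q\ge 0$), in fact with the slightly stronger bound $m^*\le t_{\nD-1}-\underline{t}$.

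I expect the genuinely delicate case to be the DC kernel with $\gamma<0$, and this is where the extra unit in the bound is spent. There $q=\sqrt{\beta}\gamma<0$, so $q^{\,t-m}$ alternates in sign; the displayed estimate still disposes of every $t$ with $t-m$ even, but for $t-m$ odd with $h^\mt_m>0$ the constraint $g^\mt_m\ge 0$ is vacuous and gives no control of $|h^\mt_m|$ relative to ${a}^\mt\rho^m$. The remedy is to make the first index $t\ge m$ at which $h^\mt_m\,q^{\,t-m}<0$ equal to $m$ itself — where a positivity constraint \emph{is} imposed — rather than $m+1$; this is a parity condition on $m$ together with the sign of $h^\mt_m$, and arranging it is what forces the choice $m=t_{\nD-1}-\underline{t}+1$ instead of $m=t_{\nD-1}-\underline{t}$. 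Carrying out this parity bookkeeping cleanly (and checking that the larger $m$ indeed satisfies the needed sign pattern) is the one step that is not routine; the remainder of the proof is the elementary estimate above together with the structural identities already established in Theorems~\ref{thm:rep_posi_id} and~\ref{thm:rank_Hankel}.
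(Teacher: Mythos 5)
Your reduction to checking $h^{\mt}_t+{a}^{\mt}\rho^t\ge 0$ for $t>m$, and your treatment of the TC kernel, are correct and essentially identical to the paper's proof: both exploit the fact that for $m\ge t_{\nD-1}-\underline{t}$ the representer expansion \eqref{eqn:h_m_sum_k} involves only $\kernel_0,\ldots,\kernel_m$, so that $h^{\mt}_t=q^{\,t-m}h^{\mt}_m$ for $t\ge m$ with $|q|<\rho$ by Assumption~\ref{ass:kernel_dominated}, and then propagate the single imposed constraint at $t=m$ down the geometric tail. Your side remark that this already yields $m^*\le t_{\nD-1}-\underline{t}$ for the TC kernel (and for DC with $\gamma\ge 0$) is also correct.

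The gap is exactly where you flag it: the DC kernel with $\gamma<0$, and your ``parity remedy'' does not close it. When $q=\sqrt{\beta}\gamma<0$ and $h^{\mt}_m>0$, the first tail index at which $h^{\mt}_t$ turns negative is $t=m+1$, where no constraint is imposed; you would need the upper bound $h^{\mt}_m\le {a}^{\mt}\rho^{m+1}/|q|$, whereas feasibility only supplies the one-sided bound $h^{\mt}_m\ge -{a}^{\mt}\rho^m$, which is vacuous for positive $h^{\mt}_m$. Crucially, the sign of $h^{\mt}_m$ is determined by the data and the optimization, not by your choice of $m$: replacing $m$ by $m+1$ produces a different optimizer $\vch^{\supscrpsm{m+1}}$ whose value at the new last constrained index has equally uncontrolled sign, so no amount of parity bookkeeping arranges the sign pattern you need. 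The only available upper bounds on $h^{\mt}_m$ come from $\|\vch^{\mt}\|_{\Hk}\le\lambda^{-1/2}C_0^{1/2}$ via Cauchy--Schwarz, and those force tail positivity only for $m\ge m_0$ as in \eqref{eqn:m_0} --- i.e., they reprove Corollary~\ref{cor:m_star}, not the sharper bound claimed here. In fairness, the paper's own proof writes out only the TC computation and dismisses DC with ``a similar argument,'' so you have put your finger on a genuine soft spot of the published argument; but as a proof of the stated theorem, your proposal is incomplete for the DC kernel with $\gamma<0$.
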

\begin{proof}
Let $m = t_{\nD-1}-\underline{t}+1$, and consider $({a}^\mt,\vch^\mt)$, the unique solution of \eqref{eqn:min_Jm_h}. For $t=0,1,\ldots,m$, we know that $h_t^\mt+{a}^\mt\rho^t\ge 0$. One the other hand, due to the definition of the TC kernel and \eqref{eqn:hm_sum_kernel_setion}, we have $h_t^\mt = \beta^{t-m}h_m^\mt$, for any $t>m$. 
Note that, due to Assumption \ref{ass:kernel_dominated},  we have $\beta^s\le C \rhozero^s$, for all $s\in\Zbb_+$. This implies that $\beta\le \rhozero^2$.
Therefore, since $\rhozero^2<\rho$, we have
\begin{equation}
\begin{split}
h_t^\mt+{a}^\mt\rho^t 
&= 
\beta^{t-m}h_m^\mt + {a}^\mt\rho^{m}\rho^{t-m}
\\&\ge 
\beta^{t-m}h_m^\mt + \beta^{t-m}{a}^\mt\rho^{m}\ge 0,
\end{split}
\end{equation}
where the last equality is due to $h_m^\mt+{a}^\mt\rho^m\ge 0$.
Therefore, $({a}^\mt,\vch^\mt)$ is feasible for \eqref{eqn:opt_J_h}, and subsequently, we have $\Jcal({a}^\mt,\vch^\mt)<\infty$.
Hence, according to Theorem \ref{thm:Jm_star_sol}, we know that $({a}^\mt,\vch^\mt)=({a}^\mt,\vch^\mt)$ which implies that $m^*\le m =  t_{\nD-1}-\underline{t}+1$.
Based on a similar argument, one can show the same result for the case of DC kernel.
This concludes the proof.
\end{proof}
\begin{remark}
For the  special cases $\kernelTC$ and $\kernelDC$, one can relax Assumption~\ref{ass:kernel_dominated} to $|\kernel(t,t)|\le\rhozero^t$, for all $t\in\Zbb_+$.
\end{remark}
Based on the above discussion, we present a practical implementation scheme for identifying a linear internally positive dynamical system. The algorithm and further details are provided in the next section.

\section{Numerical Implementation Algorithm} \label{sec:alg}
Based on the discussion in Sections \ref{sec:positive_realization} and \ref{sec:Towards_a_Tractable_Solution}, to identify system $\Scal$ with internal positivity side-information, 
we need to solve convex quadratic program \eqref{eqn:min_rho_x_finite_dim}.
This optimization problem can be solved using standard off-the-shelf solvers such as \MATLAB's \texttt{quadprog} or \texttt{cvx} supported by \textsc{MOSEK} \cite{CVX}.
Note that \eqref{eqn:min_rho_x_finite_dim} depends on non-negative integer parameter $m$ which is supposed be larger or equal to the parameter $m^*$ introduced in Corollary~\ref{cor:m_star}.
One possible approach is to set the value of $m$ to $m_0$, which is introduced in \eqref{eqn:m_0} and guaranteed to have property.
Also, one may take initially $m$ equal to $\nD-\underline{t}$ and iteratively increase it until $m$ exceeds $m^*$ and a non-negative impulse response is obtained.  
This is of special interest when a suitable quadratic program solver with a warm-starting feature is available \cite{gondzio2016large}.
In this iterative approach, at each iteration $m$, one should check whether the estimated impulse response $\vcg^\mt=(g^\mt_s)_{s=0}^{\infty}$ is non-negative, or equivalently, $m \ge m^*$. According to Corollary \ref{rem:ul_s}, for verifying this stopping condition, it is enough to see whether $g_s^\mt\ge 0$ holds, for $s<m_0$. Due to \eqref{eqn:m_0}, we know that $m_0$ depends logarithmically on the parameters of problem, and thus, the size of $m_0$ is not prohibitive in the practical examples.
According to Theorem \ref{thm:mstar_TC}, when the TC kernel is employed and the initial value of $m$ is set to $\nD-\underline{t}$, the introduced iterative scheme takes only a single iteration. 
The outline of this approach is summarized in Algorithm \ref{alg:PosiID}.

\begin{algorithm}[t]
	\caption{System Identification with Internal Positivity Side-Information}\label{alg:PosiID}
	\begin{algorithmic}[1]
		\State \textbf{Input:} Set of data $\Dscr$, finite Hankel rank kernel stable $\kernel$, dominant pole $\rho$,  regularization weight $\lambda$, and, $\Delta_m \in\Nbb$.
		\State $m\  \reflectbox{$\to$}\ t_{\nD-1}-\underline{t}+1$ % or $m^*$, when $m^*$ is known.
		\While{stopping/exiting condition is not met}
		%\Statex
		%\Comment{}
		\State Calculate vectors $\vcy$, $\vcb$ and $\vcc$ based on \eqref{eqn:vec_y_b} and \eqref{eqn:vec_c}.
		\State Obtain matrices $\mxO$, $\mxL$ and $\mxK$ as in  \eqref{eqn:OLK}, 	or by Remark~\ref{rem:OLK_at_rest}
		\State Solve quadratic program \eqref{eqn:min_rho_x_finite_dim} for ${a}^\mt$ and $\vcx^\mt$.
		\State Obtain $\vch^\mt$ based on \eqref{eqn:hm_rep}, or equivalently, \eqref{eqn:hm_sum_kernel_setion}.
		\State $\vcg^\mt \  \reflectbox{$\to$}\  ({a}^\mt\rho^s+h_s^\mt)_{s=0}^{\infty}$ 
		\If{$\vcg^\mt$ is non-negative,} exit the loop,
		\Else{$\quad m \  \reflectbox{$\to$}\ m + \Delta_m$}
		\EndIf
		\State \textbf{end}
		\EndWhile
		\State \textbf{end} 
		\State \textbf{Input:} Internally positive impulse response $\vcg^*$, and also $\vcx^*$, ${a}^*$, $\vch^*$.
	\end{algorithmic}
\end{algorithm}

In order to initialize Algorithm \ref{alg:PosiID}, in addition to the set of data $\Dscr$, we need a suitable kernel $\kernel$ and also an estimation of the dominant pole $\rho$ and the regularization weight $\lambda$. 
In general, deciding on the type of kernel depends on the shape and smoothness of the impulse response to be identified.  
Once the type of kernel $\kernel$ is set, we need to estimate the vector of hyperparameters $\theta_{\kernel}$ which characterizes $\kernel$.
Accordingly, vector $\theta$  defined as $\theta:= [\rho,\lambda,\theta_{\kernel}]\in\Theta$ is the overall vector of hyperparameters to be determined where $\Theta$ denotes the space of feasible hyperparameters.
For estimating $\theta$, we employ a \emph{cross-validation} scheme endowed with a \emph{Bayesian optimization} heuristic \cite{srinivas2012information}.
In this regard, we split the index set of data $\{0,1,\ldots,\nD-1\}$ into two mutually disjoint subsets $\Ical_{\text{T}}$ and $\Ical_{\text{V}}$ respectively for \emph{training} and \emph{validation}. 
We define the model evaluation metric $v:\Theta\to\Rbb$ based on the prediction error of the validation data as 
\begin{equation}
	v(\theta) = \frac{1}{|\Ical_{\text{V}}|}\sum_{i\in\Ical_{\text{V}}}
	\big(
	y_{t_i}-{a}\Lu{t_i}(\vcf_{\rho})
	-\Lu{t_i}(\vch)
	\big)^2,
\end{equation}
where $({a},\vch)$ is the result of proposed identification scheme using the training data corresponding to indices $\Ical_{\text{T}}$, and also to the hyperparameter $\theta$. 
We then estimate the hyperparameters $\theta$ as $\hat{\theta}:=\argmin_{\theta\in\Theta} \!\ v(\theta)$.
Since the dependency of $v(\theta)$ on $\theta$ has a black-box oracle form, we need to employ Bayesian optimization algorithms such as \emph{GP-LCB} or similar alternatives \cite{srinivas2012information}. These heuristics are readily available in \MATLAB's \texttt{bayesopt} function.

\section{Further Internal Positivity Side-Information and Extensions} \label{sec:extensions}
In Section \ref{sec:positive_realization}, we employed positive system realization theory to formulate the identification problem with internal positivity side-information. 
The resulting optimization problem, \eqref{eqn:opt_h_rank}, is formulated using the fact that the transfer function of system $\Scal$ is in the following form
\begin{equation}\label{eqn:G_dom_sup}
\GS(z)=\FS(z)+\HS(z) = \frac{a}{1-\rho z^{-1}} + \HS(z).
\end{equation}
The transfer function $\GS$ has a dominant part $\FS(z) := a(1-\rho z^{-1})^{-1}$ with $\rho\in(0,1)$, and a suppressed part $\HS$.
Given that the impulse response of the system satisfies specific properties,
this formulation can be further extended to the following cases 
\begin{equation}\label{eqn:G_NUP}
\GS(z)=\FS(z)+\HS(z)  = \frac{\NS(z^{-1})}{(1-\rho z^{-1})^n} + \HS(z),
\end{equation}
and
\begin{equation}\label{eqn:G_SNP}
\GS(z)=\FS(z)+\HS(z)  = \frac{\NS(z^{-1})}{1-\rho^n z^{-n}} + \HS(z),
\end{equation}
where $\NS$ is a polynomial with degree less than $n$, i.e., $\NS\in\Rbb_{n-1}[z^{-1}]$.
Note that according to \cite[Theorem 9]{benvenuti2004tutorial}, the transfer function of a positive system with non-zero spectral radius is in form of \eqref{eqn:G_NUP} or \eqref{eqn:G_SNP}, which are generalized forms of \eqref{eqn:G_dom_sup}. 
These cases correspond to the situations where, in addition to internal positivity, we have further information on the dominant part of the impulse response of the system. 
In this section, we discuss these extensions.
\subsection{Non-simple Unique Dominant Pole}
We first discuss the extension which corresponds to \eqref{eqn:G_NUP}, i.e., the transfer function has a unique dominant pole with multiplicity $n$ where $n$ can be larger than one.
The mathematical proofs are omitted since they are similar to the ones given in Section \ref{sec:Towards_a_Tractable_Solution}. 

With respect to each $n\in\Nbb$ and $\rho\in(0,1)$, let $\Fscran$
be the following set of impulse responses 
\begin{equation}\label{eqn:Fscran}
	\begin{split}
		\Fscran &= \bigg\{ \vcf=(f_t)_{t=0}^{\infty}\in\ellone
		\!\ \Big|\!\  	
		\limOp_{t\to \infty} t^{-n+1}\rho^{-t}f_t >0,
		\\& \qquad\qquad
		(1-\rho z^{-1})^n \!\
		\sum_{t=0}^{\infty}f_t z^{-t} \in \Rbb_{n-1}[z^{-1}]
		\bigg\}.
	\end{split}
\end{equation}
Also, define the impulse response sets $\Pscran$ and $\PscrIn$ respectively as
\begin{equation}\label{eqn:Pscran}
\begin{split}
\Pscran &= \Big\{ \vcg = \vcf+\vch\in\underline{\Pscr} 
\!\ \Big|\!\  	
\vcf\in\Fscran, 
\\&\qquad\quad\ 
\vch=(h_t)_{t=0}^{\infty}\in\ell_{1},
\lim_{t\to \infty} \rho^{-t}h_t =0
\Big\}, 
\end{split}
\end{equation}
and $\PscrIn = \cup_{\rho\in(0,1)}\Pscran$.
According to \cite[Theorem 11]{benvenuti2004tutorial} and based on an argument similar to the proof of Corollary~\ref{cor:dom_pole_simple}, one can show that, for any $\rho\in(0,1)$, the impulse responses in $\Pscran$ are internally positive. 
Indeed, $\Pscran$ is exactly the set of impulse responses of positive systems with dominant pole structure as in \eqref{eqn:G_SNP}. 
Thus, to identify impulse response $\gS$ with internal positivity side-information in the sense that $\gS\in \Pscran$ we need to estimate $\vcf$ and $\vch$ with properties given in \eqref{eqn:Pscran}.
One can see that each $\vcf=(f_t)_{t=0}^{\infty}\in\Fscran$ is uniquely characterized in terms of real positive number $a$ and vector $\vca = [a_j]_{j=0}^{n-2}\in\Rbb^{n-2}$ as
\begin{equation}
	f_t = at^{n-1}\rho^t+\sumOp_{j=0}^{n - 2}a_jt^j\rho^t,\qquad \forall t\ge 0.
\end{equation} 
Subsequently, we reintroduce the empirical loss function $\Ecal_{\rho,n}:\Rbb\times\Rbb^{n-2}\times\Hk\to\Rbb_+$ as following
\begin{equation}\label{eqn:E_an}
\begin{split}
	&\Ecal_{\rho,n}(a,\vca,\vch)  := 
	\\&\quad
	\sum_{i=0}^{\nD - 1}
	\Big[
	y_{t_i}
	-
	\Lu{t_i}\Big(a\vcf_{\rho,n - 1}+\sum_{j=0}^{n -2}a_j\vcf_{\rho,j}\Big)
	-\Lu{t_i}(\vch)
	\Big]^2\!\!,
\end{split}
\end{equation}
where, for $j=0,\ldots,n-1$, the impulse response $\vcfaj$ is defined as 
$\vcfaj = (t^j\rho^t)_{t=0}^{\infty}$.
According to \eqref{eqn:E_an}, the identification problem \eqref{eqn:opt_h_rank} is updated to the following optimization problem
\begin{equation}\label{eqn:opt_h_rank_UNP} % UN: Unique Non-simple Pole
	\begin{array}{cl}
		\minOp_{\substack{a\in\Rbb,\vch\in\Hk\\\vca\in\Rbb^{n-1}}} & 
		\Ecal_{\rho,n}(a,\vca,\vch) + \lambda\!\ \|\vch\|_{\Hk}^2 
		+ \varepsilon\|\vca\|^2,\\
		\text{s.t.}
		&h_t+\rho^t\Big[at^{n-1}\!+\!\sumOp_{j=0}^{n - 2}a_jt^j\Big]\ge 0, \  \forall t \ge 0,\\
		&\rank(\Hankel(\vch))< \infty,\\
		&a\ge a_{\min},
	\end{array} 
\end{equation}
where $a_{\min}>0$ is a given lower-bound for $a$ and $\varepsilon>0$ is a regularization weight.
Based on the same line of argument as in Section \ref{sec:towards_tractable}, the next proposition presents a finite dimensional convex quadratic program equivalent to \eqref{eqn:opt_h_rank_UNP}.
Before proceeding to the proposition, we define
matrices $\mxB\in \Rbb^{\nD\times n}$ and 
$\mxC\in \Rbb^{(m+1)\times n}$ as follows
\begin{equation}\label{eqn:mxBC}
	\!\!\!\!\!\!
	\begin{array}{ll}
		\mxB(i,j)  = 
		\Lu{t_{i-1}}(\vcf_{\rho,j-1}), &  1\!\le\! i\!\le\! \nD,1\!\le\! j\!\le\! n, \\
		\mxC(i,j)  = (i-1)^{j-1}\rho^{i-1},& 1\!\le\! i\!\le\! m \! +\! 1, 1\!\le\! j\!\le\!n.
	\end{array}
	\!\!\!\!\!\!
\end{equation}
\begin{theorem}\label{prp:rep_posi_id_NUP}
	Let Assumptions \ref{ass:kernel_dominated}, \ref{ass:dom_pole_exciting} and \ref{ass:finite_input} hold and $\kernel$ be a finite Hankel rank kernel.
	Also, with respect to each $m\in\Zbb_+$, let $(a^\mt,\vca^\mt,\vcx^\mt)$ be the solution of following  convex quadratic program 
	\begin{equation}\label{eqn:min_rho_x_finite_dim_NUP}
		\!\!\!\!\!\!\!
		\begin{array}{cl}
			\!\!\!
			\minOp_{\substack{\vcx\in \Rbb^{\nD+m+1}\\
					a\in\Rbb,\vca\in\Rbb^{n-1}}}
			\!\!\!\!&\!
			\Big\|
			\vcy
			\!-\!
			\mxB\!
			\begin{bmatrix} \vca \\  a \end{bmatrix}
			\!\!-\!
			\begin{bmatrix}\mxO \!\!&\!\!  \mxL\end{bmatrix}
			\!\vcx
			\Big\|^2 
			\!+\!
			\lambda
			\vcx^\tr
			\!\!
			\begin{bmatrix}
				\mxO 		\!\!&\!\!	\mxL\\
				\mxL^\tr	\!\!&\!\!	\mxK\\
			\end{bmatrix}
			\vcx
			\!+\!
			\varepsilon\|\vca\|^2\!\!,
			\\
			\!\!\!
			\mathrm{s.t.}
			\!\!\!\!&\!
			\begin{bmatrix}
				\mxL^\tr \!\!&\!\! \mxK\\
			\end{bmatrix}
			\vcx+ \mxC\!
			\begin{bmatrix} \vca \\  a \end{bmatrix}\ge 0,
			\\\!\!\!&\!\!
			a\ge a_{\min},
		\end{array} 
		\!\!\!\!\!\!\!\!
		\!\!\!\!\!\!\!\!
	\end{equation}	
	and, the impulse response $\vch^\mt$ is defined according to \eqref{eqn:hm_rep}.
	Then, there exists $m^*$ such that $(a^\mt,\vca^\mt,\vch^\mt)$ is a solution of \eqref{eqn:opt_h_rank_UNP}, for any $m\ge m^*$.
	Moreover, for any $m_1,m_2\ge m^*$, we have 
	$(a^{\supscrpsm{m_1}},\vca^{\supscrpsm{m_1}},\vch^{\supscrpsm{m_1}})
	=
	(a^{\supscrpsm{m_2}},\vca^{\supscrpsm{m_2}},\vch^{\supscrpsm{m_2}})$. %, i.e., the solution does not change after $m^*$.
\end{theorem}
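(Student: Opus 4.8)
The plan is to replay, essentially verbatim, the development of Section~\ref{sec:Towards_a_Tractable_Solution}, now for the augmented decision variable $(a,\vca,\vch)$ in the product Hilbert space $\Rbb\times\Rbb^{n-1}\times\Hk$, with $\Ecal_{\rho}$ replaced by $\Ecal_{\rho,n}$ of \eqref{eqn:E_an} and the half-space constraints $h_s+a\rho^s\ge0$ replaced by $h_s+\rho^s\big(a s^{n-1}+\sum_{j=0}^{n-2}a_j s^j\big)\ge0$. First I would record the two elementary facts that keep the dominant part harmless: for $\rho\in(0,1)$ each impulse response $\vcf_{\rho,j}=(t^j\rho^t)_{t=0}^{\infty}$ lies in $\ellone$, so that under Assumption~\ref{ass:finite_input} the functional $\vch\mapsto\Lu{t_i}(\vch)$ is continuous and $\Lu{t_i}(\vcf_{\rho,j})$ is well defined; and $\vcf_{\rho,j}$ has finite Hankel rank, since $\sum_t t^j\rho^t z^{-t}$ is rational with the single pole $\rho$ of multiplicity $j+1$. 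Consequently the dominant part $\vcf$ is, for every admissible $(a,\vca)$, a finite-Hankel-rank element of $\ellone$.

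Next I would establish the analogues of Theorems~\ref{thm:J_sol} and~\ref{thm:Jm_sol}. Writing each positivity constraint as the indicator of a half-space $\{(a,\vca,\vch)\mid\inner{(a,\vca,\vch)}{\psi_s}\ge0\}$ and the bound $a\ge a_{\min}$ as the indicator of a half-space — exactly the reformulation used in the proof of Theorem~\ref{thm:J_sol}, with $\psi_s$ now carrying the extra coordinates $(s^{n-1}\rho^s,\dots,s\rho^s,\rho^s)$ — the feasible set is nonempty (take $a=a_{\min}$, $\vca=\zero$, $\vch=\zero$), closed and convex, so the objective of \eqref{eqn:opt_h_rank_UNP} and of its level-$m$ truncation is proper, convex and lower semicontinuous, and a minimizer exists. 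Uniqueness follows, as in Theorem~\ref{thm:J_sol}, from strict convexity of the smooth part: $\varepsilon\|\vca\|^2$ with $\varepsilon>0$ pins down $\vca$, $\lambda\|\vch\|_{\Hk}^2$ pins down $\vch$, and the excitation condition — the natural form of Assumption~\ref{ass:dom_pole_exciting} for the dominant mode, i.e.\ $\Lu{t_i}(\vcf_{\rho,n-1})\ne0$ for some $i$, keeping the $a$-column of $\mxB$ nonzero — pins down $a$.

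Then I would prove that finitely many constraints suffice, i.e.\ the analogues of Theorem~\ref{thm:Jm_star_sol} and Corollary~\ref{cor:m_star}. From the level-$m$ truncated objective bound $\Jcal_m(a^\mt,\vca^\mt,\vch^\mt)\le\Jcal_m(a_{\min},\zero,\zero)=:C_0$ one extracts the bounds $\|\vch^\mt\|_{\Hk}\le\lambda^{-1/2}C_0^{1/2}$ and $\|\vca^\mt\|\le\varepsilon^{-1/2}C_0^{1/2}$, \emph{uniform in $m$}; combined with the reproducing property, Cauchy--Schwarz and Assumption~\ref{ass:kernel_dominated} ($|\kernel(t,t)|\le C\rhozero^{2t}$, $\rhozero<\rho$) this gives $|h_t^\mt|\le C^{1/2}\lambda^{-1/2}C_0^{1/2}\rhozero^{t}$, whereas the dominant part satisfies $\rho^t\big(a^\mt t^{n-1}+\sum_{j=0}^{n-2}a_j^\mt t^j\big)\ge\rho^t\big(a_{\min}t^{n-1}-\varepsilon^{-1/2}C_0^{1/2}(n-1)t^{n-2}\big)$, which is nonnegative and of order $t^{n-1}\rho^t$ for large $t$. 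Since $\rhozero<\rho$, this positive term eventually dominates the $\rhozero^{t}$ bound on $|h_t^\mt|$, so $h_s^\mt+f_s^\mt\ge0$ holds for \emph{all} $s$ once $m$ exceeds a threshold $m_0$ that, as in \eqref{eqn:m_0} but with the extra polynomial factor harmlessly absorbed into the logarithm, depends only logarithmically on the problem data. The usual sandwich $\Jcal_m(a^\mt,\vca^\mt,\vch^\mt)\le\Jcal_m(a^*,\vca^*,\vch^*)\le\Jcal(a^*,\vca^*,\vch^*)\le\Jcal(a^\mt,\vca^\mt,\vch^\mt)=\Jcal_m(a^\mt,\vca^\mt,\vch^\mt)$ then forces the truncated and untruncated minimizers to coincide for $m\ge m^*$, and monotonicity of $\Jcal_m$ in $m$ gives the stated $m_1,m_2\ge m^*$ invariance.

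Finally, for the finite-dimensional reduction I would fold $(a,\vca)$ into an outer function $e(\cdot)$ exactly as in the proof of Theorem~\ref{thm:rep_posi_id}, so that the level-$m$ problem takes the representer form \eqref{eqn:opt_representer_thm} in the variable $\vch$ with the vectors $\varphi_0,\dots,\varphi_{\nD-1},\kernel_0,\dots,\kernel_m$; Theorem~\ref{thm:rep_thm} then places $\vch^\mt$ in their span, giving \eqref{eqn:hm_rep}, and substituting this together with $\Lu{t_i}(\vch)=\inner{\vch}{\varphi_i}_{\Hk}$ and $h_s=\inner{\vch}{\kernel_s}_{\Hk}$ into \eqref{eqn:opt_h_rank_UNP} and using the definitions \eqref{eqn:OLK} and \eqref{eqn:mxBC} yields exactly \eqref{eqn:min_rho_x_finite_dim_NUP}. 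Since $\vch^\mt$ is a finite linear combination of kernel sections and $\kernel$ is finite Hankel rank, subadditivity of rank gives $\rank(\Hankel(\vch^\mt))<\infty$ as in Theorem~\ref{thm:rank_Hankel}; with the finite Hankel rank of $\vcf^\mt$ this shows $\gS=\vcf^\mt+\vch^\mt$ satisfies \eqref{eqn:rank_Hankel_gS}, so $(a^\mt,\vca^\mt,\vch^\mt)$ solves \eqref{eqn:opt_h_rank_UNP} whenever $m\ge m^*$. I expect the real obstacle to be the tail-redundancy estimate of the third paragraph: unlike in Section~\ref{sec:Towards_a_Tractable_Solution} the dominant term is now $\rho^t$ times a degree-$(n-1)$ polynomial, so one must verify carefully that $a_{\min}>0$ keeps its leading coefficient strictly positive and — crucially — that the bound on $\|\vca^\mt\|$ is uniform in $m$, so that a single logarithmic threshold $m_0$ controls all truncations simultaneously.
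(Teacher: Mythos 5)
Your proposal is correct and follows exactly the route the paper intends: the paper omits the proof of Theorem~\ref{prp:rep_posi_id_NUP}, stating only that it parallels Section~\ref{sec:Towards_a_Tractable_Solution}, and your argument is precisely that parallel carried out — existence/uniqueness via half-space indicators and strict convexity, the uniform-in-$m$ tail estimate showing constraints beyond a logarithmic threshold are redundant, and the representer-theorem reduction to the quadratic program with finite Hankel rank of $\vch^{\mt}$ inherited from the kernel sections. Your observation that Assumption~\ref{ass:dom_pole_exciting} must be read as excitation of the dominant mode $\vcf_{\rho,n-1}$ (rather than $\vcf_{\rho}$) to pin down $a$ in the strict-convexity step is a genuine and correct refinement of the theorem's hypotheses that the paper glosses over.
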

%====================================
\subsection{Multiple Simple Dominant Poles}
In this section, we introduce the extension corresponding to the case in \eqref{eqn:G_SNP}, i.e., the dominant part of the transfer function has specially structured multiple simple dominant poles.

For any $n\in\Nbb$ and any $\rho\in(0,1)$, defined the impulse response sets $\Fscrna$, $\Pscrna$ and $\PscrnI$ respectively as
\begin{equation}\label{eqn:Pscrna}
	\begin{split}
		\Fscrna &= \bigg\{ \vcf=(f_t)_{t=0}^{\infty}\in\underline{\Pscr} 
		\!\ \Big|\!\  	
		\liminfOp_{t\to \infty} \rho^{-t}f_t >0,
		\\& \qquad\qquad
		(1-\rho^n z^{-n})
		\sum_{t=0}^{\infty}f_t z^{-t} \in \Rbb_{n-1}[z^{-1}]
		\bigg\}, \\
		\Pscrna &= \bigg\{ \vcg = \vcf+\vch\in\underline{\Pscr} 
		\!\ \Big|\!\  	
		\vcf\in\Fscrna, 
		\\&\qquad\qquad
		\vch=(h_t)_{t=0}^{\infty}\in\ell_{1},
		 \liminfOp_{t\to \infty} \rho^{-t}|h_t| =0
		\bigg\}, 
	\end{split}
\end{equation}
and $\PscrnI = \cup_{\rho\in(0,1)}\Pscrna$.
One can easily see that for $n=1$ these sets coincide with $\Pscra$ and $\PscrI$.
Due to  \cite[Theorem 12]{benvenuti2004tutorial} and by following the same lines of argument as in the proof of Corollary~\ref{cor:dom_pole_simple}, we can show that $\Pscrna$ contains internally positive impulse responses, for any $\rho\in(0,1)$.
Accordingly, the identification with internally positivity side-information in the sense that impulse response belongs to $\Pscrna$ translates to the estimation of $\vcf$ and $\vch$ with properties given in \eqref{eqn:Pscrna}.
Note that with respect to each $\vcf=(f_t)_{t=0}^{\infty}\in\Fscrna$, there exist real scalars $\ar_0,\ldots,\ar_{n-1}$ and $\ai_0,\ldots,\ai_{n-1}$ such that, for any $t\in\Zbb_+$, we have
\begin{equation}
\begin{split}
f_t &= 
\real\Big(\sum_{k=0}^{n-1} (\ar_k+\Jimage\ai_k)\rho^t\omega^{kt} \Big),
\\
0&=
\imag\Big(\sum_{k=0}^{n-1} (\ar_k+\Jimage\ai_k)\rho^t\omega^{kt} \Big),
\end{split}
\end{equation} 
where $\omega=\expe^{\frac{2\pi}{n}}$, i.e., $\vcf$ is uniquely characterized in terms of vectors $\vcar = [\ar_k]_{k=0}^{n-1}$ and 
$\vcai = [\ai_k]_{k=0}^{n-1}$.
Hence, we can reintroduce the empirical loss function 
$\Ecal_{\rho}^{\supscrpsm{n}}:\Rbb^{n}\times\Rbb^{n}\times\Hk\to\Rbb_+$ as  follows
\begin{equation*}
\begin{split}
	\Ecal_{\rho}^{\supscrpsm{n}}&(\vcar,\vcai,\vch)  
	:= 
	\\&
	\sum_{i=0}^{\nD - 1}
	\Big[
	y_{t_i}
	-
	\Lu{t_i}\Big(
	\sum_{k=0}^{n-1}( \ar_k\vcfark-\ai_k\vcfaik)
	\Big)
	-\Lu{t_i}(\vch)
	\Big]^2\!\!,
\end{split}
\end{equation*}
where $\vcfark$ and $\vcfaik$ 
are defined as
\begin{equation}
\vcfark = 
\big(\rho^t \!\ \real(\omega^{kt})\big)_{t=0}^{\infty}, 
\quad
\vcfaik = 
\big(\rho^t \!\ \imag(\omega^{kt})\big)_{t=0}^{\infty},
\end{equation}
for $k=0,\ldots,n-1$.
Therefore, the identification problem \eqref{eqn:opt_h_rank} is modified to
\begin{equation}\label{eqn:opt_h_rank_SNP} % SNP: Simple Non-unique Pole
	\!\!\!
	\!\!\!\!\!\!\!\!\!
	\!\!\!\!\!\!\!\!\!
	\!\!\!\!\!\!\!\!\!
	\begin{array}{cl}
		\minOp_{\substack{\vcar,\vcai\in\Rbb^{n}\\\vch\in\Hk}} & 
		\Ecal_{\rho}^{\supscrpsm{n}}(\vcar,\vcai,\vch) 
		\!+\! \lambda\!\ \|\vch\|_{\Hk}^2 
		\!+\! \varepsilon\|\mxE\vcar\|^2
		\!+\! \varepsilon\|\mxE\vcai\|^2 \!\!,\\
		\text{s.t.}
		&h_t+\sumOp_{k=0}^{n-1}( \ar_k\vcfark-\ai_k\vcfaik)\ge 0, \quad  \forall t \ge 0,\\
		&\sumOp_{k=0}^{n-1}( \ai_k\vcfark+\ar_k\vcfaik)= 0, \quad   \forall t \ge 0,\\
		&\liminfOp_{t\to\infty} \rho^{-t}\sumOp_{k=0}^{n-1}( \ar_k\vcfark-\ai_k\vcfaik)\ge a_{\min},\\
				&\rank(\Hankel(\vch))< \infty,\\
	\end{array} 
	\!\!\!\!\!\!\!\!\!
	\!\!\!\!\!\!\!\!\!
	\!\!\!\!\!\!\!\!\!
\end{equation}
where  ${a}_{\min}>0$ is a small positive real scalar, 
$\varepsilon>0$ is a regularization weight and $\mxE\in\Rbb^{n\times n}$ is defined as  $\mxE = \diag(0,1,1,\ldots,1)$.
By an argument similar to Section \ref{sec:towards_tractable}, we can derive an equivalent finite dimensional convex quadratic program for \eqref{eqn:opt_h_rank_SNP}.
To this end, define matrices $\mxV_m\in\Rbb^{m\times n}$, $\mxBr\in\Rbb^{\nD\times n}$, and,  $\mxBi\in\Rbb^{\nD\times n}$, respectively as
\begin{equation}\label{eqn:mxVm}
\begin{array}{ll}
\mxV_m(i,j)  = \omega^{(i-1)(j-1)}, &  1\!\le\! i\!\le\! m,1\!\le\! j\!\le\! n,\\
\mxBr(i,j) = \Lu{t_{i-1}}\!(\vcf_{\rho,j-1}^{\supscrpsm{\mathrm{r}}}), &  1\!\le\! i\!\le\! \nD,1\!\le\! j\!\le\! n,\\
\mxBi(i,j) = \Lu{t_{i-1}}\!(\vcf_{\rho,j-1}^{\supscrpsm{\mathrm{i}}}), &  1\!\le\! i\!\le\! \nD,1\!\le\! j\!\le\! n,\\
\end{array}
\end{equation}
for  $m \in \Zbb_+ \cup \{\infty\}$.
Also, let $\mxVr_m$, $\mxVi_m$ and $\mxD_m$ be defined respectively as $\real(\mxV_m)$, $\imag(\mxV_m)$, and $\diag(1,\rho,\ldots,\rho^{m-1})$.
One can see that the first constraint in \eqref{eqn:opt_h_rank_SNP} is equivalent to 
\begin{equation}
\vch + \mxD_{\infty}\mxVr_{\infty}\!\ \vcar-\mxD_{\infty}\mxVi_{\infty}\!\ \vcai\ge 0.	
\end{equation}
The second constraint in \eqref{eqn:opt_h_rank_SNP} is
\begin{equation}\label{eqn:DVrai_DViar_3rd_const}
	\mxD_{\infty}\mxVr_{\infty}\!\ \vcai+\mxD_{\infty}\mxVi_{\infty}\!\ \vcar= 0,
\end{equation}
which  implies that  
$\mxVr_{\infty} \vcai+\mxVi_{\infty}\vcar=0$ due to $\rho>0$.
As $\omega^n=1$, we know that $\mxV_m$ is an $n$-periodic Vandermonde matrix.
Therefore, \eqref{eqn:DVrai_DViar_3rd_const} is equivalent to $\mxVr_{n} \vcai+\mxVi_{n}\vcar=0$.
Similarly, one can show that the third constraint in \eqref{eqn:opt_h_rank_SNP}
is equivalent to $\mxVr_{n}\vcar+\mxVi_{n}\vcai\ge a_{\min}1_n$.
Based on the discussion above and similar to those in Section \ref{sec:Towards_a_Tractable_Solution},  the next proposition presents the
finite-dimensional convex quadratic program equivalent to \eqref{eqn:opt_h_rank_SNP}.
\begin{theorem}\label{prp:rep_posi_id_SNP}
Let the assumptions of Proposition \ref{prp:rep_posi_id_NUP} hold.
For any $m\in\Zbb_+$, let 
$(\vca^{\supscrpsm{\mathrm{r},m}},\vca^{\supscrpsm{\mathrm{i},m}},\vcx^\mt)$ be the solution of following  convex quadratic program 
\begin{equation}\label{eqn:min_rho_x_finite_dim_SNP}
\!\!\!\!\!\!\!
\begin{array}{cl}
	\!\!\!
	\minOp_{\substack{\vcx\in \Rbb^{\nD+m+1}\\
			\vcar\!,\!\ \vcai\in\Rbb^{n}}}
	\!\!\!\!&\!
	\Big\|
	\vcy
	\!-\!
%	\begin{bmatrix}\mxBr \!\!&\!\!  \mxBi\end{bmatrix}
%	\!
%	\begin{bmatrix} \vcar \\  \vcai \end{bmatrix}
	\mxBr\vcar 
	\!-\!
	\mxBi\vcai
	\!-\!
	\begin{bmatrix}\mxO \!\!&\!\!  \mxL\end{bmatrix}
	\!\vcx
	\Big\|^2 
	\!\!\!+\!
	\lambda
	\vcx^\tr
	\!\!
	\begin{bmatrix}
		\mxO 		\!\!&\!\!	\mxL\\
		\mxL^\tr	\!\!&\!\!	\mxK\\
	\end{bmatrix}
	\!
	\vcx
	\\&\qquad
	+\!\ \varepsilon \|\mxE\vcar\|^2+\varepsilon\|\mxE\vcai\|^2\!\!,
	\\
	\!\!\!
	\mathrm{s.t.}
	\!\!\!\!&\!
	\begin{bmatrix}
		\mxL^\tr \!\!&\!\! \mxK\\
	\end{bmatrix}\!
	\vcx
	+  
	\mxD_{m +1}
	\!
	\begin{bmatrix} \mxVr_{m+1} \!\!&\!\!-\mxVi_{m+1}\!\end{bmatrix}
	\!\!
	\begin{bmatrix} \vcar \\  \vcai \end{bmatrix}\!\!\ge\! 0,
	\\\!\!\!&\!\!
	\mxVr_{n} \vcai+\mxVi_{n}\vcar=0,
	\\\!\!\!&\!\!
	\mxVr_{n}\vcar+\mxVi_{n}\vcai\ge a_{\min}1_n,
\end{array} 
\!\!\!\!\!\!\!\!
\!\!\!\!
\end{equation}	
and, define the  impulse response $\vch^\mt$ by \eqref{eqn:hm_rep}.
Then, there exists $m^*$ such that 
$(\vca^{\supscrpsm{\mathrm{r},m}},\vca^{\supscrpsm{\mathrm{i},m}},\vcx^\mt)$ 
is a solution of \eqref{eqn:opt_h_rank_UNP}, for each $m\ge m^*$.
Moreover, for any $m_1,m_2\ge m^*$, we have 
$(\vca^{\supscrpsm{r,m_1}},\vca^{\supscrpsm{i,m_1}},\vch^{\supscrpsm{m_1}})
=
(\vca^{\supscrpsm{r,m_2}},\vca^{\supscrpsm{i,m_2}},\vch^{\supscrpsm{m_2}})$. %, i.e., the solution does not change after $m^*$.
\end{theorem}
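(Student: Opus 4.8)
The plan is to transcribe the argument developed for \eqref{eqn:opt_h_rank} in Section~\ref{sec:Towards_a_Tractable_Solution} --- the chain Theorem~\ref{thm:J_sol} $\to$ Theorem~\ref{thm:Jm_sol} $\to$ Theorem~\ref{thm:Jm_star_sol} $\to$ Theorem~\ref{thm:rep_posi_id} $\to$ Theorem~\ref{thm:rank_Hankel} --- to the present setting, where the decision variable is enlarged to $(\vcar,\vcai,\vch)\in\Rbb^{n}\times\Rbb^{n}\times\Hk$ and \eqref{eqn:opt_h_rank_SNP} carries the additional reality, lower-bound and finite-Hankel-rank constraints. First I would absorb all constraints of \eqref{eqn:opt_h_rank_SNP} into indicator functions and recast it --- together with the auxiliary problem in which the Hankel-rank indicator is dropped, and its truncated version --- as unconstrained minimizations over the Hilbert space $\Rbb^{n}\times\Rbb^{n}\times\Vk$, mirroring the passage from \eqref{eqn:opt_h_rank} to \eqref{eqn:opt_h_rank_noConstraint} and the splitting $\Jcal_{\!\Fscr}=\Jcal+\delta_{\Fscr}$.

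The single genuinely new ingredient is $n$-periodicity. Since $\omega=\expe^{\Jimage 2\pi/n}$ satisfies $\omega^{n}=1$, the sequence $\big(\sum_{k=0}^{n-1}(\ar_k+\Jimage\ai_k)\omega^{kt}\big)_{t\ge 0}$ is $n$-periodic, so the equality constraint $\sum_k(\ai_k\vcfark+\ar_k\vcfaik)=0$ (for all $t\ge 0$) collapses to the finite linear system $\mxVr_{n}\vcai+\mxVi_{n}\vcar=0$, a closed affine subspace; the $\liminf$ lower-bound constraint collapses to finitely many linear inequalities (the $\liminf$ over $t$ equals the minimum over one period); and, when both hold, the dominant part equals $\rho^{t}$ times a \emph{real} $n$-periodic sequence bounded below by $a_{\min}$, hence dominates $a_{\min}\vcf_{\rho}$ entrywise --- this is the SNP analogue of the domination estimate behind Corollary~\ref{cor:dom_pole_simple}. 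With this reduction the feasible set is closed and convex, the point $\vcar=a_{\min}[1,0,\dots,0]^{\tr}$, $\vcai=\zero$, $\vch=\zero$ is feasible with finite cost (its only nonzero coordinate, the DC one, is killed by $\mxE=\diag(0,1,\dots,1)$, so the regularizer vanishes there), and existence of a minimizer follows from convexity and lower semicontinuity as in Theorem~\ref{thm:J_sol}. For uniqueness one checks that the quadratic part of the objective --- the data misfit plus $\lambda\|\vch\|_{\Hk}^2+\varepsilon\|\mxE\vcar\|^2+\varepsilon\|\mxE\vcai\|^2$ --- is positive definite on the direction space of the reduced feasible set: $\lambda\|\vch\|_{\Hk}^2$ pins $\vch$, the two $\varepsilon$-terms pin every coordinate of $\vcar,\vcai$ except the unregularized DC pair $(\ar_0,\ai_0)$, Assumption~\ref{ass:dom_pole_exciting} pins $\ar_0$ through the misfit (the $k=0$ real basis sequence being $\vcf_{\rho}$), and the first column of $\mxVr_{n}$ being $1_n$ forces $\ai_0=0$ along the equality-constraint subspace.

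Next I would pass to the truncated problem, where only the pointwise positivity constraint is restricted to $t\in\{0,\dots,m\}$ (the equality and lower-bound constraints already being finite), obtain its existence and uniqueness verbatim, and reproduce Theorem~\ref{thm:Jm_star_sol}. By Assumption~\ref{ass:kernel_dominated} and the reproducing property, $|h_s^\mt|\le C^{1/2}\|\vch^\mt\|_{\Hk}\rhozero^{s}$, while $\lambda\|\vch^\mt\|_{\Hk}^2$ is bounded by the $m$-independent cost of the feasible point above; combining this with the entrywise bound $a_{\min}\vcf_{\rho}$ on the dominant part and with $\rhozero<\rho$ yields $g_s^\mt\ge a_{\min}\rho^{s}-C^{1/2}\|\vch^\mt\|_{\Hk}\rhozero^{s}\ge 0$ for all $s$ once $m$ exceeds an explicit logarithmic threshold $m^*$ of the form in \eqref{eqn:m_0}. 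Therefore, for $m\ge m^*$ the truncated minimizer is feasible for the untruncated problem, the usual sandwich of objective values forces the two minimizers to coincide, $m^*$ is the least $m$ with the untruncated objective finite at the truncated minimizer, and the stability claim $(\vca^{\supscrpsm{\mathrm{r},m_1}},\vca^{\supscrpsm{\mathrm{i},m_1}},\vch^{\supscrpsm{m_1}})=(\vca^{\supscrpsm{\mathrm{r},m_2}},\vca^{\supscrpsm{\mathrm{i},m_2}},\vch^{\supscrpsm{m_2}})$ for $m_1,m_2\ge m^*$ follows exactly as in Corollary~\ref{cor:m_star}.

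Finally, the finite-dimensional program \eqref{eqn:min_rho_x_finite_dim_SNP} results from the Representer Theorem argument of Theorem~\ref{thm:rep_posi_id}: partially minimizing over $(\vcar,\vcai)$ leaves an objective in $\vch$ depending on $\vch$ only through $\Lu{t_i}(\vch)=\inner{\vch}{\varphi_i}_{\Hk}$ and $h_s=\inner{\vch}{\kernel_s}_{\Hk}$ for $s\le m$, so some minimizer $\vch^\mt$ lies in $\linspan\{\varphi_0,\dots,\varphi_{\nD-1},\kernel_0,\dots,\kernel_m\}$ --- that is, it has the form \eqref{eqn:hm_rep} --- and substituting this parametrization, using $\mxO,\mxL,\mxK$ from \eqref{eqn:OLK}, $\mxBr,\mxBi$ from \eqref{eqn:mxVm}, and $\mxVr_{m+1},\mxVi_{m+1},\mxD_{m+1}$ from the definitions following it, turns the truncated problem into \eqref{eqn:min_rho_x_finite_dim_SNP} (convex, since its Hessian only involves the positive-semidefinite kernel Gram matrix and the regularization terms). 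Since $\kernel$ is finite Hankel rank and $\vch^\mt$ is a finite linear combination of kernel sections, $\rank(\Hankel(\vch^\mt))\le\sum_{s}\rank(\Hankel(\kernel_s))<\infty$, so $\Jcal_{\!\Fscr}$ and $\Jcal$ agree at the truncated minimizer with finite value and, because $\Jcal_{\!\Fscr}\ge\Jcal$ pointwise with equality at the untruncated minimizer, the truncated minimizer solves \eqref{eqn:opt_h_rank_SNP} whenever $m\ge m^*$. I expect the uniqueness step of the untruncated problem to be the main obstacle, precisely because $\varepsilon\|\mxE\vcar\|^2+\varepsilon\|\mxE\vcai\|^2$ leaves the DC coefficients $\ar_0,\ai_0$ unpenalized, so strict convexity must be established on the affine hull of the reduced feasible set rather than on the ambient space, invoking Assumption~\ref{ass:dom_pole_exciting} for $\ar_0$ and the structure of the periodic Vandermonde matrix for $\ai_0$; the remaining steps are a faithful transcription of Section~\ref{sec:Towards_a_Tractable_Solution}.
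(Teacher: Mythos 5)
Your proposal is correct and follows essentially the route the paper intends: the paper omits the proof of Theorem~\ref{prp:rep_posi_id_SNP}, deferring to the chain of arguments in Section~\ref{sec:Towards_a_Tractable_Solution}, and your write-up is a faithful transcription of that chain. You also correctly identify and resolve the two genuinely new points --- the $n$-periodicity of the dominant part (which collapses the equality and $\liminf$ constraints to the finite Vandermonde systems and yields the entrywise bound $a_{\min}\vcf_{\rho}$ needed for the analogue of $m_0$) and the strict-convexity argument on the reduced feasible set for the unpenalized DC pair $(\ar_0,\ai_0)$, via Assumption~\ref{ass:dom_pole_exciting} and the all-ones first column of $\mxVr_{n}$.
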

\subsection{Zero Spectral Radius}
According to \cite[Theorem 8]{benvenuti2004tutorial}, when the transfer function $\GS$ has zero spectral radius, $r(\GS)=0$, the impulse response $\gS$ is internally positive if $\gS$ is non-negative. 
Based on the proof of \cite[Theorem 8]{benvenuti2004tutorial}, we know that $\gS$ belongs to $c_{00}$, the space of finitely non-zero impulse responses, i.e., there exists $\nG\in\Zbb_+$ such that $\GS(z)=\sum_{t=0}^{\nG-1}g_tz^{-t}$ and $g_t=0$, for all $t\ge \nG$.
Comparing to the other cases of internal positivity side-information, the case of zero spectral radius provides the weakest information.
Indeed, this knowledge only says the $\gS$ is a non-negative and finitely non-zero sequence and provides no further information about the behavior of the impulse response of system $\Scal$.

Based on the discussion above, the identification problem with the internal positivity side-information and the extra knowledge $r(\GS)=0$ can be easily formulated as following
\begin{equation}\label{eqn:opt_h_ZSR}
	\begin{array}{cl}
		\minOp_{\vcg\in\Hk} & 
		 \sum_{i=0}^{\nD - 1}
		 \big(
		 y_{t_i}
		 -\Lu{t_i}(\vcg)
		 \big)^2 + \lambda\!\ \|\vcg\|_{\Hk}^2,\\
		\text{s.t.}
		&g_t\ge 0, \quad \forall t=0,1,\ldots,\nG-1,\\
	\end{array} 
\end{equation}
where $\kernel$ is a kernel zero on $\Zbb_+^2\backslash\{0,\ldots,\nG-1\}^2$.
One can show that for $\vcg^* =(g_t^*)_{t=0}^{\infty}$,  the solution of \eqref{eqn:opt_h_ZSR}, and for $t=0,1,\ldots,\nG$, we have 
\begin{equation}\label{eqn:g_rep}
	g_t = 
	\sum_{i=0}^{\nD} x_i \Lu{t_i}(\kernel_t) + 
	\sum_{s=0}^{\nG-1}  x_{\nD+s} \kernel_s(t), 
\end{equation}	
where $\vcx=[x_0,\ldots,x_{\nD+\nG-1}]^\tr$ is the solution of following \emph{convex quadratic program}
\begin{equation}\label{eqn:min_x_ZSR}
	\begin{array}{cl}
		\minOp_{\vcx\in \Rbb^{\nD+\nG}}
		&
		\big\|
		\vcy
		-\begin{bmatrix}\mxO \!\!&\!\!  \mxL\end{bmatrix}
		\vcx
		\big\|^2 
		+
		\lambda
		\vcx^\tr
		\begin{bmatrix}
			\mxO 		\!\!&\!\!	\mxL\\
			\mxL^\tr	\!\!&\!\!	\mxK\\
		\end{bmatrix}
		\vcx,
		\\
		\mathrm{s.t.}
		&
		\begin{bmatrix}
			\mxL^\tr \!\!&\!\! \mxK\\
		\end{bmatrix}
		\vcx\ge 0.
	\end{array} 
\end{equation}	
Also, one can solve \eqref{eqn:opt_h_ZSR} directly for $[g_0,\ldots,g_{\nG-1}]^\tr\in\Rbb^{\nG}$ at the cost of a matrix inversion which can be  both imprecise and computationally demanding, especially when $\nG$ is large.
\section{Numerical Experiments}\label{sec:numerics}
In this section, we provide numerical and experimental examples to verify the efficacy and performance of the proposed method for impulse response identification with internal positivity side-information.
The first example concerns the impact of incorporating internal positivity side-information on the estimation quality and providing a comparative analysis for the proposed identification scheme through a Monte Carlo analysis. The second example concerns the efficacy of the proposed identification scheme on a set of data collected from an experimental heating system.

\subsection{Monte Carlo Experiment}\label{ssec:MC}

%Let ${{\rho}},\beta,\omega\in(0,1)$ such that $\omega$ is an irrational real number.
Consider system $\Scal$ described with the impulse response $\gS=(\gtS_t)_{t=0}^{\infty}$ 
%such that $g_t$  is defined 
defined as
\begin{equation}
	\gtS_t = {{\rho}}^t(1+\beta^t\cos(2\pi\omega t)), \qquad \forall\!\ t\in\Zbb_+,
\end{equation}
where ${{\rho}}$ and $\beta$ are real scalars in $(0,1)$, and $\omega$ is an irrational real number in $(0,1)$.
%where ${{\rho}}=0.98$, $\beta=0.92$, and $\omega =\frac1{10}\pi^2$, which is an irrational real number.
One can easily see that $\gS$ is a non-negative impulse response with transfer function
\begin{equation}
	\GS(z) \!=\! \frac{1}{1\!-\!{{\rho}} z^{-1}}+
	\frac{1-{{\rho}}\beta\cos w \ z^{-1}}
	{1-2{{\rho}} \beta\cos w \ z^{-1}+{{\rho}}^2 \beta^2 z^{-2}}.
\end{equation}
Therefore, according to Corollary \ref{cor:dom_pole_simple}, we know that $\gS$ is internally positive.
%Note that $\gS$ is a modified version of the impulse response $\vcg$ introduced in \eqref{eqn:g_example}, which is not internally positive.

\subsubsection{Simulation Configuration}
In this numerical experiment, we set ${{\rho}}=0.98$, $\beta=0.92$, and $\omega =\frac1{10}\pi^2$.
Using \MATLAB's \texttt{idinput} function, we generate a set of $120$ random binary input signals, 	  each with length of $\nD=200$.
The system is initially at rest. The input signals are applied to the system and the corresponding noiseless output is obtained. 
We consider three  signal-to-noise ratio (SNR)  levels of $10$ dB, $20$ dB, and $30$ dB. 
With respect to each of these SNR levels and each output signal, we generate a zero-mean white Gaussian signal as the additive measurement uncertainty. 
%Following this, the noiseless output signal is contaminated by the corresponding generated noise, and 
The resulting noisy output is measured at time instants $t_i=i$, for $i=0,1,\ldots,199$.
%With respect to each of these SNR level and each output signal, we generate a zero-mean white Gaussian signal as the additive measurement uncertainty added to the respective noiseless output signal as contaminated by the corresponding generated noise.
Accordingly, with respect to each of the mentioned SNR levels, we have $120$ sets of input-output data  as follows4
\begin{equation*}
\begin{split}
\Dscr_i^{\supscrpsm{10\text{dB}}}
&\!=\!
\big\{(u_s^{\supscrpsm{i}},y_s^{\supscrpsm{i,10\text{dB}}})\big|s \!=\! 0,\ldots,199\big\},\ \ i \!=\! 1,\ldots,120,	
\\
\Dscr_i^{\supscrpsm{20\text{dB}}}
&\!=\!
\big\{(u_s^{\supscrpsm{i}},y_s^{\supscrpsm{i,20\text{dB}}})\big|s \!=\! 0,\ldots,199\big\},\ \ i \!=\!1,\ldots,120,	
\\
\Dscr_i^{\supscrpsm{30\text{dB}}}
&\!=\!
\big\{(u_s^{\supscrpsm{i}},y_s^{\supscrpsm{i,30\text{dB}}})\big|s \!=\! 0,\ldots,199\big\},\ \ i \!=\!1,\ldots,120,	
\end{split}
\end{equation*}	
where the superscript refers to the SNR level in the corresponding set of data.
\subsubsection{Comparison Methods}
For estimating the impulse response of system, we utilize the  input-output data sets and the following identification methods:
\begin{itemize}
\item[A.] The first method is based on the subspace approach implemented by \MATLAB's \texttt{n4sid} and using the true order of system $\Scal$. Once we  obtain an initial estimation $\tilde{\vcg}^{\supscrpsm{1}}$, the result is projected on the positive orthant by $\hat{\vcg}^{\supscrpsm{1}}=\max(\tilde{\vcg}^{\supscrpsm{1}},0)$, where the max-operation is performed coordinate-wise. 
\item[B.]In the second method, we employ the least squares approach, and then, similar method A, the projection on the positive orthant is applied to estimate a non-negative finite impulse response. More precisely, the estimation $\hat{\vcg}^{\supscrpsm{2}}$ is obtained by $\hat{\vcg}^{\supscrpsm{2}}=\max(\tilde{\vcg}^{\supscrpsm{2}},0)$, where 
\begin{equation}
\tilde{\vcg}^{\supscrpsm{2}}:=
\argmin_{\vcg\in\Rbb^{\nG}}
\|\mxT_{\vcu}\vcg-\vcy\|^2,	
\end{equation}
and, vector $\vcy$ and  Toeplitz matrix $\mxT_{\vcu}$ are respectively defined in \eqref{eqn:vec_y_b} and Remark \ref{rem:OLK_at_rest}.
\item[C.] This method is based on a constrained least squares approach, where the external positivity feature is enforced by setting the feasible set to the positive orthant. In other words, the impulse response is estimated as  
\begin{equation}
\hat{\vcg}^{\supscrpsm{3}}:=
\argmin_{\vcg\in\Rbb_+^{\nG}}\|\mxT_{\vcu}\vcg-\vcy\|^2.	
\end{equation}
\item[D.]\&\!\!\,\,\,\,\,E. The fourth and fifth methods are respectively similar to the second and third approaches, but with an additional kernel-based regularization term included in the corresponding optimization problems. For method D, one can employ \MATLAB's \texttt{impulseest} function and then taking the positive part of the resulting FIR.
Also, method E essentially corresponds to  \eqref{eqn:opt_h_ZSR}, or equivalently \eqref{eqn:min_x_ZSR}.
\item[F.] The sixth method is a Bayesian FIR estimation scheme for externally positive systems \cite{zheng2021bayesian,zheng2018positive}. This scheme is based on maximum a posteriori estimation, where the employed prior is a maximum entropy distribution with support on positive orthant and kernel-based covariance.
\item[G.] The last method is the scheme proposed in this paper and summarized in Algorithm \ref{alg:PosiID} (see Section \ref{sec:alg}).
\end{itemize}
One should note that in all of the mentioned methods, the resulting impulse responses are non-negative.
In order to have a fair comparison,  in the kernel-based methods D, E, F, and G, the same kernel type \eqref{eqn:kernelDC} is employed.

\subsubsection{Evaluation Metrics and Results}
For evaluating the performances of these methods, we compare the resulting bias-variance trade-offs, as shown in Table \ref{tbl:example_1_bias_variance}. 
\begin{table}[t]
\renewcommand{\arraystretch}{1.3} 
\centerline{
\begin{tabular}{c|lccccccc}
\toprule
\hline
&\!\!\!Method\!\!\!
& A & B & C & D & E & F & G  \\ 
\cline{2-9}
%-------------------
\multirow{3}{*}{\rotatebox{90}{\footnotesize{10 dB}}}
&\textbf{$\!\!\!\text{Bias}$}$(\hat{\vcg})\!\!\!\! $
&2.14 &2.86  &1.60  &0.80 &0.77 &0.62 &\cmb{0.27}\\
&\textbf{$\!\!\!\text{Var}$}$(\hat{\vcg})\!\!\!\! $
&6.54 &34.7  &11.8  &1.77  &1.74  &1.74&\cmb{0.51}\\
&\textbf{$\!\!\!\text{MSE}$}$(\hat{\vcg})\!\!\!\! $
&11.1 &42.9  &14.8  &2.40  &2.34  &2.06&\cmb{0.58}\\
\hline
%-------------------
\multirow{3}{*}{\rotatebox{90}{\footnotesize{20 dB}}}
&\textbf{$\!\!\!\text{Bias}$}$(\hat{\vcg})\!\!\!\! $
&2.01 &0.78 &0.48 &0.47 &0.38 &0.37 &\cmb{0.089}\\
&\textbf{$\!\!\!\text{Var}$}$(\hat{\vcg})\!\!\!\! $
&1.26 &3.90 &1.69 &0.99 &0.85 &0.84 &\cmb{0.114}\\
&\textbf{$\!\!\!\text{MSE}$}$(\hat{\vcg})\!\!\!\! $
&5.32 &4.50 &1.93 &1.22 &0.99 &0.97 &\cmb{0.122}\\
\hline
%-------------------
\multirow{3}{*}{\rotatebox{90}{\footnotesize{30 dB}}}
&\textbf{$\!\!\!\text{Bias}$}$(\hat{\vcg})\!\!\!\! $
&2.01 &0.27 &0.23 &0.22 &0.24 &0.25&\cmb{0.031}\\
&\textbf{$\!\!\!\text{Var}$}$(\hat{\vcg})\!\!\!\! $
&0.17 &0.58 &0.32 &0.33 &0.55 &0.54 &\cmb{0.015}\\
&\textbf{$\!\!\!\text{MSE}$}$(\hat{\vcg})\!\!\!\! $
&4.20 &0.65 &0.37 &0.38 &0.61 &0.60 &\cmb{0.016}\\
\hline
\bottomrule
\end{tabular}
}
\caption{The bias, variance and MSE resulting from the identification methods listed in Section \ref{ssec:MC}. The last column corresponds to the proposed approach which integrates internal positivity.
}
\label{tbl:example_1_bias_variance}
\end{table}
Moreover, for further quantitative comparison of the estimated impulse responses, we use \emph{coefficient of determination}, or \emph{R-squared}, which is denoted by $\mathrm{fit}$ and defined as
\begin{equation}\label{eqn:R2}
	\mathrm{fit}(\hat{\vcg}) = 100 \times \left(1-\frac{\|\hat{\vcg}-\gS\|_2}{\|\gS\|_2}\right),
\end{equation}
where $\hat{\vcg}$ is the estimated impulse response. 
Figure \ref{fig:exm1_box_plot} compares the resulting quality of fit for the different SNR levels.
\begin{figure}[t]
	\begin{center}
		\includegraphics[width=0.35\textwidth]{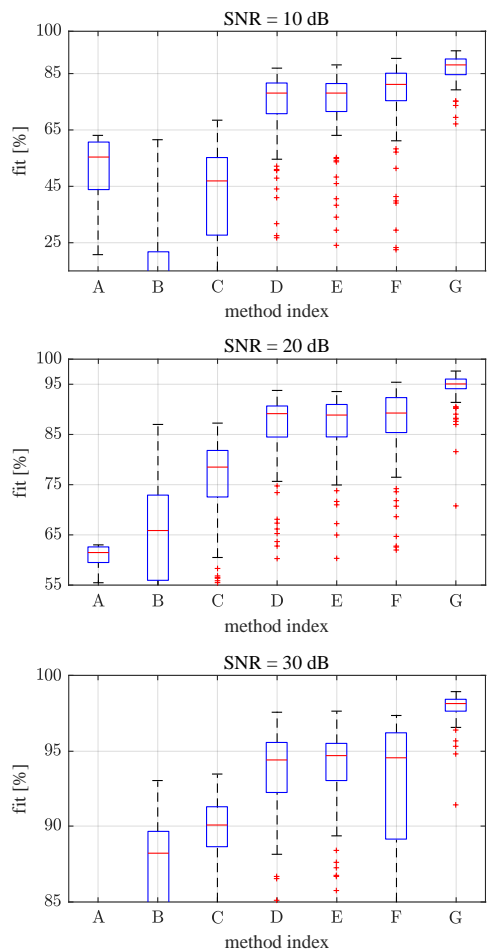}
	\end{center}
	\caption{Box-plots of the R-squared  metric for the estimation results of the different methods and SNR levels.
	All fits for method A are below $85$\%.	
	}
	\label{fig:exm1_box_plot}
\end{figure}

\subsubsection{Discussion}
We have several observations from the Monte Carlo numerical experiment.
As shown in Figure \ref{fig:exm1_box_plot}, all methods are outperformed by the proposed identification scheme, which maximally incorporates the internal positivity side-information. 
Indeed, the side-information helps excluding spurious model candidates and subsequently increases the accuracy of the estimation. The bias-variance results presented in Table \ref{tbl:example_1_bias_variance} confirm this fact.
While each of these methods estimates a non-negative impulse response and partially integrates positivity side-information, we can see that the level of integrated side-information is 
less than that of the proposed method.
The proposed approach incorporates this information in the model
maximally.
Comparing methods B and C, one can see the former one is a two-step procedure where the estimation is performed in the first step and non-negativity of the impulse response is obtained in the second step, while the latter approach is a single-step procedure that considers impulse response non-negativity during the estimation. 
On the other hand, according to the fitting results shown in Figure \ref{fig:exm1_box_plot},  C performs better than method B. For methods D and E, we have similar arguments.
This observation highlights the importance of jointly considering the positivity with the impulse response estimation, as done by the proposed method.
Method A knows the actual order of the system. However, according to the results presented in Figure \ref{fig:exm1_box_plot} and Table \ref{tbl:example_1_bias_variance}, one can see that positivity is a more advantageous and stronger side-information for impulse response estimation, especially when it is incorporated with its maximum strength like in the proposed scheme.  
Finally, one can see that the kernel-based methods D, E, F, and G have better estimation performance comparing to the methods A, B, and C, which is expected \cite{pillonetto2014kernel}.  

%======================================
\begin{figure}[t]
	\begin{center}
		\includegraphics[width=0.485\textwidth]{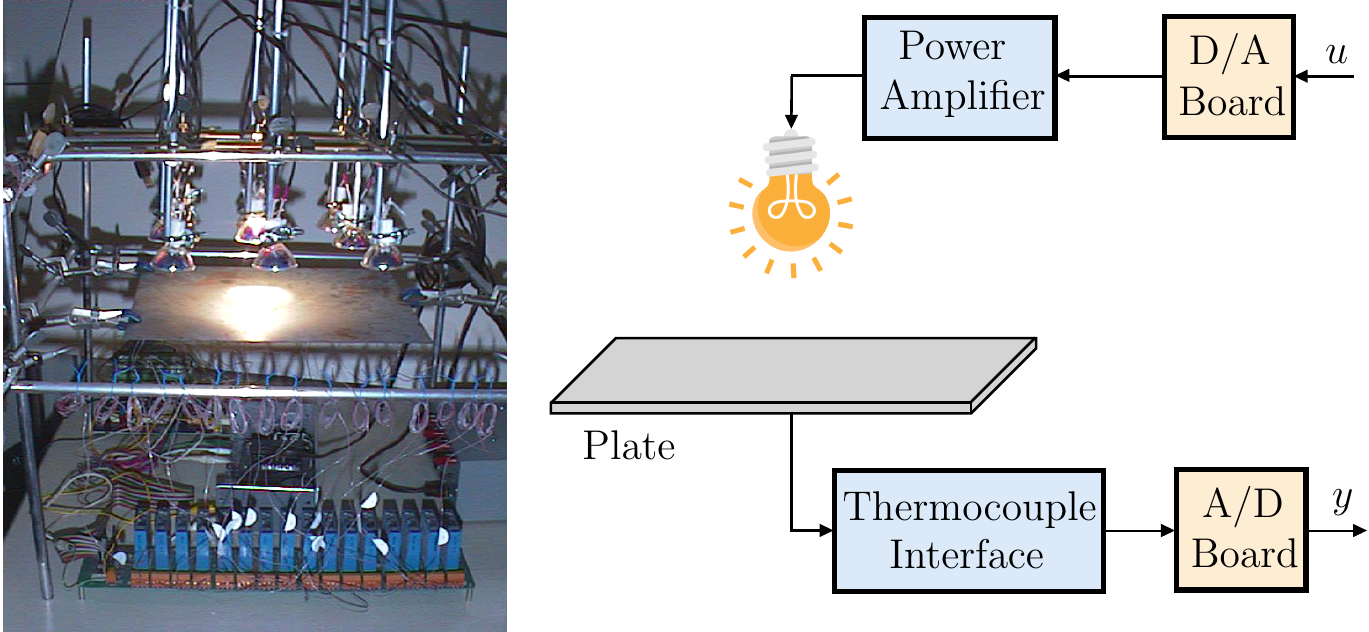}
	\end{center}
	\caption{The experimental system (left) and the corresponding block-diagram schematic (right).}
	\label{fig:exm2_photo_scheme}
\end{figure}

\subsection{Heating System Experiment}
\label{ssec:heating}
In this example, we verify the efficacy of the proposed identification scheme on a set of data collected from an experimental nonlinear heating system \cite{dullerud1996sampled}.

Figure \ref{fig:exm2_photo_scheme} shows the experiment configuration and the corresponding control and measurement schematic. In this experiment, a metal plate is heated up by a $300$-watt Halogen lamp mounted almost $5$cm above the center of the plate. On the other side of the plate, a thermocouple is placed, measuring the temperature. The thermocouple is connected to a computer via an A/D board for sampling and recording the temperature measurements. The lamp is supplied by a thyristor-based power amplifier driven by a D/A board and controlled by a computer. 
The sampling time for control and data acquisition is $T_{\text{s}}=2$s.
Accordingly, we have a nonlinear discrete-time system from the input of D/A board to the output of A/D board, which is intuitively close to a linear positive system.
The nonlinearity of the system is mainly due to the power amplifier \cite{dullerud1996sampled}. 
Moreover, the system is subject to delay  and   disturbances from the ambient.
To make the identification problem more challenging, we disregard the nonlinearity and external disturbances issues. 

The system is actuated by a piece-wise constant input and the output of system is measured for $801$ samples.  The collected input-output data is shown in Figure \ref{fig:exm2_data}, which is also available in the DAISY database \cite{de1997daisy}. 
\begin{figure}[b]
	\begin{center}
		\includegraphics[width=0.485\textwidth]{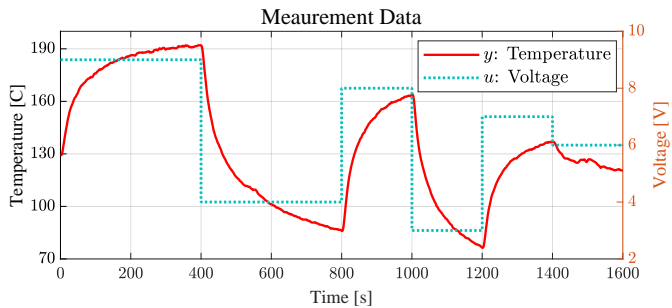}
	\end{center}
	\caption{The figure demonstrates the measurement data corresponding to the experiments.}
	\label{fig:exm2_data}
\end{figure}
We employ this data for identifying the system using the identification methods discussed in  Section \ref{ssec:MC}, and then, compare the results. 
Since the lamp has failed near to the end of experiment and the tail of data has less fidelity, we discard the last $101$ samples ($200$ seconds). 
We split the data into a \emph{training set}, to be used for identification, and, a \emph{test set}, which is the base for comparing the identified models. 
The samples training set contains the first $500$ measurement samples, and the next $200$ data points belong to the test set.
The quality of estimated model is evaluated based on the \emph{R-squared} metric, which assess the prediction precision on the test data, and, defined as follows
\begin{equation}\label{eqn:R2_y}
	\mathrm{fit}(\hat{\vcg}) = 100 \times \left(
	1-
	\Bigg[\frac
	{\sum_{501\le i\le 700}(y_{t_i}-\hat{y}_{t_i})^2}
	{\sum_{501\le i\le 700}(y_{t_i}-\ol{y})^2}
	\Bigg]^{\frac{1}{2}}
	\right),
\end{equation}
where $\hat{y}_s$ denotes the predicted output for time instant $s$ and $\ol{y}$ is the average of output measurements in the test set.
In \cite{dullerud1996sampled}, a Hammerstein model is derived where the static nonlinear block is a sinusoidal map derived by curve-fitting, and the linear block is an ARX model with estimated coefficients.  We denote this method by N+ARX.
In the kernel-based methods D, E, F, and G, we have employed the TC kernel \eqref{eqn:kernelTC}. 
Also, for the methods which are estimating an FIR, we have set $\nG=200$. 
We evaluate the R-squared metric on the test data for this model and the ones estimated by the above methods. 
Table \ref{tbl:example_2_R2_fit} reports fitting results where one can see that the proposed method provides more accurate fit.
\begin{table}[b]
	\renewcommand{\arraystretch}{1.5} 
	\centerline{
		\begin{tabular}{lcccccccc}
			\toprule
			\hline
			\!\!\textbf{Method}\!\!
			&N+ARX & A & B & C & D & E & F & G \\
			\cline{1-9}
			%-------------------
			\!\!\textbf{Fit [\%]}\!\!
			&48.5 &83.4 &81.2 &80.4 &81.7 &89.7 &85.8 &\textbf{92.2}\\
			\hline
			\bottomrule
		\end{tabular}
	}
	\caption{R-squared metric evaluations of test data for different identification methods.}
	\label{tbl:example_2_R2_fit}
\end{table}
This is also confirmed by Figure \ref{fig:exm2_test_data} which compares the test data with the output signals predicted by methods N+ARX, E, F, and G.
It seems that for obtaining models with more accurate predictions, one should identify a nonlinear dynamics.
\begin{figure}[t]
	\begin{center}
		\includegraphics[width=0.43\textwidth]{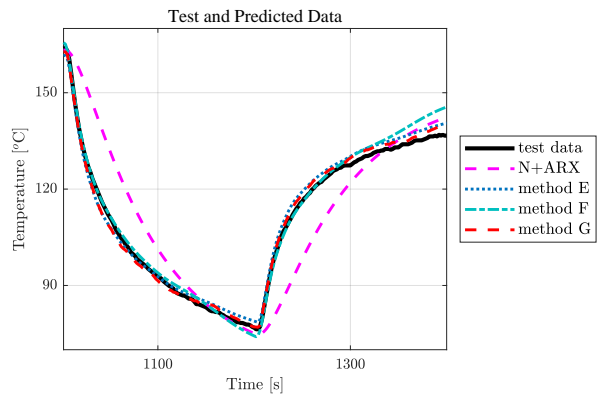}
	\end{center}
	\caption{The figure compares the test measurement data and the predicted values.}
	\label{fig:exm2_test_data}
\end{figure}

\section{Conclusion}\label{sec:conclusion}
In this paper, we have considered the problem of impulse response identification when side-information is available on the internal positivity of the system. We have employed the realization theory of positive systems to introduce the identification scheme in which the positivity side-information is integrated into the identified model. The resulting formulation is in the form of a constrained optimization over a reproducing kernel Hilbert space endowed with a stable kernel where the constraints are suitably designed to incorporate the positivity side-information in the solution. We have borrowed techniques and tools from optimization theory in normed spaces to derive an equivalent finite-dimensional convex quadratic program. This gives a computationally tractable identification scheme that incorporates the internal positivity side-information and has the well-known advantageous features of kernel-based methods. We have performed a Monte Carlo numerical experiment to compare the performance of proposed approach with FIR identification methods considering only the external positivity feature. This has empirically studied the impact of integrating positivity side-information in terms of estimation bias, variance, and mean squared error. The results show that the proposed identification approach, which integrates internal positivity, outperforms the schemes considering only external positivity. We have observed that incorporating internal positivity side-information reduces the estimation bias and variance. This observation is expected since FIR external positivity implies the weakest form of information about an internally positive system and fails to exploit the complete information of internal positivity. We have further evaluated the effectiveness of the proposed identification scheme using data from a heating system experiment.
\appendix
\section{Appendix} 
%=====================================
%=====================================
%=====================================
%=====================================
\subsection{Proof of Corollary \ref{cor:dom_pole_simple}} \label{sec:pf_dom_pole_simple}
Let $\gS\in\Pscr_{\!{{\rho}}}$	 and $\GS$ be the corresponding transfer function.
We know that $\gS$ is a non-negative impulse response which satisfies \eqref{eqn:rank_Hankel_gS}.
Accordingly, due to Theorem \ref{thm:kronecker}, there exist $\nx\in\Nbb$,  $\mxA\in\Rbb^{\nx\times \nx}$, $\vcb\in\Rbb^{\nx}$, $\vcc\in\Rbb^{1\times \nx}$ and $d\in\Rbb$ such that $\GS(z)=\vcc(z\eye-\mxA)^{-1}\vcb+d$.
Note that since $d = \gtS_0$, we know that $d\ge 0$.
Consider impulse response $\vcg=(g_t)_{t=0}^{\infty}$ where $g_0=0$ and $g_t=\gtS_t$, for $t\ge 1$, and let $G$ be the transfer function corresponding to $\vcg$. 
One can easily see that $\vcg$ is non-negative, and also, we have $G(z) = \GS(z)-d = \vcc(z\eye-\mxA)^{-1}\vcb$
which is a strictly proper rational transfer function. 
Since there exists ${a}>0$ such that $\lim_{t\to \infty} {{\rho}}^{-t}\gtS_t = {a}$, we know that $\lim_{t\to \infty} {{\rho}}^{-t}(g_t-{a}{{\rho}}^{t}) = 0$.
Therefore, the spectral radius of the rational transfer function
$G(z)-{a}(1-{{\rho}} z^{-1})^{-1}$ is less than ${{\rho}}$, and consequently, ${{\rho}}$ is the unique dominant pole of $G(z)$. 
Hence, according to Theorem \ref{thm:pos_real}, $G(z)$ admits a positive realization, i.e., there exist 
$\Mx\in\Nbb$,  $\mxA_+\in\Rbb_+^{\Mx\times\Mx}$, $\vcb_+\in\Rbb_+^{\Mx}$ and $\vcc_+\in\Rbb_+^{1\times\Mx}$ such that  $G(z)=\vcc_+(z\eye-\mxA_+)^{-1}\vcb_+$.
Therefore, we have $\GS(z)=\vcc_+(z\eye-\mxA_+)^{-1}\vcb_+ + d$ which says that $\GS$ has a positive realization due to $d\ge 0$.
Accordingly, $\gS$ is internally positive and $\gS\in\Pscr$, i.e., $\Pscr_{\!{{\rho}}}\subset\Pscr$.
From this result and the definition of $\Pscr_{\!(0,1)}$, the last claim is directly implied.
%=====================================
%=====================================
%=====================================
%=====================================
\subsection{Proof of Theorem \ref{thm:denseness_IP_EP}} \label{sec:pf_denseness_IP_EP}
Let $\varepsilon>0$ and $\vcg\in\Pscr$ with transfer function $G$.
Since $\Pscr\subset\ellone$, each element of $\Pscr$ is a stable impulse response, and therefore, we have $r(G)<1$. 
Let ${{\rho}}$ and ${a}$ be positive real scalars such that ${{\rho}}\in (r(G),1)$ and ${a}<(1-{{\rho}})\varepsilon$.
Consider an impulse response 
$\geps=(\gteps_t)_{t=0}^\infty$ with transfer function $\Geps$ where, for any $t\in\Zbb_+$, $g_t$ is defined as
$\gteps_t = g_t + {a}{{\rho}}^t$.
For any $t\in\Zbb_+$, one has  $g_t\ge 0$, and since ${a},{{\rho}}>0$, it follows that $\gteps_t\ge 0$, i.e., $\geps$ is a non-negative impulse response. 
Moreover, one can easily see that 
\begin{equation}\label{eqn:pf_01}
	\Geps(z) = G(z) + \frac{{a}}{1-{{\rho}} z^{-1}}.	
\end{equation}
Since $\vcg$ belongs to $\Pscr$, we know that $G$ is a rational function.	
Accordingly, due to \eqref{eqn:pf_01} and Theorem \ref{thm:kronecker}, it follows that \eqref{eqn:rank_Hankel_gS} holds for $\geps$. Moreover, ${{\rho}}>r(G)$ implies that 
$\lim_{t\to\infty}{{\rho}}^{-t}g_t=0$. Subsequently, one has $\lim_{t\to\infty}{{\rho}}^{-t}\gteps_t={a}$, and therefore, $\geps\in\Pscr_{\!{{\rho}}}\subset\Pscr_{\!(0,1)}$.
For the case of $p=\infty$, we have 
\begin{equation*}
	\|\vcg-\geps\|_{\infty}
	=
	\sup_{t\in\Zbb_+}{a}{{\rho}}^t
	=
	{a}<(1-{{\rho}})\varepsilon<\varepsilon.	
\end{equation*}
On the other hand, for $p\in[1,\infty)$, one can see that
\begin{equation*}
	\|\vcg-\geps\|_p 
	=
	{a}\Big(\sum_{t=0}^{\infty}{{\rho}}^{pt}\Big)^{\frac1p} 
	\!\!= 
	\frac{{a}}{(1-{{\rho}}^p)^{\frac{1}{p}}}
	<
	\frac{(1-{{\rho}})\varepsilon}{(1-{{\rho}}^p)^{\frac{1}{p}}}
	\le 
	\varepsilon,	
\end{equation*}
where the last inequality is due to  ${{\rho}}^p+(1-{{\rho}})^p\le 1$ which holds for any ${{\rho}}\in(0,1)$ and $p\in[1,\infty)$.
\qed
%=====================================
%=====================================
%=====================================
%=====================================
\subsection{Proof of Theorem \ref{thm:kernel_TC_DC_SS_are_FHR}} \label{sec:pf_kernel_TC_DC_SS_are_FHR}
Let $c_{00}$ be the space of impulse responses which are finitely non-zero, i.e., for each $\vcg = (g_s)_{s=0}^{\infty}$  there exists $\nG\in\Zbb_+$ such that $g_s=0$, for all $s\ge \nG$.
Note that for such $\vcg\in c_{00}$, we have $\Hankel(\vcg)\vcv\in\Rbb^{\nG}\times\{\zero\}$, for any $\vcv \in\ellinfty$.
This implies that $\rank\big(\Hankel(\vcg)\big)\le \nG<\infty$, and, therefore, $\vcg$ is a finite Hankel rank impulse response.
Let $\kernel:\Zbb_+\times\Zbb_+\to\Rbb$ be a finite support kernel, i.e., there exists $n_{\kernel}\in\Zbb_+$ such that $\kernel(s,t)=0$ when $s\ge n_{\kernel}$ or $t\ge n_{\kernel}$. One can easily see that $\kernel_t\in c_{00}$, for any $t\in\Zbb_+$. 
Therefore, we have $\rank\big(\Hankel(\kernel_t)\big)\le n_{\kernel}<\infty$, and consequently, $\kernel$ is a finite Hankel rank kernel.

Let $\vcf_{\beta}=(f_t)_{t=0}^\infty$ be an impulse response defined as $f_t=  \beta^t$, for $t\in\Zbb_+$. One can easily see that
\begin{equation}
	\{\Hankel(\vcf_{\beta})\vcv\ \! \big| \ \!  \vcv\in\ellinfty\Big\}
	=
	\{\vcf_{\beta}v\ \! \big| \ \! v\in\Rbb\}.
\end{equation}
Therefore, we have $\rank\big(\Hankel(\vcf_{\beta})\big)=1$, and consequently, $\vcf_{\beta}$ is a finite Hankel rank impulse response.
For the TC kernel introduced in \eqref{eqn:kernelTC} and $t\in\Zbb_+$, consider the section of $\kernelTC$ at $t$, i.e., $(\kernel_{\TC,t})_{s=0}^{\infty}$. Note that we have $\kernel_{\TC,t}=\vcf_{\beta}+\vcg$, where impulse response $\vcg=(g_s)_{s=0}^{\infty}$ is defined by $g_s= \beta^{\max(s,t)}-\beta^s$, for $s\in\Zbb_+$. 
One can easily see that $g_s=0$, for all $s\ge t$, which implies that $\vcg\in c_{00}$, and subsequently, $\rank\big(\Hankel(\vcg)\big)<\infty$.
Accordingly, since rank of $\Hankel(\vcf_{\beta})\big)$ is finite, it follows that 
$\rank\big(\Hankel(\kernel_{\TC,t})\big)<\infty$, and consequently, $\kernelTC$ is a finite Hankel rank kernel. 
Based on similar arguments, one can show same result for $\kernelDC$ and  $\kernelSS$, concluding the proof.
\qed
%=====================================
%=====================================
%=====================================
%=====================================
%________________________________________
\bibliographystyle{IEEEtran}
\bibliography{mainbib_pos}
\end{document}